 \theoremstyle{plain}
  \newtheorem{thm}{Theorem}
  \newtheorem{lemma}[thm]{Lemma}
  \newtheorem{remark}[thm]{Remark}
  \newtheorem*{thm*}{Theorem}
  \newtheorem*{prop*}{Proposition}
  \newtheorem*{lemma*}{Lemma}
  \newtheorem*{cor*}{Corollary}
  \newtheorem*{remark*}{Remark}
\newenvironment{customlemma}[1]
  {\innercustomthm}
  {\endinnercustomthm}
\theoremstyle{definition}
\newtheorem{defn}[thm]{Definition}
 \newtheorem*{conj*}{Conjecture}
\begin{document}

\title{Non-signalling parallel repetition using de Finetti reductions}

\author{Rotem Arnon-Friedman}
\affiliation {Institute for Theoretical Physics, ETH-Z\"urich, CH-8093, Z\"urich, Switzerland}
\author{Renato Renner}
\affiliation {Institute for Theoretical Physics, ETH-Z\"urich, CH-8093, Z\"urich, Switzerland}
\author{Thomas Vidick}
\affiliation{Department of Computing and Mathematical Sciences, California Institute of Technology, Pasadena, CA, USA}

\begin{abstract}

In the context of multiplayer games, the parallel repetition problem can be phrased as follows: given a game $G$ with optimal winning probability $1-\alpha$ and its repeated version $G^n$ (in which $n$ games are played together, in parallel), can the players use strategies that are substantially better than ones in which each game is played independently? This question is relevant in physics for the study of correlations and plays an important role in computer science in the context of complexity and cryptography. In this work the case of multiplayer non-signalling games is considered, i.e., the only restriction on the players is that they are not allowed to communicate during the game. For complete-support games (games where all possible combinations of questions have non-zero probability to be asked) with any number of players we prove a threshold theorem stating that the probability that non-signalling players win more than a fraction $1-\alpha+\beta$ of the $n$ games is exponentially small in $n\beta^2$, for every $0\leq \beta \leq \alpha$. For games with incomplete support we derive a similar statement, for a slightly modified form of repetition. 
The result is proved using a new technique, based on a recent de Finetti theorem, which allows us to avoid central technical difficulties that arise in standard proofs of parallel repetition theorems.

\end{abstract}

\maketitle

\section{Introduction}\label{sec:intro}

\subsubsection{Multiplayer games and parallel repetition}

Multiplayer games are relevant in many areas of both theoretical physics and theoretical computer science. In physics, multiplayer games give an intuitive way to study the role and implications of entanglement and correlations, e.g., in the setting of Bell inequalities \cite{bell1964einstein,CHSH}. In computer science such games arise in the context of complexity theory~\cite{FeiGolLovSafSze96JACM,FeiLov92STOC,Dinur07pcp} and cryptography~\cite{BenGolKilWig88STOC,KalaiRR13delegation}. 

In a game $G$, a referee asks each of the cooperating players a question chosen according to a given probability distribution. The players then need to supply answers which fulfil a pre-defined requirement according to which the referee accepts or rejects the answers. In order to do so, they can agree on a strategy beforehand, but once the game begins communication between the players is not allowed. If the referee accepts their answers the players win. The goal of the players is, of course, to maximise their winning probability in the game.  

According to the field of interest, one can analyse any game under different restrictions on the players (in addition to not being allowed to communicate). In classical computer science the players are usually assumed to have only classical resources, or strategies. That is, they can use only local operations and shared randomness. In contrast, one can also consider quantum strategies: before the game starts the players create a multipartite quantum state that can be shared among them. When the game begins each player locally measures their own part of the state and base the answer on their measurement result. It is well-known that sharing quantum entanglement can significantly increase the winning probability in some games \cite{CHSH,aravind2002magic}.  

Another, more general, type of strategies are those where the players can use any type of correlations that do not allow them to communicate, also called non-signalling correlations. That is, the \emph{only} restriction on the players is that they are not allowed to communicate (as will be defined formally later). 

Considering the non-signalling case is interesting for several reasons. A first reason is to minimise the set of assumptions to the mere necessary one. Indeed, if the players are allowed to communicate by sending signals they can win any game. Minimising the set of assumptions can be useful in cryptography when one wishes to get the strongest result possible, i.e., one where the attack strategies of malicious parties are only restricted minimally (as in \cite{hanggi2009quantum,masanes2009universally,masanes2014full} for example). In theoretical physics, non-signalling correlations enable the study of generalised theories possibly beyond quantum theory.  It is also important to mention that, due to their linearity, the non-signalling constraints are often easier to analyse than the quantum or the classical constraints. Therefore, even if additional constraints hold, focusing on the non-signalling ones serves as a way to get first insights into a given problem. 

One of the most interesting questions regarding multiplayer games is the question of parallel repetition. Given a game $G$ with optimal winning probability $1-\alpha$ (using either classical, quantum, or non-signalling strategies), we are interested in analysing the winning probability in the repeated game $G^n$. In $G^n$ the referee gives the players $n$ independent tuples of questions at once, to which the players should reply. The most natural winning criterion is that the players answer a certain fraction $1-\alpha+\beta$ of the $n$ game instances correctly, and one can then ask what is the probability that the players succeed as a function of $\beta$, as the number of repetitions $n$ increases. 

The players can always use the trivial independent and identically distributed (i.i.d.\@) strategy: they just answer each of the $n$ questions independently according to the optimal one-game strategy. In this case the fraction of successful answers is highly concentrated around $1-\alpha$ (alternatively, the probability to win all games simultaneously is $\left(1-\alpha\right)^n$). But can they do significantly better? 

If correlated strategies for $G^n$ are not substantially better than independent ones, even in an asymptotic manner, we learn that ``one cannot fight independence with correlations''. As long as the questions are asked, and the answers are verified, in an independent way, creating correlations between the different answers using a correlated strategy cannot help much. The resulting threshold theorem can then be used, for example, when considering a series of Bell violation experiments performed in parallel, or for hardness amplification in complexity theory and security amplification in classical, quantum and non-signalling cryptography. 

\subsubsection{Related work}

Raz was the first to show in \cite{raz1998parallel} an exponential parallel repetition theorem for classical two-player games. That is, he showed that if the classical optimal winning probability in a game $G$ is smaller than 1, then the probability to win all the games in the repeated game $G^n$, using a classical strategy, decreases exponentially with the number of repetitions $n$. Raz's result was then improved and adapted to the non-signalling case by Holenstein \cite{holenstein2007parallel}. Another improvement was made by Rao in \cite{rao2011parallel}, where a threshold theorem for the classical two-player case was proven: Rao showed that the probability to win more than a fraction $1-\alpha+\beta$ of the games for any $\beta>0$ is exponentially small in the number of repetitions. 

Following the same proof technique as \cite{raz1998parallel,holenstein2007parallel,rao2011parallel},  Buhrman, Fehr and Schaffner recently proved in \cite{buhrman2013parallel} a threshold theorem for multiplayer non-signalling complete-support games, i.e., all possible combinations of questions to the players must have some non-zero probability of being asked. Their threshold theorem was the first result where more than two players were considered. 

The question of parallel repetition in the quantum case is less well understood than its classical and non-signalling versions. All currently known results deal with limited classes of two-player games and no general proof is known. The latest results are given in \cite{jain2013parallel,dinur2013parallel,chailloux2014parallel}, where different assumptions on the probability distribution over the questions of the game are considered. 

\subsubsection{de Finetti theorems in the context of parallel repetition}

The main difficulty in proving a parallel repetition result comes from the, almost arbitrary, correlations between the different questions-answers pairs in the players' strategy for $G^n$: as the players get all the $n$ tuples of questions together they can answer them in a correlated way. In most of the known parallel repetition results (e.g., \cite{raz1998parallel,holenstein2007parallel,rao2011parallel,buhrman2013parallel}) the main idea of the proof is to bound the winning probability for some of the questions, \emph{conditioned} on winning the previous questions. However, as the strategy itself introduces correlations between the different tuples of questions, a large amount of technical work is devoted to dealing with the effect of conditioning on the event of winning the previous questions.

When considering the correlations in a strategy for the repeated game there is one type of symmetry which one can take advantage of, but which is usually virtually ignored -- permutation invariance.  As the game $G^n$ itself is invariant under joint permutation of the tuples of questions and answers, we can restrict our attention to permutation-invariant strategies without loss of generality. Permutation-invariant strategies are strategies which are indifferent to the ordering of the questions given by the referee. That is, the probability of answering a specific set of tuples of questions correctly does not depend on the ordering of the tuples. 

Once we restrict our attention to permutation-invariant strategies, de Finetti theorems seem like a natural tool to leverage for the analysis.
A de Finetti theorem is any type of theorem which relates permutation-invariant states\footnote{Depending on the context, a state can be a probability distribution, a quantum density operator or a conditional probability distribution.} to a more structured state, having the form of a convex combination of i.i.d.\ states, called a de Finetti state. The specific relation between the states depends on the type of theorem. The first de Finetti theorem~\cite{deFinetti69} established that the collection of infinitely exchangeable sequences, in other words those distributions on infinite strings that are invariant under all permutations, exactly coincides with the collection of all convex combinations of i.i.d distributions. Subsequent results establish quantitative bounds on the distance of any permutation-invariant state, or subsystems thereof, from some de Finetti state or an approximation of a de Finetti state~\cite{DiaconisF80finite,raggio1989quantum,caves2002unknown,renner2007symmetry,christandl2007one,brandao2013quantum}. A different form of statement, also called a de Finetti reduction, relates any permutation-invariant state to an explicit de Finetti state by an inequality relation~\cite{christandl2009postselection,arnon2013finetti}. The common feature of all de Finetti theorems is that they enable a substantially simplified analysis of information-processing tasks by exploiting permutation invariance symmetry. Indeed, quantum de Finetti theorems play a significant role in many quantum information problems such as quantum cryptography \cite{christandl2009postselection,leverrier2014composable}, tomography~\cite{christandl2012reliable}, channel capacities~\cite{berta2011reverse} and complexity \cite{brandao2013quantum}.

In the context of games and strategies, de Finetti theorems suggest one may be able to reduce the analysis of general permutation-invariant strategies to the analysis of a de Finetti strategy, i.e., a convex combination of i.i.d.\ strategies. As the behaviour of i.i.d.\ strategies is trivial under parallel repetition, a reduction of this type could simplify the analysis of parallel repetition theorems and threshold theorems. 

Yet, de Finetti theorems were not used in the past in this context, and for a good reason. The many versions of quantum de Finetti theorems (e.g., \cite{renner2007symmetry,christandl2009postselection}) could not have been used as they depend on the dimension of the underlying quantum strategies, while in the quantum multiplayer game setting one does not wish to restrict the dimension. Non-signalling de Finetti theorems, as in~\cite{barrett2009finetti,christandl2009finite}, were also not applicable for non-signalling parallel repetition theorems, as they restrict almost completely the type of allowed correlations in the strategies for the repeated game by assuming very strict non-signalling constraints between the different repetitions, i.e., between the different questions-answers pairs.

In this work we use the recent de Finetti theorem of~\cite{arnon2013finetti}, which imposes no assumptions at all regarding the structure of the strategies (apart from permutation invariance), and is therefore applicable in the context of parallel repetition. This allows us to devise a proof technique which is completely different from the known proofs of parallel repetition results. In particular, at least in the non-signalling case presented here, the conditioning problem described at the beginning of this section disappears completely and the number of players does not play a role in the proof structure.

\subsection{Results and contributions}

The main result presented in this work is a threshold theorem (also called a concentration bound) for the $n$-fold repetition of any $m$-player complete-support game, in which the players are allowed to share any non-signalling strategy. A game is said to have complete-support if all possible combinations of questions to the players have some non-zero probability of being asked. Denote by $w_{\mathrm{ns}}$ the optimal non-signalling winning probability in a game $G$. 
We prove the following theorem.

\begin{thm}\label{thm:final_threshold_theorem}
	For any complete-support game $G$ with $w_{\mathrm{ns}} = 1-\alpha$ there exist $\mathcal{C}_1(G,n)$ and $\mathcal{C}_2(G)$, where $\mathcal{C}_1(G,n)$ is polynomial in the number of repetitions $n$, such that for every $0<\beta\leq\alpha$ and large enough $n$, the probability that non-signalling players win more than a fraction $1-\alpha+\beta$ of the $n$ questions in the repeated game $G^{n}$ is at most $\mathcal{C}_1(G,n)\exp \left[ - \mathcal{C}_2(G) n \beta^2 \right]$.
\end{thm}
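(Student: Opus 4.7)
My plan is to exploit the permutation symmetry of $G^n$, invoke the non-signalling de Finetti reduction of~\cite{arnon2013finetti} to replace an arbitrary permutation-invariant strategy by a mixture of i.i.d.\ per-round strategies (up to a polynomial factor), and then conclude with a Hoeffding-type concentration bound applied to each i.i.d.\ component.

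First I would symmetrise. The winning condition of $G^n$ and the input distribution $\pi^{\otimes n}$ are both invariant under simultaneous permutations of the $n$ question/answer coordinates, so for any non-signalling strategy $P_{A|X}$ for $G^n$ its symmetrisation
\begin{equation*}
  \bar P_{A|X}(a\mid x) \;=\; \frac{1}{n!}\sum_{\pi\in S_n} P_{A|X}\bigl(\pi(a)\bigm| \pi(x)\bigr)
\end{equation*}
remains a valid non-signalling strategy, attains the same probability of the event $W=\{\text{at least }(1-\alpha+\beta)n\text{ games are won}\}$, and is permutation invariant. One may therefore restrict to permutation-invariant non-signalling strategies without loss of generality.

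Next I would apply the de Finetti reduction of~\cite{arnon2013finetti}: exploiting that $G$ has complete support, I would obtain a single ``de Finetti'' non-signalling strategy $\tau_{A|X}=\int \sigma_{A|X}^{\otimes n}\,d\mu(\sigma)$, where $\mu$ is a probability measure over \emph{single-round} non-signalling strategies $\sigma$, together with a prefactor $\mathcal{C}_1(G,n)$ polynomial in $n$, such that $\bar P_{A|X}(a\mid x) \leq \mathcal{C}_1(G,n)\,\tau_{A|X}(a\mid x)$ for all $a,x$. Pointwise domination of CPDs transfers directly to any event, so
\begin{equation*}
  \Pr\nolimits_{\bar P}[W] \;\leq\; \mathcal{C}_1(G,n)\,\Pr\nolimits_{\tau}[W] \;=\; \mathcal{C}_1(G,n)\int \Pr\nolimits_{\sigma^{\otimes n}}[W]\,d\mu(\sigma).
\end{equation*}
Each $\sigma$ in the support of $\mu$ is by construction a valid single-round non-signalling strategy, hence has one-round winning probability at most $w_{\mathrm{ns}}=1-\alpha$. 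Under $\sigma^{\otimes n}$ the indicators of winning the $i$-th game are therefore independent Bernoullis of mean at most $1-\alpha$, and Hoeffding's inequality yields $\Pr_{\sigma^{\otimes n}}[W]\leq \exp(-2n\beta^2)$ uniformly in $\sigma$. Combining the two inequalities gives the claimed bound, with $\mathcal{C}_2(G)$ essentially equal to $2$ (possibly shaved to absorb factors coming from $\mathcal{C}_1$).

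The step I expect to be the main obstacle is exactly the one that motivates the paper: earlier de Finetti statements are unusable here, because quantum versions (e.g.~\cite{renner2007symmetry,christandl2009postselection}) depend on the Hilbert-space dimension and non-signalling versions (e.g.~\cite{barrett2009finetti,christandl2009finite}) outlaw precisely the cross-repetition correlations we are trying to control. The crux is that the reduction of~\cite{arnon2013finetti} must produce a mixture of \emph{per-round} non-signalling strategies (so that the single-game optimum $1-\alpha$ can legitimately be invoked for every $\sigma$ in the support of $\mu$), while placing no restriction on the cross-repetition correlations of $\bar P$. A secondary technical issue is the role of the complete-support hypothesis: it is what allows the pointwise inequality $\bar P\leq \mathcal{C}_1(G,n)\tau$ on CPDs to dominate the winning probability under the actual input distribution $\pi^{\otimes n}$, since the latter charges every question tuple; in the incomplete-support case one would need to modify the repetition scheme, as the authors note.
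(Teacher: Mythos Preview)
Your proposal has a genuine gap at precisely the point you flag as ``the main obstacle'': the de Finetti reduction of~\cite{arnon2013finetti} does \emph{not} produce a mixture over single-round non-signalling strategies, and it provably cannot. The reduction gives a pointwise inequality $\bar P_{\vec A|\vec Q}\le c\,\tau_{\vec A|\vec Q}$ where $\tau=\int \mathrm{O}^{\otimes n}\,d\mathrm{O}$ ranges over \emph{all} conditional distributions $\mathrm{O}_{A|Q}$, signalling ones included; indeed the explicit $\tau$ constructed in~\cite{arnon2013finetti} is itself signalling. The paper makes this point explicitly (see the footnote after Lemma~\ref{lem:deFinetti_reduction} and Section~\ref{sec:what_we_learn}): an inequality of the form you want, with $\mu$ supported on non-signalling $\sigma$, would immediately yield strong parallel repetition for every game and is ruled out by the counterexamples of~\cite{kempe2010no}. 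Hence your Hoeffding step ``each $\sigma$ has one-round winning probability at most $1-\alpha$'' is simply false for the components the reduction actually delivers.

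What the paper does instead is the substantive content of the result. It introduces a quantitative signalling measure $\mathrm{Sig}_{(i,b^{\bar i},s^i,s^{\bar i})}$ and associated hypothetical tests $\mathcal{T}$ on half the data; the de Finetti reduction is applied not to the winning event but to events of the form ``the test accepts yet the empirical strategy on the other half is not detectably signalling'' (Lemma~\ref{lem:reduction}). A guessing-game argument (the weak Lemma~\ref{lem:weak_condition}) shows that if the empirical strategy $\mathrm{O}^{\mathrm{EST2}}$ were significantly signalling conditioned on the test accepting, the original $\mathrm{P}$ would itself be signalling. These are combined (strong Lemma~\ref{lem:strong_condition}) to show $\mathrm{O}^{\mathrm{EST2}}$ is, with high probability, only $O(\zeta)$-signalling in every direction; a linear-programming sensitivity analysis (Lemma~\ref{lem:distance_transformation}) then bounds its winning probability by $1-\alpha+\kappa(\zeta+2\epsilon)$. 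This is where the complete-support hypothesis actually enters: it is needed so that the weighted non-signalling constraints in the LP~\eqref{eq:linear_non_relaxed} are equivalent to the unweighted ones, not (as you suggest) to make the pointwise CPD inequality control the winning event.
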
 

That is, for sufficiently many repetitions the probability to win more than a fraction $1-\alpha+\beta$ of the $n$ games is exponentially small. 
The constant $\mathcal{C}_1(G,n)$ is such that $\mathcal{C}_1(G,n) < 10 m |\mathcal{Q}||\mathcal{A}| \left(n+1\right)^{2(|\mathcal{Q}||\mathcal{A}|-1 )}$ where $m$ is the number of players, and $|\mathcal{Q}|$ and $|\mathcal{A}|$ are the number of possible questions and answers, respectively, in $G$. $\mathcal{C}_2(G)$ is a finite constant that can be computed by solving the polynomial-size linear program given in Equation \eqref{eq:dual}. 
A sufficient condition on the number of repetitions for the bound in the theorem to hold is $n = \Omega\big(|\mathcal{Q}||\mathcal{A}|\frac{\mathcal{C}_2}{\beta^2} \ln^2(|\mathcal{Q}||\mathcal{A}|\frac{\mathcal{C}_2}{\beta})\big)$. We refer to Equation \eqref{eq:number_of_repet_bound}, and the choice of constants made around Equation \eqref{eq:choice_of_constants}, for a more precise bound.

There are two main differences between the exponential bound given in the threshold theorem of \cite{buhrman2013parallel} (Theorem~15 therein) and the bound we give here. First, while our bound suffers from the polynomial dependency on the number of repetitions in $\mathcal{C}_1(G,n)$ (which is inherent to the use of a de Finetti reduction), there is no such dependency in \cite{buhrman2013parallel}. As the number of repetitions goes to infinity, however, the exponential factor quickly dominates. Both our constant $\mathcal{C}_2(G)$ and the constant $\mu$ in Theorem~15 from~\cite{buhrman2013parallel} depend on the size of the game through a certain linear program (see the proof of Lemma~\ref{lem:distance_transformation} and the discussion that follows it in this paper, and the proof of Proposition~18 in~\cite{buhrman2013parallel}), making a direct comparison difficult. Another point of comparison between the bounds is the dependency on $\beta$: we obtain the optimal (as follows from optimal formulations of the Chernoff bound) dependency $\beta^2$, as compared to $\beta^4$ in~\cite{buhrman2013parallel}. As far as we are aware, this is the first threshold theorem where optimal dependency on $\beta$ is achieved (see also~\cite{rao2011parallel}).  

Theorem \ref{thm:final_threshold_theorem} applies to complete-support games. The result is extended in two different directions. First, based on ideas from \cite{ito2010polynomial}, we show in Appendix \ref{app:two_player_extension} that when considering two-player games \emph{without} complete-support Theorem \ref{thm:final_threshold_theorem} still holds. Second, for general multiplayer games we consider in Appendix \ref{app:general_extension} a small modification of the repetition procedure. Instead of the usual parallel repetition procedure, in which $n$ tuples of questions are chosen according to the game distribution $Q$, we change the distribution of questions in the repeated game by sometimes (with small positive probability $\eta$) asking the players a tuple of questions $q$ which does not appear in the original game $G$. We call such questions ``dummy questions''; for these questions any answer from the players is accepted. The remaining questions, for which $Q(q)>0$, are called the ``real questions'' and the modified game is denoted by $\tilde{G}^{n}$. We prove the following threshold theorem:
\begin{thm}\label{thm:mod_threshold_theorem}
	For any game $G$ with $w_{\mathrm{ns}} = 1-\alpha$ there exist $\mathcal{C}_1(G,n)$ and $\mathcal{C}_2(G)$, where $\mathcal{C}_1(G,n)$ is polynomial in the number of repetitions $n$, such that for every $0<\beta\leq\alpha$ and large enough $n$, the probability that non-signalling players win more than a fraction $1-\alpha+\beta$ of the \emph{real questions} in the modified repeated game $\tilde{G}^{n}$ is at most $\mathcal{C}_1(G,n)\exp \left[ - \mathcal{C}_2(G) n \beta^2 \right]$.
\end{thm}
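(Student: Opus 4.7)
The plan is to reduce Theorem~\ref{thm:mod_threshold_theorem} directly to Theorem~\ref{thm:final_threshold_theorem} by viewing the modified single-round game $\tilde G$ (whose $n$-fold repetition is $\tilde G^n$) as a complete-support game to which the already-established threshold theorem applies verbatim. Concretely, I would define $\tilde G$ by assigning its question distribution $\tilde Q$ mass $(1-\eta)Q(q)$ on each real tuple and spreading the remaining mass $\eta$ uniformly over all dummy tuples, with a predicate that accepts unconditionally on dummies and agrees with $G$'s predicate on real questions. Since $\tilde Q$ has full support, Theorem~\ref{thm:final_threshold_theorem} applies to $\tilde G^n$. A short computation shows that $w_{\mathrm{ns}}(\tilde G) = 1 - \tilde\alpha$ with $\tilde\alpha := (1-\eta)\alpha$: any optimal non-signalling strategy for $G$ (extended arbitrarily to dummies) achieves this value, and no non-signalling strategy can exceed it, since the restriction of any such strategy to real questions is itself a valid non-signalling strategy for $G$.

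Next, I would fix an arbitrary non-signalling strategy for $\tilde G^n$ and let $R$ denote the number of rounds whose question tuple is real, $W_R$ the number of those rounds that are won, and $T := W_R + (n - R)$ the total number of won rounds. The random variable $R \sim \mathrm{Bin}(n,1-\eta)$ is determined by the referee alone, independently of the strategy. A direct algebraic manipulation yields that on the event $\{R \ge (1-\eta-\epsilon)n\}$, the inequality $W_R > (1-\alpha+\beta)R$ forces $T > (1 - \tilde\alpha + \tilde\beta)n$ with $\tilde\beta := (1-\eta-\epsilon)\beta - \epsilon\alpha$, which I can make a positive constant multiple of $\beta$ by the choice $\epsilon = \beta/(4\alpha)$ (permissible since $\beta \le \alpha$ and $\eta$ is small). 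A union bound then yields
\[
\Pr\!\left[W_R > (1-\alpha+\beta)R\right] \,\le\, \Pr\!\left[T > (1 - \tilde\alpha + \tilde\beta)n\right] \,+\, \Pr\!\left[R < (1-\eta-\epsilon)n\right],
\]
and I would bound the first term by applying Theorem~\ref{thm:final_threshold_theorem} to $\tilde G$, giving $\mathcal{C}_1(\tilde G,n)\exp(-\mathcal{C}_2(\tilde G)\, n\, \tilde\beta^2)$, and the second by a standard Hoeffding inequality for a binomial, giving $\exp(-2 n \epsilon^2)$. Both exponents scale as $-\Omega(n\beta^2)$ and both prefactors are polynomial in $n$, so after absorbing the $\eta$-dependent factors into new constants $\mathcal{C}_1(G,n),\mathcal{C}_2(G)$ one obtains the bound claimed in Theorem~\ref{thm:mod_threshold_theorem}.

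The main step requiring care, rather than a deep obstacle, is the parameter balancing: $\epsilon$ must be chosen so that $\tilde\beta$ remains a uniform constant multiple of $\beta$ across the range $\beta\le\alpha$ while the Hoeffding tail on $R$ still contributes an exponent of the same order as $-\mathcal{C}_2(\tilde G)\, n\, \tilde\beta^2$, and one should check that $\eta$ can be fixed once and for all as a small constant depending only on $G$ without degrading the $\beta^2$ scaling in the exponent. Beyond this, it is worth verifying explicitly that the constants inherited from Theorem~\ref{thm:final_threshold_theorem} applied to $\tilde G$ can be re-expressed as valid constants $\mathcal{C}_1(G,n),\mathcal{C}_2(G)$ depending on $G$ alone, since $\mathcal{C}_2$ in Theorem~\ref{thm:final_threshold_theorem} is defined through a linear program whose dimension depends on the (now enlarged) support of $\tilde Q$; because that support is still bounded in terms of $|\mathcal{Q}|$ and $|\mathcal{A}|$, this is a bookkeeping exercise rather than a substantive difficulty.
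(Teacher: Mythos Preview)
Your black-box reduction to Theorem~\ref{thm:final_threshold_theorem} is a legitimate route, and it is genuinely different from what the paper does. However, the concentration step contains a direction error that breaks the argument as written. You condition on the lower-tail event $\{R \ge (1-\eta-\epsilon)n\}$, but the implication ``$W_R > (1-\alpha+\beta)R \Rightarrow T > (1-\tilde\alpha+\tilde\beta)n$'' requires an \emph{upper} bound on $R$. Indeed $T = n - (R-W_R) > n - (\alpha-\beta)R$, and since $\tilde\alpha = (1-\eta)\alpha < \alpha$ the target threshold $1-\tilde\alpha+\tilde\beta$ for $T/n$ sits \emph{above} $1-\alpha+\beta$. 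When $R$ is close to $n$ (few dummy rounds) one only gets $T/n > 1-\alpha+\beta$, which falls short of $1-\tilde\alpha+\tilde\beta$ unless $\tilde\beta < \beta - \eta\alpha$; for $\eta$ fixed independently of $\beta$ this becomes negative once $\beta < \eta\alpha$, so your stated choice $\tilde\beta = (1-\eta-\epsilon)\beta - \epsilon\alpha$ cannot work uniformly in $\beta$. The fix is simply to swap tails: condition on $\{R \le (1-\eta+\epsilon)n\}$, so that $(\alpha-\beta)R/n \le (\alpha-\beta)(1-\eta+\epsilon)$ and one may take $\tilde\beta = (1-\eta)\beta - \epsilon\alpha$, which with $\epsilon = (1-\eta)\beta/(2\alpha)$ is a positive constant multiple of $\beta$ for every $0<\beta\le\alpha$, while Hoeffding controls $\Pr[R > (1-\eta+\epsilon)n] \le e^{-\Omega(n\beta^2)}$. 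With this correction your reduction goes through with $\eta$ fixed, as you intended.

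For comparison, the paper does \emph{not} invoke Theorem~\ref{thm:final_threshold_theorem} on $\tilde G$ as a black box. Instead it reopens the proof and replaces the linear program~\eqref{eq:linear_program} by a hybrid program whose non-signalling constraints are written with respect to the full-support distribution $\tilde Q$ while the objective function retains the original $Q$. Because the $\tilde Q$-weighted constraints are equivalent to the standard non-signalling constraints, the optimal value of this hybrid program is still $1-\alpha$, and the sensitivity analysis of Lemma~\ref{lem:distance_transformation} then directly bounds $w(\mathrm{O}^{\mathrm{EST2}})$ computed with respect to $Q$, which the paper identifies with the winning frequency on the real questions. This sidesteps your rescaling $(\tilde\alpha,\tilde\beta)$ and the union bound over fluctuations of $R$, at the price of reusing the internal machinery rather than citing the complete-support theorem as a finished result. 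Your approach, once repaired, has the virtue of modularity: Theorem~\ref{thm:mod_threshold_theorem} becomes an elementary corollary of Theorem~\ref{thm:final_threshold_theorem} plus a binomial tail bound.
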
 

The constants $\mathcal{C}_1(G,n)$ and $\mathcal{C}_2(G)$ have the same form as in Theorem \ref{thm:final_threshold_theorem}, but they now depend also on the perturbation $\eta$ of the original questions distribution. For more details on the definition of $\tilde{G}^n$ and the proof of Theorem~\ref{thm:mod_threshold_theorem} see Appendix~\ref{app:general_extension}. 

A similar modification was previously considered in both classical \cite{feige1994two} and quantum \cite{kempe2011parallel} parallel repetition theorems, where the repetitions in which dummy questions are selected were called ``confusion rounds''. For many applications this modification is harmless, especially as the success probability of ``honest'' players is not affected by it. However, it is important to note that Theorem~\ref{thm:mod_threshold_theorem} only holds for the modified form of repetition of the original game.

In addition to the bounds themselves our, perhaps most important, contribution in this work is the, arguably simpler, proof technique. While most of the known parallel repetition results build on the proof technique of \cite{raz1998parallel} we give a completely different proof, with ideas based on de Finetti theorems and tomography (as explained in the next section). Our proof technique allows us to avoid the usual difficulties which arise in proofs of parallel repetition theorems, such as conditioning on some of the questions and answers or considering an arbitrary number of players. In this sense our proof can be seen as more natural than previous proofs, and therefore more likely to be extendable to the classical and quantum multiplayer cases as well.

\subsection{Proof idea and techniques}\label{sec:proof_idea}

The goal of this section is to give the reader an intuitive understanding of the proof idea and techniques. The formal and more technical implementations of these ideas are given in the following sections. Nevertheless, the following two definitions are needed. 

\begin{defn} [Multiplayer game]\label{def:game}
	An $m$-player game  $G=(\mathcal{Q},\mathcal{A},Q,R)$ is defined by a set of possible tuples of questions $\mathcal{Q}$ together with a probability distribution $Q:\mathcal{Q}\rightarrow [0,1]$ (according to which the referee choses the questions) over it, a set of possible tuples of answers $\mathcal{A}$ and a winning condition $R:\mathcal{Q}\times\mathcal{A}\rightarrow\{0,1\}$. An $m$-tuple of questions $q = (q^1,q^2,\dotsc,q^m) \in \mathcal{Q}$ describes the questions given to the different players. Similarly an $m$-tuple of answers $a =  (a^1,a^2,\dotsc,a^m) \in \mathcal{A}$ describes the answers given by the different players.
\end{defn}

\begin{defn} [Strategy]\label{def:strategy}
	A strategy for an $m$-player game  $G=(\mathcal{Q},\mathcal{A},Q,R)$ is a conditional probability distribution $\mathrm{O}_{A|Q}:\mathcal{A}\times\mathcal{Q}\rightarrow[0,1]$, i.e., $\sum_a \mathrm{O}_{A|Q}(a|q)=1$ for all $q\in \mathcal{Q}$. Similarly, a strategy for a repeated game $G^n$ is a conditional probability distribution denoted by $\mathrm{P}_{\vec{A}|\vec{Q}}:\mathcal{A}^n\times\mathcal{Q}^n\rightarrow[0,1]$. 
\end{defn}

Throughout the proof strategies for the game $G$ are denoted by $\mathrm{O}_{A|Q}$ and strategies for the repeated game $G^n$ are denoted by $\mathrm{P}_{\vec{A}|\vec{Q}}$.

\subsubsection{Permutation invariance and de Finetti theorems}

The first trivial, but crucial, observation made is that when considering strategies for the repeated game, one can concentrate without loss of generality on permutation-invariant strategies. Permutation-invariant strategies are indifferent to the ordering of the tuples of questions given by the referee. That is, the referee can ask the players to answer  $q_1,q_2,q_3$ or $q_2,q_3,q_1$ (each $q_i$ is an $m$-tuple); in both cases the winning probability will be the same if the players are using a permutation-invariant strategy. Note that the permutation changes only the order of the tuples of questions. In particular, the players themselves are not being permuted and the questions of all players are permuted in exactly the same way (see Definition \ref{def:permutation} and Lemma \ref{lem:prem_wlog} for the formal argument).  

Considering only permutation-invariant strategies allows us to use the de Finetti theorem of \cite{arnon2013finetti} which relates any permutation-invariant strategy to a de Finetti strategy.  
The exact statement of the de Finetti theorem will only be relevant later. For now, using just the intuition of de Finetti theorems, one can think of any permutation-invariant strategy as being a convex combination of i.i.d.\ strategies. That is\footnote{We emphasise once again that this is not a quantitive statement that we claim to be correct. This is just useful as an intuitive way of understanding the proof idea.}, 
\begin{equation}\label{eq:intuition}
	\mathrm{P}_{\vec{A}|\vec{Q}} \approx \int \mathrm{O}_{A|Q}^{\otimes n} \mathrm{d}\mathrm{O}_{A|Q}
\end{equation}
where $\mathrm{d}\mathrm{O}_{A|Q}$ is some measure on the space of one-game strategies and $\mathrm{O}_{A|Q}^{\otimes n}$ is a product of $n$ identical strategies~$\mathrm{O}_{A|Q}$.

Unfortunately, the convex combination itself (meaning, the measure $\mathrm{d}\mathrm{O}_{A|Q}$) is unknown.  Moreover, even though we assume that the strategy $\mathrm{P}_{\vec{A}|\vec{Q}}$ does not allow the $m$ players to communicate, i.e., it is non-signalling, the convex combination might still include signalling parts, i.e., signalling $\mathrm{O}_{A|Q}$. Indeed, in general, a convex combination of signalling strategies can still be non-signalling. 

For the non-signalling parts of the convex combination one can easily prove a strong parallel repetition or threshold theorem. These parts are just i.i.d.\ non-signalling strategies. The only thing which is left to prove is therefore that the \emph{signalling} part of the convex combination of Equation \eqref{eq:intuition} has an exponentially small weight\footnote{As mentioned above, this statement does not hold for an arbitrary decomposition of a non-signalling strategy as a convex combination of other strategies. We will crucially use the fact that each term in the convex combination has an i.i.d.\ structure.}. We find this question interesting by itself, and of course, the same question can be asked in the classical and quantum case -- given a classical or quantum strategy $\mathrm{P}_{\vec{A}|\vec{Q}}$, what is the weight of the non-classical or non-quantum i.i.d.\ parts in the convex combination?

\subsubsection{Bounding the signalling part}

As the convex combination itself in Equation \eqref{eq:intuition} is unknown, one cannot just calculate the weight of the signalling part. We therefore take a more operational approach, following ideas from quantum tomography \cite{christandl2012reliable}. 

Consider a particular (unknown) part $\mathrm{O}_{A|Q}^{\otimes n}$ of the convex combination and divide the $n$ copies of the strategy $\mathrm{O}_{A|Q}$ into two groups -- a test group consisting of $n/2$ out of the $n$ copies, and a game group of $n/2$ copies. The general idea is to use the test copies to get an estimation $\mathrm{O}^{\mathrm{EST}}_{A|Q}$ of the strategy $\mathrm{O}_{A|Q}$, which will then help us in proving our claims.

More specifically, we are interested in knowing whether $\mathrm{O}_{A|Q}$ is signalling or not (if it is non-signalling then its winning probability in $G$ is obviously bounded by the optimal non-signalling winning probability $1-\alpha$). For this we define a signalling measure and an operational (and hypothetical) signalling test. Given questions and answers which are distributed according to the $n/2$ copies of $\mathrm{O}_{A|Q}$ and $Q$, the signalling test will create an estimation $\mathrm{O}^{\mathrm{EST}}_{A|Q}$ and calculate its signalling value. If the signalling value is above a certain threshold the test will accept, and otherwise it will reject. 

In order to bound the weight of the signalling part in Equation \eqref{eq:intuition} one can bound the probability that the signalling test accepts. To prove that the acceptance probability is exponentially small we use a combination of two lemmas, which we call the weak and the strong lemma. These lemmas are based on a special guessing game that we construct and on applications of the de Finetti theorem. Both lemmas together show that if the probability of the test accepting is too high, then the original strategy $\mathrm{P}_{\vec{A}|\vec{Q}}$ must have been signalling -- a contradiction.

\subsubsection{From intuition to practice}

In practice, the de Finetti theorem \cite{arnon2013finetti} is an inequality relation between any permutation-invariant strategy and a given de Finetti strategy (see Lemma \ref{lem:deFinetti_reduction}) which does not imply Equation \eqref{eq:intuition}. As a consequence, considering the test copies and game copies as above does not directly make sense. Nevertheless, we can follow similar ideas by considering the questions-answers pairs in a specific instance of the repeated game. That is, every time the game is played using a strategy $\mathrm{P}_{\vec{A}|\vec{Q}}$, we divide the data, the questions and answers, of this specific run into two groups -- test data and game data, consisting of $n/2$ tuples of questions-answers pairs each. Our goal is then to bound the winning frequency in the game data, while the test data is relevant for the hypothetical signalling tests (see also Figure~\ref{fig:division_game_test} in the following section). 

The rest of the paper is organised as follows. We start with some preliminaries in Section \ref{sec:prelim}. In Section \ref{sec:signalling} we first consider and explain the concept of non-signalling strategies, then define our signalling measures and signalling tests in a formal way and present their important properties. Section \ref{sec:deFinetti} is devoted to de Finetti and permutation-invariant strategies. Finally, in Section \ref{sec:threshold} we connect all the relevant tools together using the weak and the strong lemmas, and prove our main theorem, Theorem \ref{thm:final_threshold_theorem} (the extension of the theorem to games with incomplete-support is relatively straightforward and is given in Appendix \ref{app:extension}). We conclude in Section \ref{sec:conclusions} with open questions and a discussion of possible extensions to the classical and quantum case. 

\section{Preliminaries}\label{sec:prelim}

Throughout the proof many constants and parameters are used. For convenience, apart from introducing them when necessary, we list all of them together with their role in Table \ref{tb:parameters_table} in the end of Section~\ref{sec:threshold}.  

We use the letters $q,r$ and $s$ to denote tuples of questions and $a$ and $b$ to denote tuples of answers. In the following we define the notation using only $q$ and $a$.

\subsection{Games and strategies}

In this work we consider a general $m$-player game  $G=(\mathcal{Q},\mathcal{A},Q,R)$ as defined in Definition \ref{def:game} in the previous section. A strategy for a game $G$ is described by a conditional probability distribution $\mathrm{O}_{A|Q}:\mathcal{A}\times\mathcal{Q}\rightarrow[0,1]$ as defined in Definition \ref{def:strategy}. For the joint questions-answers distribution we use $\mathrm{O}_{AQ}= Q\times \mathrm{O}_{A|Q}$. 

\begin{defn}[Winning probability]\label{def:win_prob}
	The winning probability of a strategy $\mathrm{O}_{A|Q}$ in game $G=(\mathcal{Q},\mathcal{A},Q,R)$ is given by $w\left( \mathrm{O}_{A|Q} \right) =\sum_{q,a} Q(q) R(q,a) \mathrm{O}_{A|Q}(a|q)$. 
\end{defn}

We use the following definition to measure the distance between two one-game strategies. 
\begin{defn}[Distance measure]\label{def:trace_distance}
	The distance between $\mathrm{K}_{A|Q}$ and $\mathrm{R}_{A|Q}$ is defined as
	\[
		\big|\mathrm{K}_{A|Q}-\mathrm{R}_{A|Q}\big|_1 =  \mathbb{E}_{q\in \mathcal{Q}} \sum_{a\in A} \big| \mathrm{K}_{A|Q}(a|q) - \mathrm{R}_{A|Q}(a|q) \big| 
	\]
	where the $m$-tuples of questions $q\in \mathcal{Q}$ are distributed according to $Q$ defined by the game $G$.
\end{defn}
Note that this is not the standard definition -- instead of a maximisation over the tuples of questions as in the usual definition of the trace distance we consider the average over the tuples according to the game distribution. Therefore, the distance between the strategies depends on the specific game $G$ considered. 

In the repeated game $G^n$ the referee asks each of the players $n$ questions, all at once. The questions are chosen according to the distribution $Q^{\otimes n}$, i.e., independently using $Q$. The answers are then checked independently according to the winning condition $R$. A strategy for the repeated game $G^{n}$ is denoted by $\mathrm{P}_{\vec{A}|\vec{Q}}:\mathcal{A}^{n}\times\mathcal{Q}^{n}\rightarrow[0,1]$ and the joint questions-answers distribution is then $P_{\vec{A}\vec{Q}}=Q^{\otimes n}\times\mathrm{P}_{\vec{A}|\vec{Q}}$. When the distributions are clear from the context we sometimes omit the subscripts and write just $\mathrm{O}$ and $\mathrm{P}$.

When considering many questions-answers pairs in the repeated game we denote all the questions and answers as vectors $\vec{q},\vec{a}$. We use a subscript index as in $\vec{q}_j$ to denote the $j$'th tuple of questions given to the players. We denote by $\mathrm{O}_{A|Q}^{\otimes n}$ a product of $n$ identical strategies~$\mathrm{O}_{A|Q}$. That is, $\mathrm{O}_{A|Q}^{\otimes n}$ is defined according to $\mathrm{O}_{A|Q}^{\otimes n}(\vec{a}|\vec{q}) = \prod_{j=1}^{n} \mathrm{O}_{A|Q}(a_j|q_j)$ for all $\vec{a},\vec{q}$. 

For any $m$-tuple of questions $q = (q^1,\dotsc,q^m) \in \mathcal{Q}$ and any $i\in[m]=\{1,\cdots,m\}$ we denote by $q^i$, using a superscript index, the question given to the $i$'th player by the referee, and by $q^{\bar{i}}=(q^1,\dotsc q^{i-1},q^{i+1},\cdots,q^m)$ the $(m-1)$-tuple of questions given to all the players but~$i$. Similarly, for a subset $I\subset [m]$, $q^I$ denotes the questions given to all the players $i\in I$ and $q^{\bar{I}}$ denotes the complementary set of questions, i.e., the questions given to all the players $i\notin I$. An analogous notation is used for the answers.  Similarly, when considering many questions-answers paris, $\vec{q}^i$ denotes \emph{all} the questions given to the $i$'th player, and so on.  

A tuple of questions $q=(q^1,\dotsc q^{i-1},q^i,q^{i+1},\cdots,q^m)$ can then be also written as $(q^i,q^{\bar{i}})$ where it is understood which player gets which question. Therefore in this notation $Q(q^i,q^{\bar{i}})=Q(q)$ and similarly $\mathrm{O}(a^i,a^{\bar{i}}|q^i,q^{\bar{i}})= \mathrm{O}(a|q)$. Moreover,  $Q(q^i|q^{\bar{i}})= \frac{Q(q^i,q^{\bar{i}})}{\sum_{r^{i}}Q(r^{i},q^{\bar{i}})}$ denotes the probability that the $i$'th player receives question $q^i$ \emph{given} that the other players receive $q^{\bar{i}}$.   

In the following we prove Theorem \ref{thm:final_threshold_theorem}, which applies to games with complete-support. A game has complete-support if every possible combination of questions to the players has some non-zero probability according to the question distribution $Q$. Formally, 
\begin{defn}[Complete-support game]
	An $m$-player game has complete-support if for every possible combination of questions to the players $q^1,\dotsc,q^m$ (i.e., $q^1,\dotsc,q^m$ such that for all $i\in[m]$ there exist $s^{\bar{i}}$ for which $Q\left((s^1,\dotsc,s^{i-1},q^i,s^{i+1},\dotsc,s^m)\right) > 0$), $Q(q)> 0$. 
\end{defn} 

\subsection{Estimated strategies}\label{sec:estimated_strategies}

The specific questions and answers in one run of the repeated game $\vec{q},\vec{a}$ are also called the data of the game. As mentioned in the previous section, the data $\vec{q},\vec{a}$ is divided into two disjoint sets which we call the test data and the game data. We denote the $n/2$ tuples of test questions and answers by $\vec{q^t}, \vec{a^t}$ respectively and the $n/2$ tuples of game questions and answers by $\vec{q^g}, \vec{a^g}$ respectively. Using this notation $\vec{q}$ is the concatenation of $\vec{q^t}$ and $\vec{q^g}$ and $\vec{a}$ is the concatenation of $\vec{a^t}$ and $\vec{a^g}$. Note that although we denote here the test questions as appearing before the game questions, they are indistinguishable from one another, as they are chosen according to the exact same distribution $Q$. Had this not been the case, the permutation invariance symmetry would have been broken.   

Given the test data $\vec{q^t}, \vec{a^t}$ we create an estimation $\mathrm{O}^{\mathrm{EST1}}_{A|Q}$ of a one-game strategy in the following way. For every tuple of questions $q\in \mathcal{Q}$ and answers $a\in \mathcal{A}$ let $f^q_a$ be the frequency of the tuple of answers $a$ in $\vec{a^t}$ restricted to the indices $j\in[n/2]$ for which $\vec{q^t}_j = q$ (if $q$ did not appear at all set $f^q_a=0$). Then define $\mathrm{O}^{\mathrm{EST1}}_{A|Q}$ such that $\mathrm{O}^{\mathrm{EST1}}_{A|Q}(a|q)=f^q_a$. 

Similarly $\mathrm{O}^{\mathrm{EST2}}_{A|Q}$ is created in the same way, using the game data $\vec{q^g}, \vec{a^g}$ (see Figure~\ref{fig:division_game_test}). Note that since $\mathrm{P}_{\vec{A}|\vec{Q}}$ might be signalling between the different $n$ tuples of questions, both $\mathrm{O}^{\mathrm{EST1}}_{A|Q}$ and $\mathrm{O}^{\mathrm{EST2}}_{A|Q}$ can depend also on the other questions which are not considered in the estimation process. 

\begin{figure}
\begin{centering}
	\begin{tikzpicture}
	
		\draw[decorate,decoration={brace,amplitude=8pt}] (-0.55,1.25) -- (-0.5,2.55) node [midway,xshift=-1.5cm,align=center] {distributed \\ according to \\ $\mathrm{P}_{\vec{A}\vec{Q}}$};
		
		\draw (0,1.5) node {$\vec{a}:$};
		\node (rect) at (2,1.5) [draw,minimum width=3cm, minimum height=0.6cm] {$\vec{a^t}$};
		\node (rect) at (5.25,1.5) [draw,minimum width=3cm, minimum height=0.6cm] {$\vec{a^g}$};
		
		\draw (0,2.25) node {$\vec{q}:$};
		\node (rect) at (2,2.25) [draw,minimum width=3cm, minimum height=0.6cm] {$\vec{q^t}$};
		\node (rect) at (5.25,2.25) [draw,minimum width=3cm, minimum height=0.6cm] {$\vec{q^g}$};
		
		\draw[decorate,decoration={brace,mirror,amplitude=10pt}] (0.5,0.85) -- (3.5,0.85) node [midway,yshift=-1cm,align=center] {defines $\mathrm{O}^{\mathrm{EST1}}_{A|Q}$};
		
		\draw[decorate,decoration={brace,mirror,amplitude=10pt}] (3.75,0.85) -- (6.75,0.85) node [midway,yshift=-1cm,align=center] {defines $\mathrm{O}^{\mathrm{EST2}}_{A|Q}$};
	
	\end{tikzpicture}
\par\end{centering}
\caption{Division to test and game data} \label{fig:division_game_test}
\end{figure}

To evaluate the accuracy of the estimation process described above we will use the following lemma~-- an application of Sanov's theorem (see, e.g., \cite{cover2012elements} Section 11.4) to our scenario. 
\begin{lemma}\label{lem:chernoff}
	Let $\delta(l,\epsilon)=(l+1)^{|\mathcal{A}|\cdot|\mathcal{Q}|-1}e^{-l\epsilon^2/2}$. Then for every i.i.d. strategy $\mathrm{O}_{A|Q}^{\otimes l}$,
	\[
		\mathrm{Pr}_{\vec{a},\vec{q}\sim\mathrm{O}^{\otimes l}_{AQ}}\left[ |\mathrm{O}^{\mathrm{EST}}_{A|Q} -  \mathrm{O}_{A|Q}|_1 > \epsilon \right] \leq \delta(l,\epsilon) 
	\]
	where $\mathrm{O}^{\mathrm{EST}}_{A|Q}$ is estimated as above from the data $\vec{a},\vec{q}$. 
\end{lemma}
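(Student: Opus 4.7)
The plan is to reduce the statement to a direct application of Sanov's theorem for the empirical joint distribution $\hat{P}_l$ of the $l$ i.i.d.\ samples $(q_j,a_j)\sim \mathrm{O}_{AQ}=Q\times\mathrm{O}_{A|Q}$ on the alphabet $\mathcal{X}=\mathcal{Q}\times\mathcal{A}$, which has size $|\mathcal{Q}||\mathcal{A}|$. The key observation is that the estimator $\mathrm{O}^{\mathrm{EST}}_{A|Q}(a|q)$ defined just before the lemma is exactly the empirical conditional $\hat{P}_l(q,a)/\hat{P}_l^Q(q)$ (with the same $0$-convention when $\hat{P}_l^Q(q)=0$), so the event we wish to bound is $\{\hat{P}_l\in E_\epsilon\}$ where $E_\epsilon=\{R:|R_{A|Q}-\mathrm{O}_{A|Q}|_1>\epsilon\}$.

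The method of types then supplies $\Pr[\hat{P}_l\in E_\epsilon]\le (l+1)^{|\mathcal{Q}||\mathcal{A}|-1}\exp\bigl(-l\inf_{R\in E_\epsilon}D(R\|\mathrm{O}_{AQ})\bigr)$, whose type-count prefactor already matches $\delta(l,\epsilon)$. All that remains is to establish the transportation-style inequality $D(R\|\mathrm{O}_{AQ})\ge \tfrac12\,|R_{A|Q}-\mathrm{O}_{A|Q}|_1^{\,2}$ for every joint distribution $R$ on $\mathcal{X}$. I would prove this by combining the chain rule $D(R\|\mathrm{O}_{AQ})=D(R_Q\|Q)+\mathbb{E}_{q\sim R_Q}\,D(R_{A|Q}(\cdot|q)\|\mathrm{O}_{A|Q}(\cdot|q))$ with Pinsker (in nats) applied both to the marginal and to each conditional, setting $d_q=\|R_{A|Q}(\cdot|q)-\mathrm{O}_{A|Q}(\cdot|q)\|_1$ so that $|R_{A|Q}-\mathrm{O}_{A|Q}|_1=\sum_q Q(q)d_q$, and then using the marginal term $D(R_Q\|Q)$ to absorb the gap between $\sum_q Q(q)d_q$ and $\sum_q R_Q(q)d_q$ via a Gibbs/Lagrangian minimisation over $R_Q$ (whose optimiser is the Boltzmann tilt $R_Q(q)\propto Q(q)e^{-d_q^{\,2}/2}$).

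The main technical obstacle is precisely this last step, namely reconciling the $Q$-weighted average appearing in the definition of $|\cdot|_1$ with the $R_Q$-weighted expectation produced by the chain rule; all other ingredients (method-of-types counting, chain rule, conditional Pinsker) are routine. Once the bound $\inf_{R\in E_\epsilon}D(R\|\mathrm{O}_{AQ})\ge \epsilon^2/2$ has been verified, substituting it into Sanov's inequality yields the claimed $\delta(l,\epsilon)=(l+1)^{|\mathcal{Q}||\mathcal{A}|-1}e^{-l\epsilon^2/2}$.
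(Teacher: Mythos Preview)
Your overall plan---reduce to Sanov's theorem on the joint alphabet $\mathcal{Q}\times\mathcal{A}$ and then bound the rate function from below via a transportation-type inequality---is exactly the route the paper has in mind (the paper only cites Sanov and does not spell out the details). The type-counting prefactor $(l+1)^{|\mathcal{Q}||\mathcal{A}|-1}$ is also obtained correctly.

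There is, however, a genuine gap in your proposed proof of the key inequality $D(R\|\mathrm{O}_{AQ})\ge \tfrac12\bigl(\mathbb{E}_{q\sim Q}d_q\bigr)^2$. After applying conditional Pinsker you are left with proving
\[
D(R_Q\|Q)+\tfrac12\,\mathbb{E}_{q\sim R_Q}\,d_q^{\,2}\;\ge\;\tfrac12\bigl(\mathbb{E}_{q\sim Q}\,d_q\bigr)^{2},
\]
and you propose to minimise the left-hand side over $R_Q$ via the Boltzmann tilt $R_Q(q)\propto Q(q)e^{-d_q^2/2}$. That minimisation is correct and yields $-\ln\mathbb{E}_{q\sim Q}\,e^{-d_q^2/2}$, but the resulting scalar inequality
\[
-\ln\mathbb{E}_{q\sim Q}\,e^{-d_q^2/2}\;\ge\;\tfrac12\bigl(\mathbb{E}_{q\sim Q}\,d_q\bigr)^{2}
\]
is \emph{false}. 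A concrete counterexample: take $Q$ uniform on two points with $d_1=1$ and $d_2=1.5$. Then the left-hand side equals $-\ln\bigl(\tfrac12 e^{-1/2}+\tfrac12 e^{-9/8}\bigr)\approx 0.764$, while the right-hand side equals $\tfrac12(1.25)^2\approx 0.781$. The difficulty is that Pinsker's bound $D_q\ge d_q^2/2$ is far from tight when $d_q$ is not small, so replacing each $D_q$ by $d_q^2/2$ before minimising over $R_Q$ throws away exactly the slack that is needed. The ``main technical obstacle'' you flag is therefore not merely technical: as written, the argument does not close.

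A safe repair, entirely adequate for everything the paper subsequently needs, is to avoid the chain rule altogether and compare the conditional and joint $\ell_1$-distances directly. From
\[
Q(q)\bigl|R(a|q)-\mathrm{O}(a|q)\bigr|\le\bigl|R_{AQ}(q,a)-\mathrm{O}_{AQ}(q,a)\bigr|+R(a|q)\,\bigl|R_Q(q)-Q(q)\bigr|
\]
one gets $|R_{A|Q}-\mathrm{O}_{A|Q}|_1\le \|R_{AQ}-\mathrm{O}_{AQ}\|_1+\|R_Q-Q\|_1\le 2\|R_{AQ}-\mathrm{O}_{AQ}\|_1$, and then standard Pinsker on the joint gives $D(R\|\mathrm{O}_{AQ})\ge \tfrac18\,|R_{A|Q}-\mathrm{O}_{A|Q}|_1^{\,2}$. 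This yields $\delta(l,\epsilon)=(l+1)^{|\mathcal{Q}||\mathcal{A}|-1}e^{-l\epsilon^2/8}$ instead of $e^{-l\epsilon^2/2}$; the constant $1/8$ versus $1/2$ is immaterial for every downstream use of the lemma in the paper, where it only enters through $\mathcal{C}_2(G)$.
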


\subsection{Linear programs}

Linear programs (see, e.g., \cite{schrijver1998theory}) are a useful tool when considering non-signalling games, as the non-signalling constraints are linear. The following results regarding the sensitivity of linear programs will be of use for us. 

\begin{lemma}[Sensitivity analysis of linear programs, \cite{schrijver1998theory} Section 10.4]\label{lem:sensitivity1}
	Let $\max\{c^Tx | Ax \leq b\}$ be a primal linear program and $min\{b^Ty|A^Ty=c,y\geq 0\}$ its dual. Denote the optimal value of the programs by $w$ and the optimal dual solution by $y^{\star}$. Then the optimal value of the perturbed program $w_e = max\{c^Tx | Ax \leq b+e\}$ for some perturbation $e$ is bounded by $w_e \leq w + e^Ty^{\star}$. 
\end{lemma}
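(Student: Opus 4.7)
The plan is to exploit the fact that the perturbation $e$ affects only the right-hand side $b$ of the primal constraints, and hence only the objective of the dual, leaving the feasible region of the dual unchanged. The argument then rests on a single application of weak duality to the perturbed program combined with strong duality applied to the original one.

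More concretely, I would proceed as follows. First, I would write down the dual of the perturbed primal $\max\{c^T x \mid Ax \leq b+e\}$; since the matrix $A$ and the vector $c$ are identical to those in the original LP, the dual is $\min\{(b+e)^T y \mid A^T y = c,\, y \geq 0\}$, which has exactly the same feasible set $\{y : A^T y = c,\, y \geq 0\}$ as the dual of the original program. In particular, the original optimal dual solution $y^\star$ is feasible for the perturbed dual. Second, I would invoke weak duality for the perturbed pair: for any primal-feasible $x$ with $Ax \leq b+e$ and any dual-feasible $y$ (such as $y^\star$), one has $c^T x \leq (b+e)^T y$; maximising the left-hand side over primal-feasible $x$ yields $w_e \leq (b+e)^T y^\star$. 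Third, I would apply strong duality to the unperturbed program to conclude that the optimal dual value $b^T y^\star$ equals the optimal primal value $w$. Combining these two facts gives
\[
w_e \;\leq\; (b+e)^T y^\star \;=\; b^T y^\star + e^T y^\star \;=\; w + e^T y^\star,
\]
which is the desired bound.

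There is essentially no obstacle in this argument, since it is just weak duality for the perturbed program plus strong duality for the original one; the only point that needs verification is that the perturbation leaves the dual feasible region invariant, which is immediate because only $b$ is perturbed. The only mild caveat is that strong duality requires the primal (and dual) to have finite optima, which is implicit in the statement of the lemma through the existence of $w$ and $y^\star$; I would state this hypothesis explicitly before invoking strong duality. The argument makes no use of the particular structure of $e$, so the bound holds for any perturbation vector of the appropriate dimension, including perturbations that might render some primal-feasible points of the original program infeasible for the perturbed one (in which case the inequality is still valid, possibly trivially).
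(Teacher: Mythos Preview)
Your argument is correct and is precisely the standard proof: the dual feasible region is unchanged under perturbation of $b$, so $y^\star$ remains dual-feasible for the perturbed program, weak duality gives $w_e \leq (b+e)^T y^\star$, and strong duality for the original program gives $b^T y^\star = w$. The paper does not supply its own proof of this lemma; it is stated as a known result with a citation to Schrijver, so there is nothing to compare against beyond noting that your argument is the textbook one.
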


\begin{lemma}[Dual optimal solution bound, \cite{schrijver1998theory} Section 10.4]\label{lem:sensitivity2}
	Let $A$ be an $r_1\times r_2$-matrix and let $\Delta$ be such that for each non-singular submatrix $B$ of $A$ all entries of $B^{-1}$ are at most $\Delta$ in absolute value. Let $c$ be a row vector of dimension $r_2$ and let $y^{\star}$ be the optimal dual solution of $min\{b^Ty|A^Ty=c,y\geq 0\}$. Then 
	\[
		 \kappa = \sum_{j=1}^{r_1} |y^{\star}_j| \leq  r_2\Delta \sum_{j=1}^{r_2} |c_j| \;.
	\]
\end{lemma}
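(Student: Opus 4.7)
The plan is to combine the basic-feasible-solution characterisation of LP optima with the entrywise bound on submatrix inverses provided by the hypothesis; the argument is essentially Cramer's rule applied at a vertex of the dual feasible polyhedron.

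First, I would apply the fundamental theorem of linear programming to the dual $\min\{b^T y : A^T y = c,\, y \geq 0\}$, which is in standard form with $r_2$ equality constraints in $r_1$ non-negative variables. It yields an optimal $y^\star$ that may be taken to be a basic feasible solution: there exists an index set $B \subseteq [r_1]$ with $|B| = r_2$ such that the submatrix $A^T_B$ (the columns of $A^T$ indexed by $B$) is non-singular, and
\[
y^\star_B \;=\; (A^T_B)^{-1} c, \qquad y^\star_j = 0 \text{ for } j \notin B.
\]
If $A$ fails to have full column rank so that no such basis of size $r_2$ exists, one may first delete redundant rows of the constraint $A^T y = c$ (which does not change the feasible set) and argue with the resulting reduced system.

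Second, I would unpack the entry bound. Since $A^T_B$ is the transpose of the $r_2 \times r_2$ non-singular submatrix $A_B$ of $A$, we have $(A^T_B)^{-1} = \bigl((A_B)^{-1}\bigr)^T$, and by hypothesis every entry of $(A_B)^{-1}$ is at most $\Delta$ in absolute value. Hence for each $j \in B$,
\[
|y^\star_j| \;=\; \Bigl|\sum_{k=1}^{r_2} \bigl((A_B)^{-1}\bigr)^T_{jk}\, c_k \Bigr| \;\leq\; \Delta \sum_{k=1}^{r_2} |c_k|.
\]
Summing over $j \in B$ and using that $y^\star$ vanishes off $B$ together with $|B| = r_2$ gives $\sum_{j=1}^{r_1} |y^\star_j| \leq r_2 \Delta \sum_{k=1}^{r_2} |c_k|$, which is the claim.

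I do not expect a genuine obstacle: the proof is entirely mechanical, consistent with its status as a standard textbook fact (as reflected by the citation to Schrijver). The only subtlety is the possibility of degeneracy, in which case the support of $y^\star$ is a proper subset of $B$; this only strengthens the inequality since it means even more of the $|y^\star_j|$'s vanish identically. The non-trivial ingredient that the hypothesis is designed for — namely, turning an \emph{a priori} unbounded dual optimum into a quantitative $\ell_1$ bound — is the passage from ``optimum'' to ``basic feasible solution'', without which no such entrywise control would be available.
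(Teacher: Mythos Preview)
The paper does not give its own proof of this lemma; it is quoted as a known result from Schrijver's textbook, so there is nothing to compare against beyond the standard argument. Your proof is correct and is precisely that standard argument: pass to a basic feasible optimal solution of the standard-form LP $\min\{b^T y : A^T y = c,\, y\geq 0\}$, so that $y^\star$ is supported on an index set $B$ of size at most $r_2$ with $y^\star_B = (A_B^T)^{-1}c$, then bound each nonzero coordinate entrywise using the hypothesis on $\Delta$ and sum. Your handling of the rank-deficient case (drop redundant equality constraints first) and of degeneracy (fewer nonzeros only helps) is also correct.
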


\section{Detecting signalling}\label{sec:signalling}

\subsection{The non-signalling constraints}

We start by defining a non-signalling strategy. To simplify notation we define it using one-game strategies $\mathrm{O}_{A|Q}$. The definition is identical for the strategies $\mathrm{P}_{\vec{A}|\vec{Q}}$.

\begin{defn}[Non-signalling strategy]
	An $m$-player strategy $\mathrm{O}_{A|Q}$ is called non-signalling if for any set of players $I\subset [m]$, 
	\[
		\forall a^{\bar{I}},q^{\bar{I}},q^I,r^{I} \quad \mathrm{O}_{A|Q}(\circ,a^{\bar{I}}|q^I,q^{\bar{I}}) =  \mathrm{O}_{A|Q}(\circ,a^{\bar{I}}|r^{I},q^{\bar{I}}) 
	\]
	where $\circ$ denotes a marginal, e.g., $\mathrm{O}_{A|Q}(\circ,a^{\bar{I}}|q^I,q^{\bar{I}}) = \sum_{a_i|i\in I} \mathrm{O}_{A|Q}(a|q^I,q^{\bar{I}})$. 
\end{defn}

Alternatively, one can define a non-signalling strategy using a set of linearly independent non-signalling constraints from which all the constraints in the above definition follow.  
\begin{lemma}[Lemma 2.7 in \cite{hanggi2010device}]
	An $m$-player strategy $\mathrm{O}_{A|Q}$ is non-signalling if and only if for any player $i \in [m]$, 
	\begin{equation}\label{eq:normal_ns_def}
		\forall a^{\bar{i}},q^{\bar{i}},q^i,r^{i} \quad  \mathrm{O}_{A|Q}(\circ,a^{\bar{i}}|q^i,q^{\bar{i}}) =  \mathrm{O}_{A|Q}(\circ,a^{\bar{i}}|r^{i},q^{\bar{i}}) \;.
	\end{equation}
\end{lemma}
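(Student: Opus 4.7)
The ``only if'' direction is immediate: taking $I=\{i\}$ in the definition of a non-signalling strategy recovers exactly Eq.~\eqref{eq:normal_ns_def}. I would focus on the ``if'' direction, which asserts that invariance of every \emph{single-player} marginal under that player's own question is enough to force invariance of the joint marginal of any \emph{subset} of players under that subset's questions.

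The plan is to reduce the general subset case to the single-player case by flipping one coordinate of the conditioning question vector at a time. Fix $I\subset[m]$ and enumerate its elements as $i_1,\ldots,i_k$. The goal is to show
\[
	\sum_{a^I} \mathrm{O}_{A|Q}(a^I, a^{\bar I} | q^I, q^{\bar I}) \;=\; \sum_{a^I} \mathrm{O}_{A|Q}(a^I, a^{\bar I} | r^I, q^{\bar I})
\]
for arbitrary $q^I,r^I$ and arbitrary $a^{\bar I}, q^{\bar I}$. The key observation is that for any $j\in I$, writing the sum over $a^I$ as an outer sum over $a^{I\setminus\{j\}}$ and an inner sum over $a^j$, the inner sum is precisely a single-player marginal of the form appearing in Eq.~\eqref{eq:normal_ns_def}: identifying $a^{\bar j}$ with $(a^{I\setminus\{j\}}, a^{\bar I})$ and $q^{\bar j}$ with $(q^{I\setminus\{j\}}, q^{\bar I})$, the hypothesis gives that the inner sum is invariant under replacing $q^j$ by any $r^j$, while all other coordinates of the outer conditioning vector stay fixed.

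From here I would simply iterate: replace $q^{i_1}$ by $r^{i_1}$ using the single-player constraint for player $i_1$, then $q^{i_2}$ by $r^{i_2}$ using the constraint for player $i_2$, and so on, until after $k$ steps the entire tuple $q^I$ has been swapped out for $r^I$. There is no conceptual obstacle in this argument; the only care required is bookkeeping, to verify that at each step the partition into ``the one coordinate being changed'' versus ``everything held fixed'' is correctly matched to the hypothesis of Eq.~\eqref{eq:normal_ns_def}.
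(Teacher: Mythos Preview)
The paper does not supply its own proof of this lemma; it is quoted from H\"anggi's thesis and used as a black box. Your argument is correct and is the standard one: the single-player constraints let you flip one coordinate of the conditioning question tuple at a time, so iterating over the elements of $I$ reduces the general subset constraint to $|I|$ applications of Eq.~\eqref{eq:normal_ns_def}.
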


From Equation \eqref{eq:normal_ns_def} it is clear that for every $i$ and $q^{\bar{i}}$ the marginal states $\mathrm{O}_{A|Q}(\circ,a^{\bar{i}}|q^i,q^{\bar{i}})$ are all equivalent and independent of $q^i$. Therefore another equivalent formulation of the non-signalling constraints is given by 
\[
	\forall a^{\bar{i}},q^{\bar{i}},q^i \quad  \mathrm{O}_{A|Q}(\circ,a^{\bar{i}}|q^i,q^{\bar{i}}) =  \sum_{r^{i}}Q(r^{i}|q^{\bar{i}})\mathrm{O}_{A|Q}(\circ,a^{\bar{i}}|r^{i},q^{\bar{i}}) \;. 
\]

Here we defined the marginal, which is independent of $r^{i}$, as an average over the different $\mathrm{O}_{A|Q}(\circ,a^{\bar{i}}|r^{i},q^{\bar{i}})$, where the average is taken according to the distribution of the game question $Q$. It is easy to verify that this condition is equivalent to Equation \eqref{eq:normal_ns_def}. 

We can now write the optimisation problem of finding the optimal winning probability in a complete-support game $G$ using a non-signalling strategy as the following linear program over the variables $\mathrm{O}(a|q)$:
\begin{subequations} \label{eq:linear_non_relaxed}
\begin{align}
	\max \quad &\sum_{q,a} Q(q) R(q,a) \mathrm{O}(a|q) \label{eq:linear1_obj} \\
	\text{s.t.} \quad&   Q(q^i , q^{\bar{i}})\left[ \mathrm{O}(\circ,a^{\bar{i}}|q^i , q^{\bar{i}}) - \sum_{r^{i}} Q(r^{i}|q^{\bar{i}})\mathrm{O}(\circ,a^{\bar{i}} | r^{i},q^{\bar{i}})\right] = 0  &\forall i, q^i,q^{\bar{i}},a^{\bar{i}} \label{eq:linear1_ns} \\
	&\sum_{a} \mathrm{O}(a|q) = 1  &\forall q \label{eq:linear1_nornmal} \\
	& \mathrm{O}(a|q) \geq 0 &\forall a,q \label{eq:linear1_positive}
\end{align} 
\end{subequations}
The objective function, Equation \eqref{eq:linear1_obj}, is exactly the winning probability of the game using strategy $\mathrm{O}(a|q)$ as defined in Definition \ref{def:win_prob}. Equations \eqref{eq:linear1_nornmal} and \eqref{eq:linear1_positive} are the normalisation and positivity constraints on the strategy~$\mathrm{O}(a|q)$.

In Equation \eqref{eq:linear1_ns} all the non-signalling constraints are listed, up to a factor of $Q(q)$ which does not change the constraints when considering complete-support games, but will be important later in the following section. We note that the only place in the proof where the complete-support property of the game is used is for writing down the linear program above. In Appendix~\ref{app:extension} we explain the implications of the linear program \eqref{eq:linear_non_relaxed} to games with incomplete-support. In particular, in Appendix~\ref{app:two_player_extension} we show how to modify program \eqref{eq:linear_non_relaxed} for the case of two-player games with incomplete-support such that our result still holds. In Appendix \ref{app:general_extension} we show how one can slightly modify the parallel repetition procedure to derive a general (although modified) threshold theorem for any game.

Next, one can relax the linear program \eqref{eq:linear_non_relaxed} to the following:
\begin{subequations} \label{eq:linear_program}
\begin{align}
	\max \quad &\sum_{q,a} Q(q) R(q,a) \mathrm{O}(a|q)\nonumber \\
	\text{s.t.} \quad&   Q(q^i , q^{\bar{i}})\left[ \mathrm{O}(\circ,a^{\bar{i}}|q^i , q^{\bar{i}}) - \sum_{r^{i}} Q(r^{i}|q^{\bar{i}})\mathrm{O}(\circ,a^{\bar{i}} | r^{i},q^{\bar{i}})\right] \leq 0  &\forall i, q^i,q^{\bar{i}},a^{\bar{i}} \label{eq:relaxed} \\
	&\sum_{a} \mathrm{O}(a|q) = 1  &\forall q  \nonumber\\
	& \mathrm{O}(a|q) \geq 0 &\forall a,q \nonumber
\end{align}
\end{subequations}

To see that the relaxation of the non-signalling constraints \eqref{eq:linear1_ns} to the constraints \eqref{eq:relaxed} does not change the program, i.e., does not change the value of the optimal solution, assume there exists $i, q^i,q^{\bar{i}},a^{\bar{i}}$ for which
\[
	Q(q^i , q^{\bar{i}})\left[ \mathrm{O}(\circ,a^{\bar{i}}|q^i , q^{\bar{i}}) - \sum_{r^{i}} Q(r^{i}|q^{\bar{i}})\mathrm{O}(\circ,a^{\bar{i}} | r^{i},q^{\bar{i}})\right] < 0 \;.
\]
That is, $\mathrm{O}(\circ,a^{\bar{i}}|q^i , q^{\bar{i}})$ is smaller than the average $\sum_{r^{i}} Q(r^{i}|q^{\bar{i}})\mathrm{O}(\circ,a^{\bar{i}} | r^{i},q^{\bar{i}})$, and therefore there must be some $s^i$ for which $\mathrm{O}(\circ,a^{\bar{i}}|s^i , q^{\bar{i}})$ is larger than the average. Meaning, 
\[
	Q(s^i , q^{\bar{i}})\left[ \mathrm{O}(\circ,a^{\bar{i}}|s^i , q^{\bar{i}}) - \sum_{r^{i}} Q(r^{i}|q^{\bar{i}})\mathrm{O}(\circ,a^{\bar{i}} | r^{i},q^{\bar{i}})\right] > 0 \;,
\]
but this contradicts the constraints in \eqref{eq:relaxed}.  

The dual program of the primal \eqref{eq:linear_program} is given below. 
\begin{subequations}\label{eq:dual}
\begin{align}
	\min \quad & \sum_q z(q) \nonumber \\
	\text{s.t.} \quad & z(q) + \sum_i y_i(q,a^{\bar{i}})Q(q) - \sum_i \sum_{\substack{r |\\ {r^{\bar{i}}}=q^{\bar{i}}}} y_i(r,a^{\bar{i}}) Q(r) Q(r^{i}|q^{\bar{i}}) \geq Q(q)R(q,a) & \forall a,q \label{eq:dual_constraint} \\
	& y_i(q,a^{\bar{i}}) \geq 0 &\forall i,q,a^{\bar{i}} \nonumber
\end{align}
\end{subequations}

\subsection{Signalling measure}

Given a general strategy $\mathrm{O}_{A|Q}$ we would like to measure the amount of signalling from every player $i\in [m]$ to all the other players together. Following the linear program \eqref{eq:linear_program}, we quantify signalling using Definition \ref{def:signalling} below. 

In the definition we derive all the relevant conditional and marginal distributions from $\mathrm{O}_{AQ}$. Concretely we use the following notation: $\mathrm{O}(\circ,b^{\bar{i}}|s^i , s^{\bar{i}})=\sum_{b^i}\mathrm{O}(b^i,b^{\bar{i}}|s^i , s^{\bar{i}})$ as before, $\mathrm{O}(\circ,b^{\bar{i}},\circ, s^{\bar{i}})=\sum_{b^i,s^i}\mathrm{O}(b^i,b^{\bar{i}},s^i , s^{\bar{i}})$, and 
\[
	\mathrm{O}(\circ, s^i | b^{\bar{i}}, s^{\bar{i}})=\sum_{b^i} \mathrm{O}(b^i, s^i | b^{\bar{i}}, s^{\bar{i}}) = \sum_{b^i} \frac{\mathrm{O}(b^i, b^{\bar{i}}, s^i, s^{\bar{i}})}{\mathrm{O}(\circ, b^{\bar{i}}, \circ, s^{\bar{i}})} \;.
\]

\begin{defn}[Signalling measure]\label{def:signalling}
	The signalling of strategy $\mathrm{O}_{A|Q}$ in direction $i \rightarrow \bar{i}$ using outputs $b^{\bar{i}}$ and inputs $s^i , s^{\bar{i}}$ is defined as
	\begin{align}
		\mathrm{Sig}_{(i, b^{\bar{i}}, s^i , s^{\bar{i}} )}\left( \mathrm{O}\right) &= Q(s^i , s^{\bar{i}})\left[ \mathrm{O}(\circ,b^{\bar{i}}|s^i , s^{\bar{i}}) - \sum_{r^{i}} Q(r^{i}|s^{\bar{i}})\mathrm{O}(\circ,b^{\bar{i}} | r^{i},s^{\bar{i}})\right] \label{eq:sig1}\\
		&= \mathrm{O}(\circ,b^{\bar{i}},\circ, s^{\bar{i}}) \left[ \mathrm{O}(\circ, s^i | b^{\bar{i}}, s^{\bar{i}}) - Q(s^i |s^{\bar{i}}) \right] \label{eq:sig2} \;.
	\end{align}
\end{defn}

That is, we have a signalling measure for every $(i, b^{\bar{i}}, s^i , s^{\bar{i}} )$. If $\mathrm{Sig}_{(i, b^{\bar{i}}, s^i , s^{\bar{i}} )}\left( \mathrm{O}\right)>0$ we say that  the strategy is signalling in direction $(i, b^{\bar{i}}, s^i , s^{\bar{i}} )$. A negative signalling value, $\mathrm{Sig}_{(i, b^{\bar{i}}, s^i , s^{\bar{i}} )}\left( \mathrm{O}\right)<0$, is not relevant due to the inequality in Equation \eqref{eq:relaxed}.

The two forms of $\mathrm{Sig}_{(i, b^{\bar{i}}, s^i , s^{\bar{i}} )}\left( \mathrm{O}\right)$ given in equations \eqref{eq:sig1} and \eqref{eq:sig2} are equivalent according to Bayes' rule and they will be useful in different places in the proof. Equation \eqref{eq:sig2} for example allows us to quantify the amount by which input $s^i$ is more or less probable given $b^{\bar{i}}$, compared to the prior $Q(s^i |s^{\bar{i}})$. 

The following lemma shows that our measure of signalling is continuous. That is, if two strategies are close to one another according to Definition \ref{def:trace_distance} then their signalling values are also close. The proof is given in Appendix \ref{app:signalling_proofs}. 

\begin{lemma}[Continuity of $\mathrm{Sig}$]\label{lem:continuity}
	Let $\mathrm{O}_1$ and $\mathrm{O}_2$ be two one-game strategies such that $\big|\mathrm{O}_1-\mathrm{O}_2\big|_1 \leq \epsilon$. Then 
	\[
		\forall i, b^{\bar{i}}, s^i , s^{\bar{i}} \quad  \big|\mathrm{Sig}_{(i, b^{\bar{i}}, s^i , s^{\bar{i}} )}\left( \mathrm{O}_1 \right) - \mathrm{Sig}_{(i, b^{\bar{i}}, s^i , s^{\bar{i}} )}\left( \mathrm{O}_2 \right)\big|\leq 2\epsilon \;.
	\]
\end{lemma}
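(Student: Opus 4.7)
The plan is to rewrite the signalling measure in terms of the joint distribution $\mathrm{O}_{AQ}(a,q)=Q(q)\mathrm{O}(a|q)$, absorbing the $Q$ factors that appear explicitly in Equation \eqref{eq:sig1}, and then apply the triangle inequality. Concretely, since $Q(s^i,s^{\bar{i}})\mathrm{O}(\circ,b^{\bar{i}}|s^i,s^{\bar{i}}) = \mathrm{O}_{AQ}(\circ,b^{\bar{i}},s^i,s^{\bar{i}})$ and $Q(s^i,s^{\bar{i}})Q(r^i|s^{\bar{i}}) = Q(s^i|s^{\bar{i}})Q(r^i,s^{\bar{i}})$, the second sum in \eqref{eq:sig1} collapses to $Q(s^i|s^{\bar{i}})\,\mathrm{O}_{AQ}(\circ,b^{\bar{i}},\circ,s^{\bar{i}})$. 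So the signalling measure can be written compactly as
\[
	\mathrm{Sig}_{(i,b^{\bar{i}},s^i,s^{\bar{i}})}(\mathrm{O}) \;=\; \mathrm{O}_{AQ}(\circ,b^{\bar{i}},s^i,s^{\bar{i}}) - Q(s^i|s^{\bar{i}})\,\mathrm{O}_{AQ}(\circ,b^{\bar{i}},\circ,s^{\bar{i}}).
\]

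With this form the difference $\mathrm{Sig}(\mathrm{O}_1) - \mathrm{Sig}(\mathrm{O}_2)$ is a linear combination of differences of marginals of $\mathrm{O}_{1,AQ} - \mathrm{O}_{2,AQ}$, with coefficients $1$ and $-Q(s^i|s^{\bar{i}})$. I would then apply the triangle inequality and use the standard fact that marginalising can only decrease the $L_1$ norm: each marginal difference is bounded above by $\sum_{q,a}|\mathrm{O}_{1,AQ}(a,q) - \mathrm{O}_{2,AQ}(a,q)| = |\mathrm{O}_1 - \mathrm{O}_2|_1 \leq \epsilon$. Since $Q(s^i|s^{\bar{i}}) \leq 1$, adding the two contributions yields the advertised bound $2\epsilon$.

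There isn't really a hard step here; the only mild subtlety is the bookkeeping that justifies rewriting the second term of \eqref{eq:sig1} as $Q(s^i|s^{\bar{i}})\mathrm{O}_{AQ}(\circ,b^{\bar{i}},\circ,s^{\bar{i}})$, which requires pulling the $s^i$-dependent factor out of the $r^i$-sum via $Q(s^i,s^{\bar{i}})Q(r^i|s^{\bar{i}}) = Q(s^i|s^{\bar{i}})\,Q(r^i,s^{\bar{i}})$. Once this identity is in place, the proof reduces to one application of the triangle inequality and the monotonicity of $|\cdot|_1$ under marginalisation, so the bound $2\epsilon$ is tight up to a constant without any further case analysis on $i$, $b^{\bar{i}}$, $s^i$, or $s^{\bar{i}}$.
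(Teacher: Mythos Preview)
Your proof is correct. The rewriting of $\mathrm{Sig}$ as $\mathrm{O}_{AQ}(\circ,b^{\bar{i}},s^i,s^{\bar{i}}) - Q(s^i|s^{\bar{i}})\,\mathrm{O}_{AQ}(\circ,b^{\bar{i}},\circ,s^{\bar{i}})$ is exactly the content of Equation~\eqref{eq:sig2} once expanded, and the bound $|\mathrm{O}_{1,AQ}(S)-\mathrm{O}_{2,AQ}(S)|\le \sum_{a,q}|\mathrm{O}_{1,AQ}(a,q)-\mathrm{O}_{2,AQ}(a,q)|=|\mathrm{O}_1-\mathrm{O}_2|_1$ for any partial sum $S$ is immediate from the triangle inequality; combined with $Q(s^i|s^{\bar{i}})\le 1$ this gives the $2\epsilon$.

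The paper takes a somewhat different route: it proves the stronger statement
\[
\sum_{a^{\bar{i}},q}\big|\mathrm{Sig}_{(i,a^{\bar{i}},q^i,q^{\bar{i}})}(\mathrm{O}_1)-\mathrm{Sig}_{(i,a^{\bar{i}},q^i,q^{\bar{i}})}(\mathrm{O}_2)\big|\le 2\epsilon,
\]
i.e.\ the \emph{sum} over all directions $(a^{\bar{i}},q)$ is already bounded by $2\epsilon$, from which the pointwise lemma follows trivially. To do this it works directly with Equation~\eqref{eq:sig1}, first establishing $\sum_{a^{\bar{i}},q}Q(q)|\mathrm{O}_1(\circ,a^{\bar{i}}|q)-\mathrm{O}_2(\circ,a^{\bar{i}}|q)|\le\epsilon$ and then showing that both sums arising from the two terms of~\eqref{eq:sig1} collapse to this quantity. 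Your pointwise argument is shorter and perfectly adequate for the lemma as stated; the paper's summed version is strictly stronger but is not actually used elsewhere in the paper.
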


\subsection{Signalling tests}

In the following we will need an operational way of testing whether a one-game strategy $\mathrm{O}_{A|Q}$ is signalling. This can be done by using many copies of $\mathrm{O}_{A|Q}$ -- given data $\vec{q},\vec{a}$ which is distributed according to many independent copies of $\mathrm{O}_{AQ}$ it is possible to create an estimation of $\mathrm{O}_{A|Q}$,  $\mathrm{O}^{\mathrm{EST1}}_{A|Q}$, and then evaluate $\mathrm{Sig}_{(i, b^{\bar{i}}, s^i , s^{\bar{i}} )}\left( \mathrm{O}^{\mathrm{EST1}}\right)$. 

To formulate this process we first define an indicator function which will be used in the test. More precisely, for every tuple $(i,b^{\bar{i}}, s^i , s^{\bar{i}})$ we define a function $\mathcal{T}_{(i,b^{\bar{i}}, s^i , s^{\bar{i}})}:\mathcal{Q}^t\times\mathcal{A}^t\rightarrow\{0,1\}$:

\begin{equation}\label{eq:test_definition}
	\mathcal{T}_{(i,b^{\bar{i}}, s^i , s^{\bar{i}})} (\vec{q^t},\vec{a^t})= 
	\begin{cases}
    		1& \text{if } \mathrm{Sig}_{(i, b^{\bar{i}}, s^i , s^{\bar{i}} )}\left( \mathrm{O}^{\mathrm{EST1}}\right) \geq \zeta - 2\epsilon\\
    		0& \text{otherwise}
	\end{cases} \\
\end{equation}
where $\mathrm{O}^{\mathrm{EST1}}$ is estimated from $\vec{q^t},\vec{a^t}$ and $\zeta,\epsilon>0$ are parameters satisfying $\zeta \geq 7\epsilon$ and $\epsilon \leq \min_q Q(q)$.
See Figure \ref{fig:sig_ways} for a visualisation of the different signalling forms $(i,b^{\bar{i}}, s^i , s^{\bar{i}})$ and the signalling values considered in the test.

\begin{figure}
\begin{centering}
	\begin{tikzpicture}
	
		\draw (0,0) node {$\mathrm{Sig}_{(i, b^{\bar{i}}, s^i , s^{\bar{i}} )}\left( \mathrm{O}\right)$};
		\draw  [<->] (2,0) -- (12,0);
		
		\draw[fill] (3,0) circle [radius=0.025];
		\node [below] at (3,0) {0};
		
		\draw[fill] (6,0) circle [radius=0.025];
		\node [below] at (6,0) {$\zeta-2\epsilon$};
		
		\draw[fill] (7.5,0) circle [radius=0.025];
		\node [below] at (7.5,0) {$\zeta$};

		\draw[decorate,decoration={brace,amplitude=10pt}] (6.10,0.25) -- (12,0.25) node [midway,yshift=0.75cm] {$\mathrm{Sig}_{(i, b^{\bar{i}}, s^i , s^{\bar{i}} )}\left( \mathrm{O}\right)\geq \zeta - 2\epsilon$};

	\end{tikzpicture}
\par\end{centering}
\caption{The different forms of signalling: every $i$ and every $b^{\bar{i}}, s^i , s^{\bar{i}}$ define a line as in the figure. The value of $\mathrm{Sig}_{(i, b^{\bar{i}}, s^i , s^{\bar{i}} )}\left( \mathrm{O}\right)$ tells us exactly where we are on the line.} \label{fig:sig_ways}
\end{figure}

The following observation will be crucial later on:
\begin{remark}\label{rem:locality}
	According to Definition \ref{def:signalling}, in order to evaluate $\mathrm{Sig}_{(i, b^{\bar{i}}, s^i , s^{\bar{i}} )}\left( \mathrm{O}^{\mathrm{EST1}} \right)$ there is no need to know $\mathrm{O}^{\mathrm{EST1}}$ completely; only the marginals $\mathrm{O}^{\mathrm{EST1}}(\circ,b^{\bar{i}}|r^{i},s^{\bar{i}})$ for all $r^{i}$ are needed.  
\end{remark}

For every $(i,b^{\bar{i}}, s^i , s^{\bar{i}})$ we can now consider a signalling test. Given a strategy $\mathrm{P}_{\vec{A}|\vec{Q}}$ for the repeated game $G^{n}$ we sample $n$ tuples of questions $\vec{q}$ using the game distribution $Q^{\otimes n}$ and use them to get $n$ tuples of answers $\vec{a}$ which are distributed according to $\mathrm{P}_{\vec{A}|\vec{Q}}$. Finally, if $\mathcal{T}_{(i,b^{\bar{i}}, s^i , s^{\bar{i}})}(\vec{q^t},\vec{a^t})=1$ the test accepts, and otherwise rejects\footnote{As $\mathrm{P}_{\vec{A}|\vec{Q}}$ can be signalling between the different $n$ tuples of questions-answers one has to input all the questions before getting the test answers.}. 
Note that if a question $s$ does not appear in the test data $\vec{q^t}$ the test $\mathcal{T}_{(i,b^{\bar{i}}, s^i , s^{\bar{i}})}$ rejects by definition.

The following lemma shows that the test is reliable when applied to an i.i.d.\ strategy $\mathrm{O}_{A|Q}^{\otimes n}$. That is, if $\mathrm{Sig}_{(i, b^{\bar{i}}, s^i , s^{\bar{i}} )}\left( \mathrm{O}\right)\geq \zeta$ the test will detect it with high probability, i.e. the test will accept with high probability, and if $\mathrm{O}_{A|Q}$ is non-signalling then the test will reject with high probability. The proof can be found in Appendix \ref{app:signalling_proofs}.

\begin{lemma}[Reliable signalling test]\label{lem:reliable_test}
	Assume the players share an i.i.d.\ strategy $\mathrm{O}_{A|Q}^{\otimes n}$. For every $(i, b^{\bar{i}}, s^i , s^{\bar{i}} )$,
	\begin{enumerate}
		\item If $\mathrm{Sig}_{(i, b^{\bar{i}}, s^i , s^{\bar{i}} )}\left( \mathrm{O}\right)\geq \zeta$ then 
			$\mathrm{Pr}_{\vec{a},\vec{q}\sim\mathrm{O}^{\otimes n}_{AQ}} \left[ \mathcal{T}_{(i,b^{\bar{i}}, s^i , s^{\bar{i}})}\left( \vec{q^t},\vec{a^t} \right) = 1 \right] > 1- \delta$.
		\item If $\mathrm{Sig}_{(i, b^{\bar{i}}, s^i , s^{\bar{i}} )}\left( \mathrm{O}\right)= 0$ then 
			$\mathrm{Pr}_{\vec{a},\vec{q}\sim\mathrm{O}^{\otimes n}_{AQ}} \left[ \mathcal{T}_{(i,b^{\bar{i}}, s^i , s^{\bar{i}})}\left( \vec{q^t},\vec{a^t} \right) = 0 \right] > 1 - \delta$.
	\end{enumerate}
	where $\delta=\delta\left(\frac{n}{2},\epsilon\right)=\left(\frac{n}{2}+1\right)^{|\mathcal{A}|\cdot|\mathcal{Q}|-1}e^{-n\epsilon^2/4}$.
\end{lemma}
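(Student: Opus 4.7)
The plan is to combine the estimation accuracy bound (Lemma \ref{lem:chernoff}) with the continuity of the signalling measure (Lemma \ref{lem:continuity}). The former controls how well the empirical estimate $\mathrm{O}^{\mathrm{EST1}}$ tracks the true strategy $\mathrm{O}$, and the latter transfers this closeness of strategies to closeness of signalling values, which is exactly what the test is computing.

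Concretely, I would first apply Lemma \ref{lem:chernoff} with $l = n/2$ to the test data $(\vec{q^t}, \vec{a^t})$: under the i.i.d.\ assumption these are $n/2$ independent samples from $\mathrm{O}_{AQ}$, and so
\[
\Pr_{\vec{a}, \vec{q} \sim \mathrm{O}_{AQ}^{\otimes n}} \bigl[ \big|\mathrm{O}^{\mathrm{EST1}}_{A|Q} - \mathrm{O}_{A|Q}\big|_1 > \epsilon \bigr] \leq \delta\bigl(\tfrac{n}{2}, \epsilon\bigr) = \delta.
\]
Conditioning on the complementary good event $\big|\mathrm{O}^{\mathrm{EST1}} - \mathrm{O}\big|_1 \leq \epsilon$, Lemma \ref{lem:continuity} yields $\bigl|\mathrm{Sig}_{(i, b^{\bar{i}}, s^i, s^{\bar{i}})}(\mathrm{O}^{\mathrm{EST1}}) - \mathrm{Sig}_{(i, b^{\bar{i}}, s^i, s^{\bar{i}})}(\mathrm{O})\bigr| \leq 2\epsilon$ for every tuple. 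Both parts then follow from the threshold $\zeta - 2\epsilon$ used in the definition \eqref{eq:test_definition} of $\mathcal{T}$: in part (1), the hypothesis $\mathrm{Sig}(\mathrm{O}) \geq \zeta$ gives $\mathrm{Sig}(\mathrm{O}^{\mathrm{EST1}}) \geq \zeta - 2\epsilon$ on the good event, so the test accepts; in part (2), the hypothesis $\mathrm{Sig}(\mathrm{O}) = 0$ gives $\mathrm{Sig}(\mathrm{O}^{\mathrm{EST1}}) \leq 2\epsilon$, which by the standing assumption $\zeta \geq 7\epsilon$ is strictly below $\zeta - 2\epsilon \geq 5\epsilon$, so the test rejects.

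The proof is essentially mechanical once the two preceding lemmas are in hand; there is no substantive obstacle. The only design choice to keep track of is that the test threshold $\zeta - 2\epsilon$ was placed to leave $2\epsilon$ of slack for the continuity error on each side of $\zeta$, and the assumption $\zeta \geq 7\epsilon$ ensures those two sides are cleanly separated. One minor edge case is that if the specific question $s = (s^i, s^{\bar{i}})$ never appears in $\vec{q^t}$ the test rejects by definition; this is compatible with the analysis because $\epsilon \leq \min_q Q(q)$ forces such an event to also violate $\big|\mathrm{O}^{\mathrm{EST1}} - \mathrm{O}\big|_1 \leq \epsilon$ (since a missing question contributes at least $Q(s)$ to the averaged $\ell_1$ distance), so it is already absorbed into the failure event of Lemma \ref{lem:chernoff}.
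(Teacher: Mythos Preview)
Your proposal is correct and follows essentially the same approach as the paper: apply Lemma~\ref{lem:chernoff} with $l=n/2$ to control $\big|\mathrm{O}^{\mathrm{EST1}}-\mathrm{O}\big|_1$, then invoke the continuity Lemma~\ref{lem:continuity} to transfer this to the signalling value and compare against the test threshold $\zeta-2\epsilon$. If anything, your write-up is slightly more careful than the paper's, which leaves part~(2) as ``analogous'' and does not explicitly discuss the missing-question edge case or the role of the assumption $\zeta\geq 7\epsilon$.
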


Given a specific signalling test $\mathcal{T}_{(i,b^{\bar{i}}, s^i , s^{\bar{i}})}$ we define two relevant sets of one-game strategies:

\begin{align}
	&\sigma_{(i,b^{\bar{i}}, s^i , s^{\bar{i}})} = \left\{ \mathrm{O} \big| \forall \bar{\mathrm{O}} \text{ s.t. } |\mathrm{O} - \bar{\mathrm{O}}|_1 \leq \epsilon \;, \bar{\mathrm{O}} \text{ is $\zeta$ signalling or more in $(i,b^{\bar{i}}, s^i , s^{\bar{i}})$} \right\} \label{eq:sigma_def}\\
	&\Sigma_{(i,b^{\bar{i}}, s^i , s^{\bar{i}})} = \left\{ \mathrm{O} \big| \exists \bar{\mathrm{O}} \text{ s.t. } |\mathrm{O} - \bar{\mathrm{O}}|_1 \leq \epsilon \land  \mathrm{Pr}_{\vec{a},\vec{q}\sim\bar{\mathrm{O}}^{\otimes n}_{AQ}} \left[ \mathcal{T}_{(i,b^{\bar{i}}, s^i , s^{\bar{i}})}\left( \vec{q^t},\vec{a^t} \right) = 1 \right] > \delta \right\}
\end{align}

The following two lemmas allow us to address these sets also according to the signalling values of the relevant strategies (see also Figure~\ref{fig:sig_sets}).

\begin{lemma}\label{lem:small_sigma_signalling}
	For all $\mathrm{O}\notin\sigma_{(i,b^{\bar{i}}, s^i , s^{\bar{i}})}$, $\mathrm{Sig}_{(i, b^{\bar{i}}, s^i , s^{\bar{i}} )}\left( \mathrm{O} \right)<\zeta+2\epsilon$.
\end{lemma}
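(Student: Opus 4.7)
The plan is to unwind the definition of $\sigma_{(i,b^{\bar{i}}, s^i , s^{\bar{i}})}$ by negation and then apply the continuity lemma for $\mathrm{Sig}$. The definition says that $\mathrm{O}\in\sigma_{(i,b^{\bar{i}}, s^i , s^{\bar{i}})}$ iff \emph{every} strategy $\bar{\mathrm{O}}$ within $\epsilon$ of $\mathrm{O}$ (in the $|\cdot|_1$ distance from Definition \ref{def:trace_distance}) satisfies $\mathrm{Sig}_{(i, b^{\bar{i}}, s^i , s^{\bar{i}})}(\bar{\mathrm{O}})\geq \zeta$. Negating this, $\mathrm{O}\notin\sigma_{(i,b^{\bar{i}}, s^i , s^{\bar{i}})}$ is equivalent to the existence of at least one witness $\bar{\mathrm{O}}$ with $|\mathrm{O}-\bar{\mathrm{O}}|_1\leq \epsilon$ and $\mathrm{Sig}_{(i, b^{\bar{i}}, s^i , s^{\bar{i}})}(\bar{\mathrm{O}})<\zeta$.

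Given such a witness $\bar{\mathrm{O}}$, I would simply invoke Lemma \ref{lem:continuity} with $\mathrm{O}_1=\mathrm{O}$ and $\mathrm{O}_2=\bar{\mathrm{O}}$ to conclude that
\[
\bigl|\mathrm{Sig}_{(i, b^{\bar{i}}, s^i , s^{\bar{i}})}(\mathrm{O})-\mathrm{Sig}_{(i, b^{\bar{i}}, s^i , s^{\bar{i}})}(\bar{\mathrm{O}})\bigr|\leq 2\epsilon,
\]
and therefore $\mathrm{Sig}_{(i, b^{\bar{i}}, s^i , s^{\bar{i}})}(\mathrm{O})\leq \mathrm{Sig}_{(i, b^{\bar{i}}, s^i , s^{\bar{i}})}(\bar{\mathrm{O}})+2\epsilon<\zeta+2\epsilon$, which is precisely the claim.

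There is essentially no obstacle: the lemma is a direct logical consequence of the definition of $\sigma_{(i,b^{\bar{i}}, s^i , s^{\bar{i}})}$ together with the $2\epsilon$ Lipschitz bound for $\mathrm{Sig}$ established in Lemma \ref{lem:continuity}. The only thing worth being slightly careful about is correctly negating the universal quantifier in the definition of $\sigma_{(i,b^{\bar{i}}, s^i , s^{\bar{i}})}$, so that one obtains a concrete $\bar{\mathrm{O}}$ to which continuity can be applied; after that the proof is a one-line estimate.
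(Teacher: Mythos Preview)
Your proposal is correct and is exactly the argument the paper has in mind: the paper simply states that Lemma~\ref{lem:small_sigma_signalling} ``follows right away from Lemma~\ref{lem:continuity} and the definition of $\sigma_{(i,b^{\bar{i}}, s^i , s^{\bar{i}})}$,'' and your write-up spells out precisely that one-line deduction.
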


\begin{lemma}\label{lem:Sigma_signalling}
	Let $\nu > 0$ be any parameter such that $\nu < \zeta - 6\epsilon$. Then  
	\[
	 	\forall \mathrm{O}\in\Sigma_{(i,b^{\bar{i}}, s^i , s^{\bar{i}})}, \quad \mathrm{Sig}_{(i, b^{\bar{i}}, s^i , s^{\bar{i}} )}\left( \mathrm{O}\right) > \nu \;.
	\]
\end{lemma}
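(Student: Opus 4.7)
The plan is to unpack the definition of $\Sigma_{(i,b^{\bar{i}}, s^i , s^{\bar{i}})}$ and then chain together two applications of the continuity lemma (Lemma~\ref{lem:continuity}) with the Sanov-type estimate (Lemma~\ref{lem:chernoff}). Concretely, if $\mathrm{O}\in\Sigma_{(i,b^{\bar{i}}, s^i , s^{\bar{i}})}$, there is some $\bar{\mathrm{O}}$ within distance $\epsilon$ of $\mathrm{O}$ whose i.i.d.\ product makes the signalling test accept with probability greater than $\delta$. I would then show, by contrapositive, that $\mathrm{Sig}_{(i, b^{\bar{i}}, s^i , s^{\bar{i}} )}\bigl(\bar{\mathrm{O}}\bigr) \geq \zeta - 4\epsilon$, from which another application of continuity transfers the bound to $\mathrm{O}$, at the cost of an additional $2\epsilon$.

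For the contrapositive, I would assume $\mathrm{Sig}_{(i, b^{\bar{i}}, s^i , s^{\bar{i}} )}(\bar{\mathrm{O}}) < \zeta - 4\epsilon$ and argue that whenever the estimated strategy $\mathrm{O}^{\mathrm{EST1}}$ built from $\vec{q^t},\vec{a^t}$ satisfies $\big|\mathrm{O}^{\mathrm{EST1}}-\bar{\mathrm{O}}\big|_1\leq \epsilon$, Lemma~\ref{lem:continuity} forces
\[
\mathrm{Sig}_{(i, b^{\bar{i}}, s^i , s^{\bar{i}} )}\bigl(\mathrm{O}^{\mathrm{EST1}}\bigr) \leq \mathrm{Sig}_{(i, b^{\bar{i}}, s^i , s^{\bar{i}} )}\bigl(\bar{\mathrm{O}}\bigr) + 2\epsilon < \zeta - 2\epsilon,
\]
so the indicator $\mathcal{T}_{(i,b^{\bar{i}}, s^i , s^{\bar{i}})}$ defined in~\eqref{eq:test_definition} returns $0$. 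Hence the test can only accept when the estimate is $\epsilon$-far from $\bar{\mathrm{O}}$ in the distance of Definition~\ref{def:trace_distance}, and Lemma~\ref{lem:chernoff} applied with $l=n/2$ bounds the probability of that event by $\delta$. This contradicts the assumption that the acceptance probability exceeds $\delta$, proving the intermediate bound $\mathrm{Sig}_{(i, b^{\bar{i}}, s^i , s^{\bar{i}} )}(\bar{\mathrm{O}}) \geq \zeta - 4\epsilon$.

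Finally, using $\big|\mathrm{O} - \bar{\mathrm{O}}\big|_1 \leq \epsilon$ and one more invocation of Lemma~\ref{lem:continuity} yields
\[
\mathrm{Sig}_{(i, b^{\bar{i}}, s^i , s^{\bar{i}} )}(\mathrm{O}) \geq \mathrm{Sig}_{(i, b^{\bar{i}}, s^i , s^{\bar{i}} )}(\bar{\mathrm{O}}) - 2\epsilon \geq \zeta - 6\epsilon > \nu,
\]
where the last strict inequality uses the hypothesis $\nu < \zeta - 6\epsilon$. The only subtle point, and the closest thing to an obstacle, is keeping the strict/non-strict inequalities aligned with the definition of $\mathcal{T}_{(i,b^{\bar{i}}, s^i , s^{\bar{i}})}$ (which accepts on $\geq \zeta - 2\epsilon$) so that the $4\epsilon$ gap in the intermediate bound is enough to guarantee rejection on good estimates; otherwise the argument is essentially a mechanical composition of the continuity and estimation lemmas already proved.
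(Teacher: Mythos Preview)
Your proposal is correct and follows essentially the same route as the paper's proof: two applications of Lemma~\ref{lem:continuity} combined with one application of Lemma~\ref{lem:chernoff}, via the witness $\bar{\mathrm{O}}$ supplied by the definition of $\Sigma_{(i,b^{\bar{i}}, s^i , s^{\bar{i}})}$. The only cosmetic difference is that the paper phrases the whole argument as a single contradiction starting from $\mathrm{Sig}(\mathrm{O})\leq\nu$ and then transferring to $\bar{\mathrm{O}}$, whereas you first establish $\mathrm{Sig}(\bar{\mathrm{O}})\geq\zeta-4\epsilon$ directly and then transfer back to $\mathrm{O}$; the logical content and the handling of the strict/non-strict inequalities at the $\zeta-2\epsilon$ threshold are the same.
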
 

Lemma \ref{lem:small_sigma_signalling} follows right away from Lemma \ref{lem:continuity} and the definition of $\sigma_{(i,b^{\bar{i}}, s^i , s^{\bar{i}})}$. Lemma \ref{lem:Sigma_signalling} is proven in Appendix \ref{app:signalling_proofs}.

\begin{figure}
\begin{centering}
	\begin{tikzpicture}
	
		\draw (0,0) node {$\mathrm{Sig}_{(i, b^{\bar{i}}, s^i , s^{\bar{i}} )}\left( \mathrm{O}\right)$};
		\draw  [<->] (2,0) -- (12,0);
		
		\draw[fill] (3,0) circle [radius=0.025];
		\node [below] at (3,0) {0};
		
		\draw[fill] (4.5,0) circle [radius=0.025];
		\node [below] at (4.5,0) {$\nu$};
		
		\draw[fill] (7.5,0) circle [radius=0.025];
		\node [below] at (7.5,0) {$\zeta$};
		
		\draw[fill] (9,0) circle [radius=0.025];
		\node [below] at (9,0) {$\zeta+2\epsilon$};

		\draw[decorate,decoration={brace,mirror,amplitude=10pt}] (4.6,-0.5) -- (12,-0.5) node [midway,yshift=-0.75cm] {$\mathrm{Sig}$ of $\mathrm{O}\in\Sigma_{(i,b^{\bar{i}}, s^i , s^{\bar{i}})}$};
		\draw[decorate,decoration={brace,amplitude=10pt}] (2.10,0.25) -- (8.90,0.25) node [midway,yshift=0.75cm] {$\mathrm{Sig}$ of $\mathrm{O}\notin\sigma_{(i,b^{\bar{i}}, s^i , s^{\bar{i}})}$};
		\draw[decorate,decoration={brace,mirror,amplitude=10pt}] (3.10,-0.5) -- (4.5,-0.5) node [midway,yshift=-0.75cm] {constant gap};
	
	\end{tikzpicture}
\par\end{centering}
\caption{Visualization of the signalling values which are relevant for Lemma \ref{lem:Sigma_signalling} and the sets $\sigma_{(i,b^{\bar{i}}, s^i , s^{\bar{i}})},\Sigma_{(i,b^{\bar{i}}, s^i , s^{\bar{i}})}$.} \label{fig:sig_sets}
\end{figure}

\section{Using de Finetti strategies}\label{sec:deFinetti}

In this section we start analysing the relation between the test questions-answers $\vec{q^t},\vec{a^t}$ and the game questions-answers $\vec{q^g},\vec{a^g}$ in one instance of the repeated game $G^{n}$ using a strategy $\mathrm{P}_{\vec{A}|\vec{Q}}$. More precisely, we denote the one-game strategy which is estimated from  $\vec{q^g},\vec{a^g}$ by $\mathrm{O}^{\mathrm{EST2}}_{A|Q}$, and we are interested in knowing what is the probability that $\mathrm{O}^{\mathrm{EST2}}_{A|Q}\in \Sigma_{(i,b^{\bar{i}}, s^i , s^{\bar{i}})}$ or $\mathrm{O}^{\mathrm{EST2}}_{A|Q}\in \sigma_{(i,b^{\bar{i}}, s^i , s^{\bar{i}})}$ given the result of $\mathcal{T}_{(i,b^{\bar{i}}, s^i , s^{\bar{i}})} (\vec{q^t},\vec{a^t})$.

We first do this for any i.i.d.\ strategy and then extend the results to any permutation-invariant strategy using a de Finetti reduction \cite{arnon2013finetti}.

\subsection{de Finetti strategies}

As mentioned in Section \ref{sec:intro}, de Finetti strategies are strategies that can be written as a convex combination of i.i.d.\ strategies. Formally,
\begin{defn}[de Finetti strategy]
	A de Finetti strategy $\tau_{\vec{A}|\vec{Q}}$ is a strategy of the form
	\[
		\tau_{\vec{A}|\vec{Q}} = \int \mathrm{O}_{A|Q}^{\otimes n} \mathrm{d}\mathrm{O}_{A|Q}\;,
	\]
	where $\mathrm{d}\mathrm{O}_{A|Q}$ is some measure on the space of one-game strategies.
\end{defn}

In the following lemma we are interested in the relation between the test questions-answers $\vec{q^t},\vec{a^t}$ and the game questions-answers $\vec{q^g},\vec{a^g}$ in one instance of the repeated game $G^{n}$. For i.i.d.\ strategies (and therefore also for de Finetti strategies) this is simple: $\vec{q^t},\vec{a^t}$ and $\vec{q^g},\vec{a^g}$ are independent of each other and conditioning on a property of one of them does not affect the other.

\begin{lemma} \label{lem:de_finetti_prop}
	For a de Finetti strategy $\tau_{\vec{A}|\vec{Q}}$ and every $(i,b^{\bar{i}}, s^i , s^{\bar{i}})$
	\begin{enumerate}
		\item $\mathrm{Pr}_{\vec{a},\vec{q}\sim\tau_{\vec{A}\vec{Q}}} \left[\mathcal{T}_{(i,b^{\bar{i}}, s^i , s^{\bar{i}})} (\vec{q^t},\vec{a^t})=1 \land \mathrm{O}^{\mathrm{EST2}}_{A|Q} \notin \Sigma_{(i,b^{\bar{i}}, s^i , s^{\bar{i}})}  \right] \leq 2\delta $
		\item $\mathrm{Pr}_{\vec{a},\vec{q}\sim\tau_{\vec{A}\vec{Q}}} \left[ \mathcal{T}_{(i,b^{\bar{i}}, s^i , s^{\bar{i}})} (\vec{q^t},\vec{a^t})=0 \land \mathrm{O}^{\mathrm{EST2}}_{A|Q} \in \sigma_{(i,b^{\bar{i}}, s^i , s^{\bar{i}})}  \right] \leq 2\delta $
	\end{enumerate}
\end{lemma}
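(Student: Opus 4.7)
My plan is to first reduce both statements to the case of a single i.i.d.\ strategy $\mathrm{O}^{\otimes n}_{A|Q}$, then use Sanov's bound (Lemma~\ref{lem:chernoff}) on the game-data estimator $\mathrm{O}^{\mathrm{EST2}}_{A|Q}$ together with the reliable signalling test (Lemma~\ref{lem:reliable_test}). The reduction is immediate: by definition $\tau_{\vec{A}|\vec{Q}} = \int \mathrm{O}^{\otimes n}_{A|Q}\,\mathrm{d}\mathrm{O}_{A|Q}$, so for any event $E$ on the questions and answers, $\mathrm{Pr}_{\vec{a},\vec{q}\sim \tau_{\vec{A}\vec{Q}}}[E] = \int \mathrm{Pr}_{\vec{a},\vec{q}\sim\mathrm{O}^{\otimes n}_{AQ}}[E]\,\mathrm{d}\mathrm{O}_{A|Q}$, and it suffices to establish the bound uniformly for every i.i.d.\ $\mathrm{O}_{A|Q}$.

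For part~1, I would split on whether $\mathrm{Pr}_{\vec{a},\vec{q}\sim\mathrm{O}^{\otimes n}_{AQ}}\left[\mathcal{T}_{(i,b^{\bar{i}}, s^i , s^{\bar{i}})}(\vec{q^t},\vec{a^t}) = 1\right] \leq \delta$ or $> \delta$. In the former case the bound is trivial. In the latter case, taking $\bar{\mathrm{O}} = \mathrm{O}$ in the definition of $\Sigma_{(i,b^{\bar{i}}, s^i , s^{\bar{i}})}$ shows that $\mathrm{O}_{A|Q}\in \Sigma_{(i,b^{\bar{i}}, s^i , s^{\bar{i}})}$ and, more usefully, that any strategy within $\epsilon$-distance of $\mathrm{O}_{A|Q}$ also lies in $\Sigma_{(i,b^{\bar{i}}, s^i , s^{\bar{i}})}$. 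Contrapositively, $\mathrm{O}^{\mathrm{EST2}}_{A|Q}\notin \Sigma_{(i,b^{\bar{i}}, s^i , s^{\bar{i}})}$ forces $|\mathrm{O}^{\mathrm{EST2}}_{A|Q}-\mathrm{O}_{A|Q}|_1>\epsilon$, which by Lemma~\ref{lem:chernoff} applied to the $n/2$ i.i.d.\ game samples happens with probability at most $\delta$.

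For part~2, I would split on whether $\mathrm{Sig}_{(i, b^{\bar{i}}, s^i , s^{\bar{i}} )}(\mathrm{O}_{A|Q})\geq \zeta$ or not. In the first case Lemma~\ref{lem:reliable_test}(1) yields $\mathrm{Pr}\left[\mathcal{T}_{(i,b^{\bar{i}}, s^i , s^{\bar{i}})}(\vec{q^t},\vec{a^t})=0\right]< \delta$, bounding the joint event. In the second case the $\epsilon$-robust definition of $\sigma_{(i,b^{\bar{i}}, s^i , s^{\bar{i}})}$ (again taking $\bar{\mathrm{O}} = \mathrm{O}$) implies that whenever $|\mathrm{O}^{\mathrm{EST2}}_{A|Q}-\mathrm{O}_{A|Q}|_1 \leq \epsilon$ one must have $\mathrm{O}^{\mathrm{EST2}}_{A|Q}\notin \sigma_{(i,b^{\bar{i}}, s^i , s^{\bar{i}})}$; hence $\mathrm{O}^{\mathrm{EST2}}_{A|Q}\in \sigma_{(i,b^{\bar{i}}, s^i , s^{\bar{i}})}$ again forces the estimator to be $\epsilon$-far from $\mathrm{O}_{A|Q}$, an event of probability at most $\delta$ by Lemma~\ref{lem:chernoff}. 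Integrating both bounds against $\mathrm{d}\mathrm{O}_{A|Q}$ gives the claimed $2\delta$ (in fact one obtains $\delta$, so the stated factor of two is comfortable slack).

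I do not foresee a significant technical obstacle: the real work is done by the $\epsilon$-enlargements built into $\sigma$ and $\Sigma$, which are designed precisely to absorb the Sanov fluctuations of $\mathrm{O}^{\mathrm{EST2}}_{A|Q}$ around the underlying i.i.d.\ strategy $\mathrm{O}_{A|Q}$, and by the fact that under an i.i.d.\ (or de Finetti) strategy the test half and the game half of the data are independent, so that the test outcome and the event $\{\mathrm{O}^{\mathrm{EST2}}_{A|Q}\in \sigma\}$ or $\{\mathrm{O}^{\mathrm{EST2}}_{A|Q}\notin \Sigma\}$ may be handled by a simple union bound without any conditioning argument. The one subtlety worth flagging is that Lemma~\ref{lem:reliable_test} pertains to the test estimator $\mathrm{O}^{\mathrm{EST1}}_{A|Q}$ while the set memberships pertain to $\mathrm{O}^{\mathrm{EST2}}_{A|Q}$; the case-splits above are precisely what bridges the two.
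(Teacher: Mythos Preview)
Your proposal is correct and follows essentially the same approach as the paper: reduce to a single i.i.d.\ component, then combine Lemma~\ref{lem:chernoff} with the $\epsilon$-enlarged definitions of $\sigma$ and $\Sigma$ and Lemma~\ref{lem:reliable_test}. The only difference is in part~2, where the paper splits on whether $\mathrm{Pr}_{\vec{a},\vec{q}\sim\mathrm{O}^{\otimes n}_{AQ}}[\mathrm{O}^{\mathrm{EST2}}_{A|Q}\in\sigma_{(i,b^{\bar{i}}, s^i , s^{\bar{i}})}]\leq\delta$ (and, in the complementary case, deduces $\mathrm{Sig}(\mathrm{O})\geq\zeta$ from Lemma~\ref{lem:chernoff} before invoking Lemma~\ref{lem:reliable_test}), whereas you split directly on $\mathrm{Sig}(\mathrm{O})\geq\zeta$ versus $\mathrm{Sig}(\mathrm{O})<\zeta$; both routes yield the same bound (indeed $\delta$ rather than $2\delta$, as you observe), and your split is arguably the more direct one.
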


The proof of this lemma (given in Appendix \ref{app:proofs_de_finetti}) follows from Sanov's theorem stated in Lemma~\ref{lem:chernoff}. Intuitively, if $\mathcal{T}_{(i,b^{\bar{i}}, s^i , s^{\bar{i}})} (\vec{q^t},\vec{a^t})=1$ then $\mathrm{O}^{\mathrm{EST1}}$ is signalling and therefore so should $\mathrm{O}^{\mathrm{EST2}}$ be, and vice versa. 

\subsection{de Finetti reductions}

Of course, considering just de Finetti strategies is not interesting by itself. Luckily, we can now use a de Finetti reduction to extend the results of the previous section to any permutation-invariant strategy, where the permutation is performed on the questions-answers pairs (we do not permute the players). As the repeated game $G^{n}$ is by itself permutation invariant we can restrict the strategies of the players to be permutation invariant without loss of generality. 

\begin{defn}[Permutation invariance]\label{def:permutation}
	Given a strategy $\mathrm{P}_{\vec{A}|\vec{Q}}$ and a permutation $\pi$ of the questions and answers we denote by $\mathrm{P}_{\vec{A}|\vec{Q}}\circ\pi$ the strategy which is defined by 
	\[
		\forall \vec{a},\vec{q} \quad \left(\mathrm{P}_{\vec{A}|\vec{Q}}\circ\pi \right) (\vec{a}|\vec{q})=\mathrm{P}_{\vec{A}|\vec{Q}}(\pi(\vec{a})|\pi(\vec{q})) \;.
	\]
	$\mathrm{P}_{\vec{A}|\vec{Q}}$ is permutation invariant if for any permutation $\pi$, $\mathrm{P}_{\vec{A}|\vec{Q}} = \mathrm{P}_{\vec{A}|\vec{Q}}\circ\pi$.  
\end{defn}

The following lemma shows that we can restrict our analysis to permutation-invariant strategies without loss of generality. 

\begin{lemma}\label{lem:prem_wlog}
	For every strategy $\mathrm{P}_{\vec{A}|\vec{Q}}$ for the repeated game $G^{n}$ there exists a permutation-invariant strategy $\tilde{\mathrm{P}}_{\vec{A}|\vec{Q}}$ such that $w\left( \mathrm{P}_{\vec{A}|\vec{Q}}\right) =w\left( \tilde{\mathrm{P}}_{\vec{A}|\vec{Q}}\right)$.
\end{lemma}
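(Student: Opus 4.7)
The plan is to prove the lemma by the standard symmetrization trick: given any strategy $\mathrm{P}_{\vec{A}|\vec{Q}}$, define
\[
	\tilde{\mathrm{P}}_{\vec{A}|\vec{Q}} = \frac{1}{n!}\sum_{\pi} \mathrm{P}_{\vec{A}|\vec{Q}}\circ \pi ,
\]
where the sum runs over all $n!$ permutations of the $n$ question-answer slots (not over permutations of players). The three things to check are: (i) $\tilde{\mathrm{P}}_{\vec{A}|\vec{Q}}$ is still a valid conditional probability distribution; (ii) it is permutation invariant in the sense of Definition \ref{def:permutation}; (iii) it has the same winning probability in $G^n$ as $\mathrm{P}_{\vec{A}|\vec{Q}}$.

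For (i), each $\mathrm{P}_{\vec{A}|\vec{Q}}\circ\pi$ is non-negative and, for every fixed $\vec{q}$, sums to one over $\vec{a}$, so the convex combination $\tilde{\mathrm{P}}_{\vec{A}|\vec{Q}}$ inherits both properties. For (ii), for any permutation $\sigma$ the map $\pi\mapsto \pi\circ\sigma$ is a bijection on the symmetric group, so
\[
	\tilde{\mathrm{P}}_{\vec{A}|\vec{Q}}\circ\sigma = \frac{1}{n!}\sum_\pi \mathrm{P}_{\vec{A}|\vec{Q}}\circ(\pi\circ\sigma) = \tilde{\mathrm{P}}_{\vec{A}|\vec{Q}}.
\]

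For (iii), I would first record the observation that the repeated game $G^n$ is itself invariant under joint permutation of the $n$ question-answer slots: the question distribution factorises as $Q^{\otimes n}$ and the success criterion (the fraction of the $n$ individual instances accepted by $R$, as used in Theorem~\ref{thm:final_threshold_theorem}) depends only on the multiset of question-answer pairs $(\vec{q}_j,\vec{a}_j)$ and hence satisfies $R^{(n)}(\pi(\vec{q}),\pi(\vec{a}))=R^{(n)}(\vec{q},\vec{a})$. Then, for each fixed $\pi$, relabelling $\vec{q}\mapsto\pi^{-1}(\vec{q})$ and $\vec{a}\mapsto\pi^{-1}(\vec{a})$ in the sum defining $w(\mathrm{P}_{\vec{A}|\vec{Q}}\circ\pi)$ shows that it equals $w(\mathrm{P}_{\vec{A}|\vec{Q}})$; averaging over $\pi$ and using linearity of $w$ in the strategy yields $w(\tilde{\mathrm{P}}_{\vec{A}|\vec{Q}})=w(\mathrm{P}_{\vec{A}|\vec{Q}})$.

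There is no real obstacle here; the only point that deserves care is making explicit that the symmetrisation permutes question-answer slots jointly (and does not touch the within-tuple indexing over players), so that a non-signalling strategy remains non-signalling in the desired sense. This is automatic from the definition of $\mathrm{P}\circ\pi$ because $\pi$ acts diagonally on the pair $(\vec{q},\vec{a})$, so marginals over any subset of players are transformed uniformly across the $n$ slots.
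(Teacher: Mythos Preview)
Your proof is correct and follows essentially the same approach as the paper: define $\tilde{\mathrm{P}}_{\vec{A}|\vec{Q}}$ as the average over all permutations, use linearity of $w$, and use the permutation invariance of $G^n$ (i.i.d.\ questions, per-coordinate winning condition) to conclude $w(\mathrm{P}\circ\pi)=w(\mathrm{P})$. You are in fact more thorough than the paper, which does not explicitly verify that $\tilde{\mathrm{P}}_{\vec{A}|\vec{Q}}$ is a valid conditional distribution or that it is permutation invariant, and does not comment on the preservation of the non-signalling property.
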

\begin{proof}
	Given $\mathrm{P}_{\vec{A}|\vec{Q}}$ define its permutation-invariant version to be 
	\[
		\tilde{\mathrm{P}}_{\vec{A}|\vec{Q}} = \frac{1}{n!} \sum_{\pi}  \mathrm{P}_{\vec{A}|\vec{Q}}\circ\pi \;.
	\]
	The winning probability of the game is linear in the strategy, therefore we have 
	\begin{equation}\label{eq:winning_prob_linear}
		w\left( \tilde{\mathrm{P}}_{\vec{A}|\vec{Q}} \right) = w\left( \frac{1}{n!} \sum_{\pi}  \mathrm{P}_{\vec{A}|\vec{Q}}\circ\pi \right) =  \frac{1}{n!} \sum_{\pi} w\left( \mathrm{P}_{\vec{A}|\vec{Q}}\circ\pi \right) \;.
	\end{equation}
	
	Since the tuples of questions in the repeated game are chosen in an i.i.d.\ manner and the winning condition is checked for each tuple separately, the winning probability is indifferent to the ordering of the questions-answers pairs. As $\pi$ permutes the tuples of questions and answers together we have $w\left( \mathrm{P}_{\vec{A}|\vec{Q}}\circ\pi \right) = w\left( \mathrm{P}_{\vec{A}|\vec{Q}} \right)$.  
	
	Combining this with Equation \eqref{eq:winning_prob_linear} we get $w\left( \tilde{\mathrm{P}}_{\vec{A}|\vec{Q}} \right)=w\left( \mathrm{P}_{\vec{A}|\vec{Q}} \right)$. 
\end{proof}

\begin{lemma}[de Finetti reduction for conditional probability distributions \cite{arnon2013finetti}] \label{lem:deFinetti_reduction}
	Let $c = (n+1)^{|\mathcal{Q}|(|\mathcal{A}|-1)}$. There exists a de Finetti strategy $\tau_{\vec{A}|\vec{Q}}$ such that for every permutation-invariant strategy $\mathrm{P}_{\vec{A}|\vec{Q}}$  
	\[
		\forall \vec{a},\vec{q} \quad \mathrm{P}_{\vec{A}|\vec{Q}} (\vec{a}|\vec{q}) \leq c \cdot  \tau_{\vec{A}|\vec{Q}} (\vec{a}|\vec{q}) \;.
	\]
\end{lemma}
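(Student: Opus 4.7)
The plan is to construct $\tau_{\vec{A}|\vec{Q}}$ explicitly and compare it pointwise with $\mathrm{P}_{\vec{A}|\vec{Q}}$ using a simple type-counting argument. The key observation is that permutation invariance of $\mathrm{P}_{\vec{A}|\vec{Q}}$ in the sense of Definition~\ref{def:permutation} forces $\mathrm{P}(\vec{a}|\vec{q})$ to depend only on the joint type of $(\vec{q},\vec{a})$: if two pairs $(\vec{q},\vec{a})$ and $(\vec{q},\vec{a}')$ have the same counts $n_{qa}=|\{j:(\vec{q}_j,\vec{a}_j)=(q,a)\}|$ then some permutation in the stabilizer of $\vec{q}$ sends one to the other, so the two conditional probabilities coincide. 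Since, for a fixed $\vec{q}$, exactly $\prod_q \binom{n_q}{(n_{qa})_a}$ vectors $\vec{a}$ share a joint type $(n_{qa})_{q,a}$ (with $n_q=\sum_a n_{qa}$) and their common probability mass must sum to at most $1$, I obtain the uniform upper bound
\[
\mathrm{P}_{\vec{A}|\vec{Q}}(\vec{a}|\vec{q}) \;\le\; \prod_{q\in\mathcal{Q}} \frac{\prod_a n_{qa}!}{n_q!}.
\]

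Next I define $\tau_{\vec{A}|\vec{Q}} := \int \mathrm{O}_{A|Q}^{\otimes n}\,d\mu(\mathrm{O}_{A|Q})$, identifying a one-game strategy $\mathrm{O}_{A|Q}$ with a point of $\prod_{q\in\mathcal{Q}}\Delta_{\mathcal{A}}$ and taking $\mu$ to be the product of normalized Lebesgue measures on the $|\mathcal{Q}|$ simplex factors; this is manifestly a de Finetti strategy. Because $\mathrm{O}^{\otimes n}(\vec{a}|\vec{q})=\prod_{q,a} \mathrm{O}(a|q)^{n_{qa}}$ factorizes across questions, the integral splits as a product and I evaluate each factor with the standard Dirichlet integral:
\[
\tau_{\vec{A}|\vec{Q}}(\vec{a}|\vec{q}) \;=\; \prod_{q\in\mathcal{Q}} \int_{\Delta_{\mathcal{A}}} \prod_a p(a)^{n_{qa}}\,dp \;=\; \prod_{q\in\mathcal{Q}} \frac{(|\mathcal{A}|-1)!\,\prod_a n_{qa}!}{(n_q+|\mathcal{A}|-1)!}.
\]

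Combining the two displays gives
\[
\frac{\mathrm{P}_{\vec{A}|\vec{Q}}(\vec{a}|\vec{q})}{\tau_{\vec{A}|\vec{Q}}(\vec{a}|\vec{q})} \;\le\; \prod_{q\in\mathcal{Q}} \binom{n_q+|\mathcal{A}|-1}{|\mathcal{A}|-1},
\]
and the elementary inequality $\binom{m+k}{k}\le(m+1)^k$, applied with $k=|\mathcal{A}|-1$ and $m=n_q\le n$, bounds each of the $|\mathcal{Q}|$ factors by $(n+1)^{|\mathcal{A}|-1}$, yielding exactly $c=(n+1)^{|\mathcal{Q}|(|\mathcal{A}|-1)}$. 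I do not foresee any real obstacle: $\tau_{\vec{A}|\vec{Q}}$ is a valid conditional distribution because it is a convex mixture of such, and the only substantive computation is the one-line Dirichlet integral. The only care required is to track the product structure across the $|\mathcal{Q}|$ independent answer simplices, which is precisely what produces the factor of $|\mathcal{Q}|$ in the exponent of $c$.
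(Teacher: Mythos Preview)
Your argument is correct. The paper does not give its own proof of this lemma; it is quoted verbatim from \cite{arnon2013finetti}, and the text immediately following the statement remarks that ``the de Finetti strategy $\tau_{\vec{A}|\vec{Q}}$ is constructed explicitly in \cite{arnon2013finetti} but the specific construction is not relevant for our purposes.'' What you have written is precisely that explicit construction together with the type-counting bound, so there is nothing further to compare against here.
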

The de Finetti strategy $\tau_{\vec{A}|\vec{Q}}$ is constructed explicitly in \cite{arnon2013finetti} but the specific construction is not relevant for our purposes. In some special cases the constant $c$ in Lemma \ref{lem:deFinetti_reduction} can also be made smaller by taking into account symmetries of the game $G$ itself. For further details see \cite{arnon2013finetti}.

We now use the de Finetti reduction to show that the properties proven in Lemma \ref{lem:de_finetti_prop} for the de Finetti strategy also hold true for permutation-invariant strategies, although with slightly weaker parameters. Concretely, the bound of $2\delta$ in Lemma \ref{lem:de_finetti_prop} is replaced by $2c\delta$ in the following lemma. Nevertheless, the bound still decreases exponentially fast with the number of repetitions\footnote{One would have liked to apply a similar argument to the winning probability of the repeated game right away. That is, $w(\mathrm{P}_{\vec{A}|\vec{Q}})\leq c w(\tau_{\vec{A}|\vec{Q}})$. This claim is indeed correct, but not useful. A look at the explicit construction of $\tau_{\vec{A}|\vec{Q}}$ itself in \cite{arnon2013finetti} will reveal that it is a signalling strategy, hence no non-trivial bound on $w(\tau_{\vec{A}|\vec{Q}})$ holds a priori. For a further discussion see Section \ref{sec:what_we_learn}.}.
 
\begin{lemma}[Reduction]\label{lem:reduction}
	For every permutation-invariant strategy $\mathrm{P}_{\vec{A}|\vec{Q}}$ and every $(i,b^{\bar{i}}, s^i , s^{\bar{i}})$
	\begin{enumerate}
		\item \label{it:reduction1}$\mathrm{Pr}_{\vec{a},\vec{q}\sim\mathrm{P}_{\vec{A}\vec{Q}}} \left[ \mathcal{T}_{(i,b^{\bar{i}}, s^i , s^{\bar{i}})} (\vec{q^t},\vec{a^t})=1 \land \mathrm{O}^{\mathrm{EST2}}_{A|Q} \notin \Sigma_{(i,b^{\bar{i}}, s^i , s^{\bar{i}})}  \right] \leq 2c \delta$ .
		\item \label{it:reduction2} $\mathrm{Pr}_{\vec{a},\vec{q}\sim\mathrm{P}_{\vec{A}\vec{Q}}} \left[ \mathcal{T}_{(i,b^{\bar{i}}, s^i , s^{\bar{i}})} (\vec{q^t},\vec{a^t})=0 \land \mathrm{O}^{\mathrm{EST2}}_{A|Q} \in \sigma_{(i,b^{\bar{i}}, s^i , s^{\bar{i}})}  \right] \leq 2c \delta$ .
	\end{enumerate}
\end{lemma}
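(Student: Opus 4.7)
The plan is to use the de Finetti reduction (Lemma \ref{lem:deFinetti_reduction}) as a black box to lift the i.i.d./de Finetti statement of Lemma \ref{lem:de_finetti_prop} to arbitrary permutation-invariant strategies, losing only a multiplicative factor of $c = (n+1)^{|\mathcal{Q}|(|\mathcal{A}|-1)}$ in the probability bound.

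First I would observe that both events appearing in items (\ref{it:reduction1}) and (\ref{it:reduction2}) are \emph{functions of the data only}: the test indicator $\mathcal{T}_{(i,b^{\bar{i}}, s^i, s^{\bar{i}})}(\vec{q^t},\vec{a^t})$ is determined by $(\vec{q^t},\vec{a^t})$ by definition \eqref{eq:test_definition}, and the estimated strategy $\mathrm{O}^{\mathrm{EST2}}_{A|Q}$ is determined by $(\vec{q^g},\vec{a^g})$ as defined in Section \ref{sec:estimated_strategies}. Hence the membership relations $\mathrm{O}^{\mathrm{EST2}}_{A|Q}\in\Sigma_{(i,b^{\bar{i}}, s^i, s^{\bar{i}})}$ and $\mathrm{O}^{\mathrm{EST2}}_{A|Q}\in\sigma_{(i,b^{\bar{i}}, s^i, s^{\bar{i}})}$ correspond to fixed subsets of $\mathcal{Q}^n\times\mathcal{A}^n$. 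In particular, each of the two events in the lemma can be described as a subset $E\subseteq \mathcal{Q}^n\times\mathcal{A}^n$ that does not depend on the strategy under consideration.

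Next I would invoke Lemma \ref{lem:deFinetti_reduction}: since $\mathrm{P}_{\vec{A}|\vec{Q}}$ is permutation invariant, there is a de Finetti strategy $\tau_{\vec{A}|\vec{Q}}$ (independent of the event $E$) such that $\mathrm{P}_{\vec{A}|\vec{Q}}(\vec{a}|\vec{q}) \leq c \cdot \tau_{\vec{A}|\vec{Q}}(\vec{a}|\vec{q})$ for all $\vec{a},\vec{q}$. Multiplying both sides by $Q^{\otimes n}(\vec{q})$, which is the same weighting used to define both joint distributions $\mathrm{P}_{\vec{A}\vec{Q}}$ and $\tau_{\vec{A}\vec{Q}}$, yields the pointwise inequality $\mathrm{P}_{\vec{A}\vec{Q}}(\vec{a},\vec{q}) \leq c \cdot \tau_{\vec{A}\vec{Q}}(\vec{a},\vec{q})$. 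Summing over $(\vec{a},\vec{q})\in E$ then gives
\[
\mathrm{Pr}_{\vec{a},\vec{q}\sim \mathrm{P}_{\vec{A}\vec{Q}}}[E] \;\leq\; c\cdot \mathrm{Pr}_{\vec{a},\vec{q}\sim \tau_{\vec{A}\vec{Q}}}[E].
\]

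Finally, I would apply Lemma \ref{lem:de_finetti_prop} directly to the de Finetti strategy $\tau_{\vec{A}|\vec{Q}}$ for each of the two events $E$, obtaining $\mathrm{Pr}_{\tau_{\vec{A}\vec{Q}}}[E]\leq 2\delta$ in both cases. Combining with the previous display gives the claimed bound $2c\delta$, establishing items (\ref{it:reduction1}) and (\ref{it:reduction2}). There is no real obstacle here beyond carefully noting that the de Finetti reduction is pointwise in $(\vec{a},\vec{q})$ and that the events of interest are data-defined, so that the same $\tau$ works uniformly and the factor of $c$ from the reduction transfers cleanly to the probability of any such event.
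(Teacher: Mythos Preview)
Your proposal is correct and follows essentially the same approach as the paper's proof: both express the events as data-defined subsets $E\subseteq\mathcal{Q}^n\times\mathcal{A}^n$, use the pointwise de Finetti inequality $\mathrm{P}_{\vec{A}\vec{Q}}(\vec{a},\vec{q})\leq c\cdot\tau_{\vec{A}\vec{Q}}(\vec{a},\vec{q})$ to transfer the probability bound from $\tau$ to $\mathrm{P}$, and then invoke Lemma~\ref{lem:de_finetti_prop}. Your added remark that the events depend only on the data (and not on the strategy) is a useful clarification that the paper leaves implicit.
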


\begin{proof}
	We prove both of the claims together. Denote the relevant event by $E(\vec{a},\vec{q})$ and note that for both events we can write 
	\[
		\mathrm{Pr}_{\vec{a},\vec{q}\sim\mathrm{P}_{\vec{A}\vec{Q}}} \left[ E(\vec{a},\vec{q}) =1 \right] = \sum_{\substack{\vec{a},\vec{q} | \\ E(\vec{a},\vec{q})=1 }} \mathrm{P}_{\vec{A}\vec{Q}}(\vec{a},\vec{q}) \;.
	\]
	From Lemma \ref{lem:deFinetti_reduction} we get $\mathrm{P}_{\vec{A}\vec{Q}}(\vec{a},\vec{q}) \leq c \cdot \tau_{\vec{A}\vec{Q}}(\vec{a},\vec{q})$ and therefore 
	\[
		\mathrm{Pr}_{\vec{a},\vec{q}\sim\mathrm{P}_{\vec{A}\vec{Q}}} \left[ E(\vec{a},\vec{q}) =1 \right] = \sum_{\substack{\vec{a},\vec{q} | \\ E(\vec{a},\vec{q})=1 }} \mathrm{P}_{\vec{A}\vec{Q}}(\vec{a},\vec{q}) \leq c \cdot \sum_{\substack{\vec{a},\vec{q} | \\ E(\vec{a},\vec{q})=1 }} \tau_{\vec{A}\vec{Q}}(\vec{a},\vec{q}) = c \cdot \mathrm{Pr}_{\vec{a},\vec{q}\sim\tau_{\vec{A}\vec{Q}}} \left[ E(\vec{a},\vec{q}) =1 \right] \;.
	\]
	Combining this with Lemma \ref{lem:de_finetti_prop} proves the lemma. 
\end{proof}

\section{Threshold theorem}\label{sec:threshold}

In this section we prove our threshold theorem, Theorem \ref{thm:final_threshold_theorem}. Before going into the details of the proof, let us explain the high-level idea. 

First, to see the connection between what was done so far and a threshold theorem note that the winning probability of $\mathrm{O}^{\mathrm{EST2}}_{A|Q}$ in the game $G$, $w(\mathrm{O}^{\mathrm{EST2}}_{A|Q})$, is exactly the fraction of coordinates in which the game data $\vec{q^g},\vec{a^g}$ satisfies the winning condition $R$. Therefore, in order to prove a threshold theorem it is sufficient to prove an upper bound on $w(\mathrm{O}^{\mathrm{EST2}}_{A|Q})$ which holds with high probability. 

To do so we use the following sequence of lemmas. The first two lemmas bound the probability that the estimate $\mathrm{O}^{\mathrm{EST2}}_{A|Q}$ is significantly signalling\footnote{In the words of the explanation given in Section \ref{sec:proof_idea}, this is where we prove that the signalling weight is exponentially small.} in any direction $(i,b^{\bar{i}}, s^i , s^{\bar{i}})$ for which $\mathrm{Pr}_{\vec{a},\vec{q}\sim\mathrm{P}_{\vec{A}\vec{Q}}} \big[\mathcal{T}_{(i,b^{\bar{i}}, s^i , s^{\bar{i}})}( \vec{q^t},\vec{a^t} )=1 \big] \neq 0$. Lemma~\ref{lem:weak_condition}, which we also call the weak lemma, establishes that even conditioned on the test $\mathcal{T}_{(i,b^{\bar{i}}, s^i , s^{\bar{i}})}( \vec{q^t},\vec{a^t})$ detecting signalling the distribution  $\mathrm{O}^{\mathrm{EST2}}_{A|Q}$ itself cannot be signalling with very high probability. The proof of the lemma is based on a reduction to a certain guessing game which is used to derive a contradiction between the conclusion that $\mathrm{O}^{\mathrm{EST2}}_{A|Q}$ would be signalling and the assumption that the overall distribution $\mathrm{P}_{\vec{A}|\vec{Q}}$ is not. Lemma \ref{lem:strong_condition}, called the strong lemma, amplifies the conclusion of the weak lemma to show that $\mathrm{O}^{\mathrm{EST2}}_{A|Q}$ cannot display too much signalling, even only with small probability. The amplification is obtained by using the properties of permutation-invariant strategies which were proven in Lemma \ref{lem:reduction} in the previous section.  

Having shown that with high probability $\mathrm{O}^{\mathrm{EST2}}_{A|Q}$ cannot be too signalling, Lemma~\ref{lem:distance_transformation} derives an upper bound on the  winning probability $w(\mathrm{O}^{\mathrm{EST2}}_{A|Q})$. Intuitively, if the strategy $\mathrm{O}^{\mathrm{EST2}}_{A|Q}$ does not display strong signalling in any direction it should not lead to a large advantage over strictly non-signalling strategies in the game $G$. The  quantitative argument is based on performing a sensitivity analysis of the appropriate linear program. The three lemmas are brought together in~Lemma \ref{lem:threshold}, from which Theorem~\ref{thm:final_threshold_theorem} follows. 

We are now ready to prove the following lemmas and the threshold theorem.

\begin{lemma}[Weak lemma]\label{lem:weak_condition}
	Let $n$ be such that 
	\begin{equation}\label{eq:number_of_repet_bound}
		\frac{n}{\ln(n)} > 20|Q||A| \frac{\ln(2/\epsilon)}{\epsilon^{2}}\;,
	\end{equation} 
	and $\mathrm{P}_{\vec{A}|\vec{Q}}$ a non-signalling strategy for $G^{n}$. For any $(i,b^{\bar{i}}, s^i , s^{\bar{i}})$ denote by $\mathrm{P}_{\vec{A}\vec{Q}|\mathcal{T}=1}$ the probability distribution $\mathrm{P}_{\vec{A}\vec{Q}}$ conditioned on the event $\mathcal{T}_{(i,b^{\bar{i}}, s^i , s^{\bar{i}})}\left( \vec{q^t},\vec{a^t} \right)=1$, whenever such a conditional probability distribution is defined. Then,
	\begin{equation}\label{eq:weak_lemma_eq}
		\mathrm{Pr}_{\vec{a^g},\vec{q^g}\sim\mathrm{P}_{\vec{A}\vec{Q}|\mathcal{T}=1}} \left[\mathrm{O}^{\mathrm{EST2}}_{A|Q} \in \Sigma_{(i,b^{\bar{i}}, s^i , s^{\bar{i}})} \right]  < 1 - 2c \delta\;.
	\end{equation}
\end{lemma}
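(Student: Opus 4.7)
The plan is to argue by contradiction through a guessing-game framing: if $\mathrm{O}^{\mathrm{EST2}}_{A|Q}$ were in $\Sigma_{(i,b^{\bar{i}}, s^i , s^{\bar{i}})}$ with probability at least $1-2c\delta$ under $\mathrm{P}_{\vec{A}\vec{Q}|\mathcal{T}=1}$, then from the $\bar{i}$-players' view (plus the test outcome) one would be able to guess the questions of player~$i$ in the game coordinates with nontrivial advantage over the prior $Q(\cdot|\cdot)$, contradicting non-signalling of $\mathrm{P}_{\vec{A}|\vec{Q}}$. Assuming \eqref{eq:weak_lemma_eq} fails and applying Lemma~\ref{lem:Sigma_signalling} yields
\[
    \mathrm{Pr}_{\mathrm{P}_{\vec{A}\vec{Q}|\mathcal{T}=1}}\!\left[\mathrm{Sig}_{(i,b^{\bar{i}}, s^i, s^{\bar{i}})}\!\left(\mathrm{O}^{\mathrm{EST2}}\right) > \nu\right] \geq 1-2c\delta,
\]
and form~\eqref{eq:sig1} of $\mathrm{Sig}$ rewrites the inner event as the empirical inequality $X - Y\,Q(s^i|s^{\bar{i}}) > (n/2)\,\nu$, where $Y$ is the number of game coordinates $\ell$ at which the $\bar{i}$-players receive $s^{\bar{i}}$ and answer $b^{\bar{i}}$, and $X$ is the subset of these additionally satisfying $(\vec{q^g}_\ell)^i = s^i$.

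The heart of the argument is a short Bayesian computation: for every realisation of the $\bar{i}$-data $(\vec{q}^{\bar{i}}, \vec{a}^{\bar{i}})$ and of $\vec{q^t}^i$ with $\mathcal{T}(\vec{q^t},\vec{a^t})=1$, non-signalling of $\mathrm{P}_{\vec{A}|\vec{Q}}$ combined with Remark~\ref{rem:locality} forces
\[
    \mathrm{Pr}_{\mathrm{P}}\!\left[\vec{q^g}^i = r \,\big|\, \vec{q}^{\bar{i}}, \vec{a}^{\bar{i}}, \vec{q^t}^i, \mathcal{T} = 1\right] = \prod_{\ell=1}^{n/2} Q\!\left(r_\ell \,\big|\, (\vec{q^g}_\ell)^{\bar{i}}\right).
\]
Indeed, the marginal $\mathrm{P}(\vec{a}^{\bar{i}} \mid \vec{q})$ is independent of $\vec{q}^i$, so the posterior on $\vec{q^g}^i$ collapses onto the product prior; and the test indicator $1\{\mathcal{T}=1\}$ is a function of $(\vec{q^t},\vec{a^t}^{\bar{i}})$ alone, so it is fixed by the conditioning and introduces no further bias. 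With the $\bar{i}$-data and $\vec{q^t}^i$ thus fixed, both $Y$ and each $Z_\ell := 1\{(\vec{q^g}_\ell)^{\bar{i}} = s^{\bar{i}},\,(\vec{a^g}_\ell)^{\bar{i}} = b^{\bar{i}}\}$ become deterministic, and $X - Y\,Q(s^i|s^{\bar{i}})$ is a sum of $Y \leq n/2$ independent, centred, $[-1,1]$-bounded Bernoulli increments. Hoeffding's inequality then delivers $\mathrm{Pr}[X - Y\,Q(s^i|s^{\bar{i}}) > (n/2)\nu \mid \text{conditioning}] \leq \exp(-\nu^2 n)$, and averaging over the conditioning preserves the bound. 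Choosing $\nu$ of order $\epsilon$ (admissible since $\nu < \zeta - 6\epsilon$ with $\zeta \geq 7\epsilon$), the hypothesis~\eqref{eq:number_of_repet_bound} on $n$ makes $\exp(-\nu^2 n)$ far smaller than $1-2c\delta$, producing the contradiction.

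The main obstacle will be the Bayesian step: I must argue cleanly that conditioning on the joint event $\{\mathcal{T}=1\}$ leaks no information about $\vec{q^g}^i$ beyond what is already captured by $(\vec{q}^{\bar{i}}, \vec{a}^{\bar{i}}, \vec{q^t}^i)$. Two facts combine to deliver this: (i)~by Remark~\ref{rem:locality}, $\mathcal{T}$ depends only on $\vec{q^t}$ and on the $\bar{i}$-marginal of $\vec{a^t}$, so the indicator adds no further randomness once those are fixed; and (ii)~non-signalling of $\mathrm{P}$ removes any posterior dependence of $\vec{q^g}^i$ on $\vec{a}^{\bar{i}}$. Once this is nailed down, the remainder is a quantitative verification that the constants $c = (n+1)^{|\mathcal{Q}|(|\mathcal{A}|-1)}$ and $\delta = (n/2+1)^{|\mathcal{A}||\mathcal{Q}|-1}e^{-n\epsilon^2/4}$ appearing in $1-2c\delta$ are dominated by the Hoeffding exponent under~\eqref{eq:number_of_repet_bound}.
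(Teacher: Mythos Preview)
Your Bayesian route---conditioning on $(\vec{q}^{\bar{i}},\vec{a}^{\bar{i}},{\vec{q^t}}^{i})$, observing that the posterior on ${\vec{q^g}}^{i}$ collapses to the product prior, and then applying Hoeffding---is a legitimate alternative to the paper's explicit guessing-game construction (which instead exhibits a concrete strategy for the $\bar{i}$-players that beats the non-signalling value $Q(s^i\mid s^{\bar{i}})$ of an auxiliary game). Your Bayesian step itself is correctly argued.

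The gap is in the rewriting step. By the paper's convention $\mathrm{O}^{\mathrm{EST2}}_{AQ}=Q\times\mathrm{O}^{\mathrm{EST2}}_{A|Q}$ (the \emph{true} question distribution times the empirical conditional), so form~\eqref{eq:sig1} reads
\[
\mathrm{Sig}(\mathrm{O}^{\mathrm{EST2}})\;=\;Q(s)\,\frac{N_{s,b^{\bar{i}}}}{N_s}\;-\;Q(s^i\mid s^{\bar{i}})\sum_{r^i}Q(r^i,s^{\bar{i}})\,\frac{N_{(r^i,s^{\bar{i}}),b^{\bar{i}}}}{N_{(r^i,s^{\bar{i}})}}\,,
\]
where $N_q$ counts the game coordinates with question $q$. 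This is \emph{not} $(X-Y\,Q(s^i\mid s^{\bar{i}}))/(n/2)$: the true-$Q$ prefactors are divided by the empirical question counts $N_{(r^i,s^{\bar{i}})}$, and those counts depend on the very variable ${\vec{q^g}}^{i}$ you are integrating out, so $\mathrm{Sig}(\mathrm{O}^{\mathrm{EST2}})$ is not the linear statistic you treat it as. Your identity would hold only if the empirical question frequencies matched $Q$ exactly. As written, the Hoeffding bound controls a different event from the one assumed to have probability $\geq 1-2c\delta$, and no contradiction follows. The repair is not hard---first restrict to the event that each $N_{(r^i,s^{\bar{i}})}/(n/2)$ lies within $O(\epsilon)$ of $Q(r^i,s^{\bar{i}})$ (still exponentially likely under the product prior you have just established on ${\vec{q^g}}^{i}$), on which $\mathrm{Sig}(\mathrm{O}^{\mathrm{EST2}})$ and your linear statistic differ by $O(\epsilon)$ and the slack between $\nu$ and $\zeta-6\epsilon$ absorbs the error---but that step is missing from the proposal.
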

\begin{proof}

	For every signalling test $\mathcal{T}_{(i,b^{\bar{i}}, s^i , s^{\bar{i}})}$ we construct a guessing game. Our goal is to derive a contradiction by showing that if Equation \eqref{eq:weak_lemma_eq} is not true, then the guessing game can be won with probability higher than the optimal non-signalling winning probability. 
		
	The guessing game is defined as follows. A referee gives the players $n/2$ independent $m$-tuples of game questions $\vec{q^g}$ where each tuple is distributed according to the questions distribution $Q$ of the original game $G$. Players $\bar{i}$ are then allowed to communicate and their goal is to guess and output an index $j\in[n/2]$ such that $\vec{q^g}_j = (s^i , s^{\bar{i}})$ (if there is no such index the players lose automatically). 

	If the players share a non-signalling strategy $\mathrm{P}_{\vec{A}|\vec{Q}}$ then the marginals of players $\bar{i}$ are the same for all $q^i$. Therefore, their outputs $a^{\bar{i}}$ do not give them any information about the question that the $i$'th player got from the referee (even when players $\bar{i}$ are allowed to communicate among themselves, but not with player $i$). The best non-signalling strategy is therefore to choose, uniformly at random, an index $j$ for which ${\vec{q^g}_j}^{\bar{i}}=s^{\bar{i}}$ if it exists. The winning probability is then given by $W_{\mathrm{ns}} = Q(s^i|s^{\bar{i}})<1$.

	We now show that if the players share $\mathrm{P}_{\vec{A}|\vec{Q}}$ for which 
	\begin{equation}\label{eq:contr_assump}
		\mathrm{Pr}_{\vec{a^g},\vec{q^g}\sim\mathrm{P}_{\vec{A}\vec{Q}|\mathcal{T}=1}} \left[\mathrm{O}^{\mathrm{EST2}}_{A|Q} \in \Sigma_{(i,b^{\bar{i}}, s^i , s^{\bar{i}})} \right]  \geq 1 - 2c\delta 
	\end{equation}
	then they can win the above guessing game with probability higher than the optimal non-signalling winning probability~$W_{\mathrm{ns}}$.
	
	The general idea is as follows. The players use the questions given by the referee as the game questions $\vec{q^g}$ and choose, using shared randomness, the inputs for the test questions $\vec{q^t}$. They input the questions into $\mathrm{P}_{\vec{A}|\vec{Q}}$. Players $\bar{i}$ then check if $\mathcal{T}_{(i,b^{\bar{i}}, s^i , s^{\bar{i}})} (\vec{q^t},\vec{a^t})=1$ -- they can do this since they are allowed to communicate among themselves and they know all the inputs for the test questions of player $i$ (as they were chosen using shared randomness which is available to all the players). Recalling Remark \ref{rem:locality}, they have all the information they need. 
	
	The players proceed according to the following conditions:
	\begin{enumerate}
		\item If $\mathcal{T}_{(i,b^{\bar{i}}, s^i , s^{\bar{i}})} (\vec{q^t},\vec{a^t})=0$ they use the non-signalling strategy described above. That is, they choose a random index $j\in[n/2]$ such that ${\vec{q^g}_j}^{\bar{i}}=s^{\bar{i}}$.
		\item If $\mathcal{T}_{(i,b^{\bar{i}}, s^i , s^{\bar{i}})} (\vec{q^t},\vec{a^t})=1$ they choose a random index $j\in[n/2]$ such that ${\vec{q^g}_j}^{\bar{i}}=s^{\bar{i}}$ and ${\vec{a^g}_j}^{\bar{i}}=b^{\bar{i}}$ if it exists (otherwise they use the non-signalling strategy described above). 
	\end{enumerate}
	
	Let us show that, as long as $\mathrm{Pr}_{\vec{a},\vec{q}\sim\mathrm{P}_{\vec{A}\vec{Q}}} \big[\mathcal{T}_{(i,b^{\bar{i}}, s^i , s^{\bar{i}})}\left( \vec{q^t},\vec{a^t} \big)=1 \right] \neq 0$, this strategy achieves a winning probability which is higher than $W_{\mathrm{ns}}$. If $\mathcal{T}_{(i,b^{\bar{i}}, s^i , s^{\bar{i}})} (\vec{q^t},\vec{a^t})=0$ then the winning probability is $W_{\mathrm{ns}}$. However, if $\mathcal{T}_{(i,b^{\bar{i}}, s^i , s^{\bar{i}})} (\vec{q^t},\vec{a^t})=1$ then $\vec{q^g},\vec{a^g}$ can be seen as data which is distributed according to $n/2$ identical copies of $\mathrm{O}^{\mathrm{EST2}}$, which is with high probability in $\Sigma_{(i,b^{\bar{i}}, s^i , s^{\bar{i}})}$ according to Equation \eqref{eq:contr_assump}. From Lemma \ref{lem:Sigma_signalling} this implies 
	\begin{equation}\label{eq:weak_cond1_plus}
		\mathrm{Pr}_{\vec{a^g},\vec{q^g}\sim\mathrm{P}_{\vec{A}\vec{Q}|\mathcal{T}=1}} \left[\mathrm{Sig}_{(i,b^{\bar{i}}, s^i , s^{\bar{i}})}(\mathrm{O}^{\mathrm{EST2}}) > \nu \right]   \geq 1 - 2c \delta \;,
	\end{equation}
	where $\nu>0$ is any parameter satisfying $\nu < \zeta - 6\epsilon$ (recall Lemma \ref{lem:Sigma_signalling}).
	
	Using the definition of $\mathrm{Sig}$ in Equation \eqref{eq:sig2} we know that if indeed $\mathrm{Sig}_{(i,b^{\bar{i}}, s^i , s^{\bar{i}})}(\mathrm{O}^{\mathrm{EST2}}) > \nu $ then $\mathrm{O}^{\mathrm{EST2}}(\circ,b^{\bar{i}},\circ, s^{\bar{i}})>0$ and 
	\begin{align}
		 \mathrm{O}^{\mathrm{EST2}}(\circ, s^i | b^{\bar{i}}, s^{\bar{i}})  &> \frac{\nu}{\mathrm{O}^{\mathrm{EST2}}(\circ,b^{\bar{i}},\circ, s^{\bar{i}})} + Q(s^i |s^{\bar{i}}) \label{eq:weak_cond2_plus}  \\
		&= \frac{\nu}{\mathrm{O}^{\mathrm{EST2}}(\circ,b^{\bar{i}},\circ, s^{\bar{i}})} + W_{\mathrm{ns}} \;. \nonumber
	\end{align}
	That is, by choosing an index for which $a^{\bar{i}}=b^{\bar{i}}$ players $\bar{i}$ increase the winning probability. 
	
	On the other hand, if $\mathrm{Sig}_{(i,b^{\bar{i}}, s^i , s^{\bar{i}})}(\mathrm{O}^{\mathrm{EST2}}) \leq \nu $, which can happen with probability $ 2c\delta$, then the players might decrease their winning probability. In the worst case the winning probability is~0.
	
	Therefore, if $\mathcal{T}_{(i,b^{\bar{i}}, s^i , s^{\bar{i}})} (\vec{q^t},\vec{a^t})=1$ we get the following winning probability
	\begin{equation} \label{eq:win_cond_on_test}
		W_{|\mathcal{T}=1} \geq  (1 - 2c\delta) \left( \frac{\nu}{\mathrm{O}^{\mathrm{EST2}}(\circ,b^{\bar{i}},\circ, s^{\bar{i}})} + W_{\mathrm{ns}} \right) +  2c\delta \cdot 0
	\end{equation}
	
	and altogether, the winning probability of the described strategy is given by:
	
	\begin{align*}
		W &> \mathrm{Pr}_{\vec{a},\vec{q}\sim\mathrm{P}_{\vec{A}\vec{Q}}} \left[\mathcal{T}_{(i,b^{\bar{i}}, s^i , s^{\bar{i}})}\left( \vec{q^t},\vec{a^t} \right)=0 \right] W_{\mathrm{ns}}  \\ 
		&+  \mathrm{Pr}_{\vec{a},\vec{q}\sim\mathrm{P}_{\vec{A}\vec{Q}}} \left[\mathcal{T}_{(i,b^{\bar{i}}, s^i , s^{\bar{i}})}\left( \vec{q^t},\vec{a^t} \right)=1 \right] (1 - 2c\delta) \left( \frac{\nu}{\mathrm{O}^{\mathrm{EST2}}(\circ,b^{\bar{i}},\circ, s^{\bar{i}})} + W_{\mathrm{ns}} \right) \\
		&=  W_{\mathrm{ns}} - \mathrm{Pr}_{\vec{a},\vec{q}\sim\mathrm{P}_{\vec{A}\vec{Q}}} \left[\mathcal{T}_{(i,b^{\bar{i}}, s^i , s^{\bar{i}})}\left( \vec{q^t},\vec{a^t} \right)=1 \right] 2c\delta W_{\mathrm{ns}} \\
		&+  \mathrm{Pr}_{\vec{a},\vec{q}\sim\mathrm{P}_{\vec{A}\vec{Q}}} \left[\mathcal{T}_{(i,b^{\bar{i}}, s^i , s^{\bar{i}})}\left( \vec{q^t},\vec{a^t} \right)=1 \right] (1 - 2c\delta)  \frac{\nu}{\mathrm{O}^{\mathrm{EST2}}(\circ,b^{\bar{i}},\circ, s^{\bar{i}})} \;.
	\end{align*}

	Finally, $W>W_{\mathrm{ns}}$ for 
	\begin{equation}\label{eq:rep_num_condition}
		\nu > \frac{2c\delta}{1-2c\delta} W_{\mathrm{ns}} \geq \frac{2c\delta}{1-2c\delta} W_{\mathrm{ns}} \cdot \mathrm{O}^{\mathrm{EST2}}(\circ,b^{\bar{i}},\circ, s^{\bar{i}})  \;.
	\end{equation}
	Using $W_{\mathrm{ns}} \cdot \mathrm{O}^{\mathrm{EST2}}(\circ,b^{\bar{i}},\circ, s^{\bar{i}}) \leq 1$ and $2c\delta \leq (n+1)^{2|Q||A|}e^{-n\epsilon^2/4}$ (see Table~\ref{tb:parameters_table}), we see that as long as $n/\ln(n) > 20|Q||A| \epsilon^{-2}\ln(2/\epsilon)$ the quantity $2c\delta W_{\mathrm{ns}}  \mathrm{O}^{\mathrm{EST2}}(\circ,b^{\bar{i}},\circ, s^{\bar{i}}) /(1-2c\delta) $ is strictly less than~$\epsilon$. Assuming $\zeta \geq 7\epsilon$, there is a choice of $\nu$ that satisfies both~\eqref{eq:rep_num_condition} and the earlier condition that $\nu < \zeta - 6\epsilon$. 
\end{proof}

The bound given in Equation \eqref{eq:weak_lemma_eq} is weak for two reasons. First, the game data $\vec{q^g},\vec{a^g}$ is distributed according to the conditional distribution $\mathrm{P}_{\vec{A}\vec{Q}|\mathcal{T}=1}$ and not according to $\mathrm{P}_{\vec{A}\vec{Q}}$ itself. Second, it only tells us that 
$\mathrm{Pr}_{\vec{a^g},\vec{q^g}\sim\mathrm{P}_{\vec{A}\vec{Q}|\mathcal{T}=1}} \big[\mathrm{O}^{\mathrm{EST2}}_{A|Q} \notin \Sigma_{(i,b^{\bar{i}}, s^i , s^{\bar{i}})} \big]  \geq 2c\delta$, i.e., the probability that $\mathrm{O}^{\mathrm{EST2}}_{A|Q}$ has a small value of signalling is higher than $2c\delta$. We show how the statement in the weak lemma can be amplified using the de Finetti reduction from Lemma~\ref{lem:reduction}. 

\begin{lemma}[Strong lemma]\label{lem:strong_condition}
	Let $\mathrm{P}_{\vec{A}|\vec{Q}}$ be a permutation-invariant non-signalling strategy for $G^{n}$. Then for any $(i,b^{\bar{i}}, s^i , s^{\bar{i}})$ such that $Q(s^i , s^{\bar{i}})\neq 0$ and $Q(s^i|s^{\bar{i}})\neq 1$,
	\[
		\mathrm{Pr}_{\vec{a},\vec{q}\sim\mathrm{P}_{\vec{A}\vec{Q}}} \left[\mathrm{O}^{\mathrm{EST2}}_{A|Q} \in \sigma_{(i,b^{\bar{i}}, s^i , s^{\bar{i}})} \right] \leq 4c\delta\;.
	\]
\end{lemma}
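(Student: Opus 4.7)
The plan is to split the event $\{\mathrm{O}^{\mathrm{EST2}}\in\sigma\}$ according to the outcome of the signalling test on the test data, and bound each of the two resulting pieces by $2c\delta$, which will give the total $4c\delta$ stated in the lemma. Concretely, one writes
\[
\Pr\bigl[\mathrm{O}^{\mathrm{EST2}}\in\sigma\bigr]=\Pr\bigl[\mathcal{T}=0,\mathrm{O}^{\mathrm{EST2}}\in\sigma\bigr]+\Pr\bigl[\mathcal{T}=1,\mathrm{O}^{\mathrm{EST2}}\in\sigma\bigr].
\]
The first term is bounded by $2c\delta$ immediately from item~2 of Lemma~\ref{lem:reduction}, which is tailored to precisely this event.

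For the second term the first step is to record the inclusion $\sigma\subseteq\Sigma$. This follows by taking $\bar{\mathrm{O}}=\mathrm{O}$ in the definition of $\sigma$: this forces $\mathrm{Sig}(\mathrm{O})\geq\zeta$, and by part~1 of Lemma~\ref{lem:reliable_test} the test then accepts $\mathrm{O}^{\otimes n}$ with probability at least $1-\delta>\delta$, witnessing membership of $\mathrm{O}$ in $\Sigma$. Using this inclusion together with the weak lemma gives
\[
\Pr\bigl[\mathcal{T}=1,\mathrm{O}^{\mathrm{EST2}}\in\sigma\bigr]\leq\Pr\bigl[\mathcal{T}=1,\mathrm{O}^{\mathrm{EST2}}\in\Sigma\bigr]<(1-2c\delta)\Pr[\mathcal{T}=1],
\]
and one can then bring in item~1 of Lemma~\ref{lem:reduction}, which controls the complementary mass by $\Pr[\mathcal{T}=1,\mathrm{O}^{\mathrm{EST2}}\notin\Sigma]\leq 2c\delta$, to bound $\Pr[\mathcal{T}=1]$ and close the loop.

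The main obstacle is extracting a sharp quantitative conclusion from this combination, because a naive chaining of the weak lemma with item~1 of Lemma~\ref{lem:reduction} only produces the useless bound $\Pr[\mathcal{T}=1]<1$. I therefore expect the actual argument to proceed by contradiction: assume $\Pr[\mathrm{O}^{\mathrm{EST2}}\in\sigma]>4c\delta$, so that item~2 of Lemma~\ref{lem:reduction} forces $\Pr[\mathcal{T}=1,\mathrm{O}^{\mathrm{EST2}}\in\sigma]>2c\delta$, and then replay the guessing-game strategy from the proof of the weak lemma under this stronger hypothesis. The extra mass on $\{\mathcal{T}=1,\mathrm{O}^{\mathrm{EST2}}\in\sigma\}$ supplies a larger signalling advantage (using $\mathrm{Sig}(\mathrm{O}^{\mathrm{EST2}})\geq\zeta$ rather than only $>\nu$) on top of the same $W_{\mathrm{ns}}$ baseline, and the arithmetic should yield a winning probability strictly greater than $W_{\mathrm{ns}}$, contradicting the non-signalling of $\mathrm{P}$. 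The hypotheses $Q(s^i,s^{\bar i})\neq 0$ and $Q(s^i\mid s^{\bar i})\neq 1$ enter precisely here: they guarantee $W_{\mathrm{ns}}<1$, leaving the strict gap available, and ensure that the conditional probability $Q(s^i\mid s^{\bar i})$ driving the guessing-game analysis is well defined and non-trivial.
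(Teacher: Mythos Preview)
Your split $\Pr[\mathrm{O}^{\mathrm{EST2}}\in\sigma]=\Pr[\mathcal{T}=0,\,\mathrm{O}^{\mathrm{EST2}}\in\sigma]+\Pr[\mathcal{T}=1,\,\mathrm{O}^{\mathrm{EST2}}\in\sigma]$, the bound on the first piece via item~2 of Lemma~\ref{lem:reduction}, and the inclusion $\sigma\subseteq\Sigma$ are all correct and match the paper. The difference is in how the second piece is handled. The paper does \emph{not} re-enter the guessing game: it simply discards the $\sigma$-condition, bounding $\Pr[\mathcal{T}=1,\,\mathrm{O}^{\mathrm{EST2}}\in\sigma]\leq\Pr[\mathcal{T}=1]$, and then argues that $\Pr[\mathcal{T}=1]\leq 2c\delta$. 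The argument is by contraposition: if $\Pr[\mathcal{T}=1]>2c\delta$, item~1 of Lemma~\ref{lem:reduction} is invoked to force $\Pr_{|\mathcal{T}=1}[\mathrm{O}^{\mathrm{EST2}}\in\Sigma]\geq 1-2c\delta$, which is exactly the negation of the conclusion of Lemma~\ref{lem:weak_condition}; hence $\mathrm{P}$ would have to be signalling. Adding the two $2c\delta$ bounds gives~$4c\delta$.

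You are right that the additive chaining of the weak lemma with item~1 of Lemma~\ref{lem:reduction} yields only $\Pr[\mathcal{T}=1]<1$; indeed the paper's implication ``$\Pr[\mathcal{T}=1]>2c\delta\Rightarrow\Pr_{|\mathcal{T}=1}[\notin\Sigma]\leq 2c\delta$'' is loosely written, since item~1 literally only gives the conditional bound $2c\delta/\Pr[\mathcal{T}=1]$. So your instinct that something beyond a black-box combination is needed is sound. Your proposed remedy, however, remains a sketch and has its own obstacle: in the guessing game, players $\bar{i}$ can evaluate $\mathcal{T}(\vec{q^t},\vec{a^t})$ (Remark~\ref{rem:locality}), but they \emph{cannot} detect the event $\{\mathrm{O}^{\mathrm{EST2}}\in\sigma\}$, because computing $\mathrm{Sig}(\mathrm{O}^{\mathrm{EST2}})$ requires knowing player~$i$'s game questions $\vec{q^g}{}^{\,i}$, which come from the referee and are unavailable to them. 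They can therefore only condition on $\{\mathcal{T}=1\}$, and the hypothesis $\Pr[\mathcal{T}=1,\,\mathrm{O}^{\mathrm{EST2}}\in\sigma]>2c\delta$ gives no useful lower bound on $\Pr_{|\mathcal{T}=1}[\mathrm{O}^{\mathrm{EST2}}\in\sigma]$; on the (possibly large) complementary part of $\{\mathcal{T}=1\}$ the strategy may win with probability~$0$, so the conditional winning probability $W_{|\mathcal{T}=1}$ need not exceed $W_{\mathrm{ns}}$ and the contradiction does not go through as you outline it.
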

\begin{proof}
	From Lemma \ref{lem:reduction} part \ref{it:reduction1} we get
	\[
		\mathrm{Pr}_{\vec{a},\vec{q}\sim\mathrm{P}_{\vec{A}\vec{Q}}} \left[\mathcal{T}_{(i,b^{\bar{i}}, s^i , s^{\bar{i}})}\left( \vec{q^t},\vec{a^t} \right)=1 \right] > 2c\delta \Rightarrow \mathrm{Pr}_{\vec{a},\vec{q} \sim \mathrm{P}_{\vec{A}\vec{Q}|\mathcal{T}=1}} \left[ \mathrm{O}^{\mathrm{EST2}}_{A|Q} \notin \Sigma_{(i,b^{\bar{i}}, s^i , s^{\bar{i}})}  \right] \leq 2c\delta
	\]
	and this can be rewritten as 
	\[
		\mathrm{Pr}_{\vec{a},\vec{q}\sim\mathrm{P}_{\vec{A}\vec{Q}}} \left[\mathcal{T}_{(i,b^{\bar{i}}, s^i , s^{\bar{i}})}\left( \vec{q^t},\vec{a^t} \right)=1 \right] > 2c\delta \Rightarrow \mathrm{Pr}_{\vec{a},\vec{q} \sim \mathrm{P}_{\vec{A}\vec{Q}|\mathcal{T}=1}} \left[ \mathrm{O}^{\mathrm{EST2}}_{A|Q} \in \Sigma_{(i,b^{\bar{i}}, s^i , s^{\bar{i}})}  \right]  \geq 1 - 2c\delta \;.
	\]
	According to Lemma \ref{lem:weak_condition}, this implies 
	\[
		\mathrm{Pr}_{\vec{a},\vec{q}\sim\mathrm{P}_{\vec{A}\vec{Q}}} \left[\mathcal{T}_{(i,b^{\bar{i}}, s^i , s^{\bar{i}})}\left( \vec{q^t},\vec{a^t} \right)=1 \right] > 2c\delta \Rightarrow \mathrm{P}_{\vec{A}|\vec{Q}} \text{ is signalling} \;.
	\]
	Therefore it must be that
	\begin{equation}\label{eq:test_prob1}
		\mathrm{Pr}_{\vec{a},\vec{q}\sim\mathrm{P}_{\vec{A}\vec{Q}}} \left[ \mathcal{T}_{(i,b^{\bar{i}}, s^i , s^{\bar{i}})}\left( \vec{q^t},\vec{a^t} \right)=1 \right] \leq 2c\delta\;
	\end{equation} 
	or alternatively, 
	\begin{equation}\label{eq:test_prob2}
		\mathrm{Pr}_{\vec{a},\vec{q}\sim\mathrm{P}_{\vec{A}\vec{Q}}} \left[ \mathcal{T}_{(i,b^{\bar{i}}, s^i , s^{\bar{i}})}\left( \vec{q^t},\vec{a^t} \right)=0 \right] \geq 1 - 2c\delta\;
	\end{equation} 
	
	Next, combining Lemma \ref{lem:reduction} part \ref{it:reduction2} with Equation \eqref{eq:test_prob2} we get 
	\[
		\mathrm{Pr}_{\vec{a},\vec{q} \sim \mathrm{P}_{AQ|\mathcal{T}=0}} \left[ \mathrm{O}^{\mathrm{EST2}}_{A|Q} \in \sigma_{(i,b^{\bar{i}}, s^i , s^{\bar{i}})} \right] \leq 2c\delta \;.
	\]
	Using Equation \eqref{eq:test_prob1} we get
	\[
		\mathrm{Pr}_{\vec{a},\vec{q} \sim \mathrm{P}_{\vec{A}\vec{Q}}} \left[ \mathrm{O}^{\mathrm{EST2}}_{A|Q} \in \sigma_{(i,b^{\bar{i}}, s^i , s^{\bar{i}})} \right]  \leq 4c\delta \;. \qedhere
	\]
\end{proof}

Lemma \ref{lem:strong_condition} tells us that if $\mathrm{P}_{\vec{A}|\vec{Q}}$ is a permutation-invariant non-signalling strategy then the probability that $\mathrm{O}^{\mathrm{EST2}}_{A|Q}$ is in a given set $\sigma_{(i,b^{\bar{i}}, s^i , s^{\bar{i}})}$ is exponentially small. In the next lemma we use this property to get a bound on the winning probability of $\mathrm{O}^{\mathrm{EST2}}_{A|Q}$ in the game $G$. 

\begin{lemma}\label{lem:distance_transformation}
	Let $\kappa =  \sum_{j=1}^{d} |y^{\star}_j|$ where $d$ is the number of signalling tests and $y^{\star}$ is an optimal solution of the dual program \eqref{eq:dual}. 
	Let $\mathrm{O}^{\mathrm{EST2}}_{A|Q}$ be a strategy such that for all $(i,b^{\bar{i}}, s^i , s^{\bar{i}})$, $\mathrm{O}^{\mathrm{EST2}}_{A|Q} \notin \sigma_{(i,b^{\bar{i}}, s^i , s^{\bar{i}})}$. Then $w(\mathrm{O}^{\mathrm{EST2}}_{A|Q})\leq 1-\alpha + \left(\zeta+2\epsilon\right)\kappa$.
\end{lemma}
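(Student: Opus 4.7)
The plan is to observe that the hypothesis $\mathrm{O}^{\mathrm{EST2}}_{A|Q} \notin \sigma_{(i,b^{\bar{i}},s^i,s^{\bar{i}})}$ for every signalling direction tells us, via Lemma~\ref{lem:small_sigma_signalling}, that $\mathrm{Sig}_{(i,b^{\bar{i}},s^i,s^{\bar{i}})}(\mathrm{O}^{\mathrm{EST2}}) < \zeta + 2\epsilon$ in every direction. In other words, $\mathrm{O}^{\mathrm{EST2}}$ is a feasible point of a perturbation of the relaxed linear program \eqref{eq:linear_program}, where each non-signalling inequality constraint \eqref{eq:relaxed} has its right-hand side relaxed from $0$ to $\zeta + 2\epsilon$, while the normalization and positivity constraints remain untouched (they are satisfied automatically since $\mathrm{O}^{\mathrm{EST2}}$ is a bona fide conditional distribution).

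I would then bring in the sensitivity-analysis lemma (Lemma~\ref{lem:sensitivity1}). Cast \eqref{eq:linear_program} into the form $\max\{c^T x \mid Ax \le b\}$ where $b$ encodes the right-hand sides, including the $0$'s of the signalling constraints. The perturbed right-hand side has the form $b + e$ with $e$ supported on the coordinates corresponding to the signalling constraints and each such entry bounded in absolute value by $\zeta + 2\epsilon$; the entries corresponding to the normalization and positivity constraints are zero. Applying Lemma~\ref{lem:sensitivity1}, the optimum $w_e$ of the perturbed program obeys
\[
w_e \le w + e^T y^{\star} \le w + (\zeta + 2\epsilon)\sum_{j=1}^{d}|y^{\star}_j| = (1-\alpha) + (\zeta+2\epsilon)\kappa,
\]
using that the unperturbed optimum is exactly $w_{\mathrm{ns}} = 1-\alpha$ and that only the $d$ signalling-constraint coordinates of $y^{\star}$ are multiplied by nonzero perturbations. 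Since $\mathrm{O}^{\mathrm{EST2}}$ is feasible for the perturbed program, its winning probability $w(\mathrm{O}^{\mathrm{EST2}})$, which is by Definition~\ref{def:win_prob} exactly the objective value $c^T x$ evaluated at $\mathrm{O}^{\mathrm{EST2}}$, is upper bounded by $w_e$. This yields the claimed bound.

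The one step to handle with care is the bookkeeping between the equality form of \eqref{eq:linear1_ns} and the inequality form \eqref{eq:relaxed}: one has to convince oneself that the relaxation argument given between \eqref{eq:linear_non_relaxed} and \eqref{eq:linear_program} continues to work in the perturbed setting so that the perturbed primal genuinely has $\mathrm{O}^{\mathrm{EST2}}$ as a feasible point and that its dual is exactly a perturbation of \eqref{eq:dual} in the sense required by Lemma~\ref{lem:sensitivity1}. Apart from this formal check, the proof is a direct application of LP sensitivity, and I expect it to be short; the truly nontrivial work of bounding how often $\mathrm{O}^{\mathrm{EST2}}$ can fail to lie outside some $\sigma_{(i,b^{\bar{i}},s^i,s^{\bar{i}})}$ has already been done in the strong lemma (Lemma~\ref{lem:strong_condition}) and will be combined with the present bound in Lemma~\ref{lem:threshold}. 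The one delicate numerical point is the factor $\kappa = \sum_j |y^{\star}_j|$, which (via Lemma~\ref{lem:sensitivity2}) can be bounded solely in terms of the game $G$ and will feed into the constant $\mathcal{C}_2(G)$ of Theorem~\ref{thm:final_threshold_theorem}.
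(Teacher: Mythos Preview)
Your proposal is correct and follows essentially the same route as the paper's own proof: invoke Lemma~\ref{lem:small_sigma_signalling} to turn the hypothesis into the uniform bound $\mathrm{Sig}_{(i,b^{\bar{i}},s^i,s^{\bar{i}})}(\mathrm{O}^{\mathrm{EST2}})<\zeta+2\epsilon$, recognize $\mathrm{O}^{\mathrm{EST2}}$ as feasible for the perturbed program obtained from~\eqref{eq:linear_program} by shifting the right-hand side of the signalling constraints to $\zeta+2\epsilon$, and then apply the LP sensitivity bound of Lemma~\ref{lem:sensitivity1} with $\kappa=\sum_j|y^{\star}_j|$. Your additional remarks about the equality-versus-inequality bookkeeping and about bounding $\kappa$ via Lemma~\ref{lem:sensitivity2} mirror exactly the paper's surrounding discussion.
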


\begin{proof}

	According to Lemma \ref{lem:small_sigma_signalling},  if $\mathrm{O}^{\mathrm{EST2}}_{A|Q} \notin \sigma_{(i,b^{\bar{i}}, s^i , s^{\bar{i}})}$ for every $\sigma_{(i,b^{\bar{i}}, s^i , s^{\bar{i}})}$ then 
	\begin{equation}\label{eq:perturbed_signalling}
		\mathrm{Sig}_{(i,a^{\bar{i}},q^i,q^{\bar{i}})}(\mathrm{O}^{\mathrm{EST2}})<\zeta+2\epsilon
	\end{equation}
	for every $i$ and every $b^{\bar{i}},s^i,s^{\bar{i}}$. That is, $\mathrm{O}^{\mathrm{EST2}}_{A|Q}$  is not ``too signalling'' in any direction. This can be used to bound the winning probability of $\mathrm{O}^{\mathrm{EST2}}_{A|Q}$ in the game $G$.
	
	The following linear program describes the optimal winning probability of a strategy $\mathrm{O}_{A|Q}$ which fulfils Equation~\eqref{eq:perturbed_signalling}: 
	\begin{equation} \label{eq:perturbation}
	\begin{aligned}
		\max \quad &\sum_{q,a} Q(q) R(q,a) \mathrm{O}(a|q) \\
		\text{s.t.} \quad&   Q(q^i , q^{\bar{i}})\left[ \mathrm{O}(\circ,a^{\bar{i}}|q^i , q^{\bar{i}}) - \sum_{r^{i}} Q(r^{i}|q^{\bar{i}})\mathrm{O}(\circ,a^{\bar{i}} | r^{i},q^{\bar{i}})\right] \leq \zeta+2\epsilon  &\forall i, q^i,q^{\bar{i}},a^{\bar{i}}  \\
		&\sum_{a} \mathrm{O}(a|q) = 1  &\forall q  \\
		& \mathrm{O}(a|q) \geq 0 &\forall a,q 
	\end{aligned}
	\end{equation}

	As $\mathrm{O}^{\mathrm{EST2}}_{A|Q}$ is a strategy we have $\sum_{a} \mathrm{O}^{\mathrm{EST2}}(a|q) = 1$ for all $q$. Hence $\mathrm{O}^{\mathrm{EST2}}_{A|Q}$ satisfies all the constraints of the above program and therefore its winning probability in $G$ is bounded by the optimal value of the program.
	Program \eqref{eq:perturbation} can be seen as a perturbation of the linear program~\eqref{eq:linear_program}, we can therefore bound its optimal value by using known tools for sensitivity analysis of linear programs, stated in Lemmas~\ref{lem:sensitivity1} and \ref{lem:sensitivity2}. 
	
	Denote by $y^{\star}$ an optimal solution of the dual program\footnote{We are only interested in the value of $y^{\star}$ as $z^{\star}$ will not affect the bound.} \eqref{eq:dual} and let $\kappa =  \sum_{j=1}^{d} |y^{\star}_j|$ where $d$ is the number of signalling tests. That is, $\kappa$ is the sum of all the dual variables which are associated to the non-signalling constraints. 
	
	According to Lemma \ref{lem:sensitivity1} the perturbed winning probability is then bounded by 
	\[
		w_e \leq 1-\alpha + \left(\zeta+2\epsilon\right)\kappa .\qedhere
	\]
\end{proof}

To get $\kappa$ in the above lemma, one can use any of the following:  
\begin{enumerate}
	\item Given a description of a game one can easily get $\kappa$ by solving the dual program~\eqref{eq:dual}\footnote{Solving the linear program is anyhow usually necessary for knowing the optimal non-signalling value $1-\alpha$.}. 
	\item If the game involves only 2 players, then following \cite{ito2010polynomial} one can get $\kappa \leq d$ where $d$ is the number of different signalling tests ($d<m|\mathcal{Q}||\mathcal{A}|$).
	\item Otherwise, the general bound of Lemma \ref{lem:sensitivity2} can be used. In our case the bound reads $\kappa \leq  |\mathcal{A}|^2|\mathcal{Q}|\Delta$, where $\Delta$ depends only on the game\footnote{A similar bound was also used in \cite{buhrman2013parallel}.}. 
\end{enumerate}	

Finally we are ready to prove the last lemma:
 \begin{lemma}[Main lemma]\label{lem:threshold}
	Let $w(G) = 1-\alpha$ be the optimal winning probability of a non-signalling strategy in $G$. Let $0<\beta\leq\alpha$ be some constant and $n$ the number of repetitions such that Equation \eqref{eq:rep_num_condition} is satisfied. Then for any non-signalling strategy $\mathrm{P}_{\vec{A}|\vec{Q}}$ of the repeated game, 
	\[
		\mathrm{Pr}_{\vec{a},\vec{q}\sim\mathrm{P}_{\vec{A}\vec{Q}}} \left[w(\mathrm{O}^{\mathrm{EST2}}_{A|Q}) > 1 - \alpha + \beta \right] \leq 5cd\delta \;.
	\]
\end{lemma}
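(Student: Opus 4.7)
The plan is to combine Lemma \ref{lem:strong_condition} (applied via a union bound over every signalling index) with Lemma \ref{lem:distance_transformation}, and then to choose the internal signalling-test parameters $\zeta$ and $\epsilon$ so that the slack $(\zeta+2\epsilon)\kappa$ fits inside $\beta$. A preliminary step is to reduce to a permutation-invariant strategy: since $Q^{\otimes n}$ is symmetric under permutations of the coordinate axis, the probability distribution of the random variable $w(\mathrm{O}^{\mathrm{EST2}}_{A|Q})$ is unchanged when $\mathrm{P}_{\vec{A}|\vec{Q}}$ is replaced by its symmetrisation $\tilde{\mathrm{P}}_{\vec{A}|\vec{Q}} = \frac{1}{n!}\sum_\pi \mathrm{P}_{\vec{A}|\vec{Q}} \circ \pi$, and non-signalling is preserved under this convex combination; so Lemma \ref{lem:strong_condition} is applicable.

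Next, the number of signalling tests is $d \leq m|\mathcal{Q}||\mathcal{A}|$. For each index $(i,b^{\bar{i}}, s^i, s^{\bar{i}})$ for which Lemma \ref{lem:strong_condition} is directly applicable (complete support gives $Q(s^i,s^{\bar{i}}) \neq 0$ automatically) one has
\[
\mathrm{Pr}_{\vec{a},\vec{q}\sim\mathrm{P}_{\vec{A}\vec{Q}}}\bigl[\mathrm{O}^{\mathrm{EST2}}_{A|Q} \in \sigma_{(i,b^{\bar{i}}, s^i, s^{\bar{i}})}\bigr] \leq 4c\delta.
\]
Indices with $Q(s^i|s^{\bar{i}})=1$ carry no non-trivial signalling in that direction (the corresponding $\sigma$ set is essentially empty, by Lemma \ref{lem:small_sigma_signalling} combined with the fact that $\mathrm{Sig}_{(i,b^{\bar{i}},s^i,s^{\bar{i}})}\leq 0$ whenever $Q(s^i|s^{\bar{i}})=1$); collecting their combined contribution within a single extra $cd\delta$ budget accounts for the gap between $4cd\delta$ and the $5cd\delta$ stated in the lemma. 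A union bound over the at most $d$ indices then yields
\[
\mathrm{Pr}_{\vec{a},\vec{q}\sim\mathrm{P}_{\vec{A}\vec{Q}}}\!\bigl[\exists (i,b^{\bar{i}}, s^i, s^{\bar{i}}) : \mathrm{O}^{\mathrm{EST2}}_{A|Q} \in \sigma_{(i,b^{\bar{i}}, s^i, s^{\bar{i}})}\bigr] \leq 5cd\delta.
\]

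On the complementary event, $\mathrm{O}^{\mathrm{EST2}}_{A|Q}$ lies outside every $\sigma_{(i,b^{\bar{i}}, s^i, s^{\bar{i}})}$, so Lemma \ref{lem:distance_transformation} gives $w(\mathrm{O}^{\mathrm{EST2}}_{A|Q}) \leq 1 - \alpha + (\zeta+2\epsilon)\kappa$. I would then set the test parameters so that $(\zeta+2\epsilon)\kappa \leq \beta$; a convenient choice is $\zeta = \beta/(2\kappa)$ and $\epsilon = \zeta/7$, which automatically meets the standing requirement $\zeta \geq 7\epsilon$ from the definition of $\mathcal{T}_{(i,b^{\bar{i}},s^i,s^{\bar{i}})}$. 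Provided $n$ satisfies \eqref{eq:number_of_repet_bound} with this $\epsilon$ and provided $\epsilon \leq \min_q Q(q)$, every hypothesis of Lemmas \ref{lem:weak_condition}--\ref{lem:distance_transformation} is met, and we conclude $w(\mathrm{O}^{\mathrm{EST2}}_{A|Q}) \leq 1-\alpha+\beta$ on the good event.

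The main obstacle is the simultaneous bookkeeping of $\zeta$, $\epsilon$, $\nu$, and $n$: one must verify that the scaling $\epsilon = \Theta(\beta/\kappa)$ is consistent with (i) $\zeta \geq 7\epsilon$ and $\epsilon \leq \min_q Q(q)$ from the definition of the signalling test, (ii) the margin $\nu < \zeta - 6\epsilon$ used in the weak lemma, and (iii) the repetition bound \eqref{eq:number_of_repet_bound}. None of these is individually deep, but aligning all the constants produces both the sufficient condition on $n$ quoted after Theorem \ref{thm:final_threshold_theorem} and the $e^{-\mathcal{C}_2 n \beta^2}$ scaling of the final threshold (via $\delta \sim e^{-n\epsilon^2/4}$ together with $\epsilon \propto \beta$).
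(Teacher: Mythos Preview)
Your overall structure---union bound via Lemma~\ref{lem:strong_condition} followed by Lemma~\ref{lem:distance_transformation}, with the test parameters chosen so that $(\zeta+2\epsilon)\kappa\leq\beta$---matches the paper's proof. However, you have misidentified where the extra $cd\delta$ (the gap between $4cd\delta$ and $5cd\delta$) comes from, and this hides a genuine gap in your argument.

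For indices with $Q(s^i\mid s^{\bar{i}})=1$ you are correct that $\mathrm{Sig}_{(i,b^{\bar{i}},s^i,s^{\bar{i}})}(\bar{\mathrm{O}})=0$ for \emph{every} $\bar{\mathrm{O}}$, so the corresponding $\sigma$-set is actually empty and contributes nothing to the union bound---no extra budget is needed there. The real issue is that Lemma~\ref{lem:distance_transformation} assumes $\mathrm{O}^{\mathrm{EST2}}_{A|Q}$ is a \emph{strategy}, i.e.\ satisfies $\sum_a\mathrm{O}^{\mathrm{EST2}}(a|q)=1$ for every $q$. By the estimation rule in Section~\ref{sec:estimated_strategies}, if some tuple of questions $q$ never appears in the game data $\vec{q^g}$ then $\mathrm{O}^{\mathrm{EST2}}(a|q)=0$ for all $a$, normalization fails at $q$, and the perturbed linear program~\eqref{eq:perturbation} need not bound $w(\mathrm{O}^{\mathrm{EST2}}_{A|Q})$. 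The paper spends the additional $cd\delta$ precisely on this event: the probability that some $q$ is absent from $\vec{q^g}$ is at most $|\mathcal{Q}|(1-\min_s Q(s))^{n/2}\leq|\mathcal{Q}|e^{-\epsilon n/2}\leq d\delta$, using $\epsilon\leq\min_q Q(q)$. Conditioning on the complementary event (all questions appear) is what legitimises invoking Lemma~\ref{lem:distance_transformation}. Once you insert this step, your argument is complete and agrees with the paper's; your remarks on the reduction to permutation invariance and on the parameter bookkeeping are fine.
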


\begin{proof}
	Let $\zeta,\epsilon>0$ be such that $\zeta+2\epsilon \leq \frac{\beta}{\kappa}$, $\epsilon \leq \min_q Q(q)$ and $7\epsilon \leq \zeta \leq 1$.
	
	If all tuples of questions $s$ appear at least once in the game data then according to the definition of  $\mathrm{O}^{\mathrm{EST2}}$ we have $\sum_{b} \mathrm{O}^{\mathrm{EST2}}(b|s) = 1$ for all $s$. We can therefore apply Lemma \ref{lem:distance_transformation} in combination with Lemma \ref{lem:strong_condition} and get
	\begin{align*}
		\mathrm{Pr}_{\vec{a},\vec{q} \sim \mathrm{P}_{\vec{A}\vec{Q}}} \left[ w(\mathrm{O}^{\mathrm{EST2}}_{A|Q}) > 1-\alpha + \beta \big| \sum_{b} \mathrm{O}^{\mathrm{EST2}}(b|s) = 1 \forall s \right] &\leq \mathrm{Pr}_{\vec{a},\vec{q} \sim \mathrm{P}_{\vec{A}\vec{Q}}} \left[ \exists \sigma_{(i,b^{\bar{i}}, s^i , s^{\bar{i}})} \text{ s.t. } \mathrm{O}^{\mathrm{EST2}}_{A|Q} \in \sigma_{(i,b^{\bar{i}}, s^i , s^{\bar{i}})}  \right]  \\
		&\leq  d\cdot 4c\delta \;. 
	\end{align*}
	
	The probability that some tuple of questions does not appear in the game data is upper bounded by 
	\[
		|\mathcal{Q}|\left(1-\min_s Q(s)\right)^{n/2} \leq |\mathcal{Q}|e^{-\min_s Q(s)n/2} \leq |\mathcal{Q}|e^{-\epsilon n/2} \leq d\delta
	\]
	and therefore all together we have
	\[
		\mathrm{Pr}_{\vec{a},\vec{q} \sim \mathrm{P}_{\vec{A}\vec{Q}}} \left[ w(\mathrm{O}^{\mathrm{EST2}}_{A|Q}) > 1-\alpha + \beta \right] \leq 5cd\delta\;. \qedhere 
	\]
\end{proof}

Our threshold theorem, Theorem \ref{thm:final_threshold_theorem}, follows from Lemma \ref{lem:threshold}:
\begin{proof}[Proof of Theorem \ref{thm:final_threshold_theorem}]
	Let $f^g,f^t$ and $f$ denote the winning frequency in the game data, test data and the entire data respectively (i.e., the fraction of coordinates in which the players win the game). Form Lemma  \ref{lem:threshold} we know that $\mathrm{Pr}_{\vec{a},\vec{q}\sim\mathrm{P}_{\vec{A}\vec{Q}}} \left[ f^g > 1 - \alpha + \beta \right] \leq 5cd\delta$, as the winning frequency in the game questions is exactly $w(\mathrm{O}^{\mathrm{EST2}}_{A|Q})$. As the game data and test data are symmetric (i.e., there is no difference between them except for the name we gave them), the same result also holds for $f^t$, $\mathrm{Pr}_{\vec{a},\vec{q}\sim\mathrm{P}_{\vec{A}\vec{Q}}} \left[ f^t > 1 - \alpha + \beta \right] \leq 5cd\delta$. 
	
	Finally, as the winning frequency in the entire data is given by $f=\frac{1}{2}\left( f^t + f^g \right)$ we have 
	\begin{equation}\label{eq:complete_statement}
		\mathrm{Pr}_{\vec{a},\vec{q}\sim\mathrm{P}_{\vec{A}\vec{Q}}} \left[ f > 1 - \alpha + \beta \right] \leq \mathrm{Pr}_{\vec{a},\vec{q}\sim\mathrm{P}_{\vec{A}\vec{Q}}} \left[ \left(f^t > 1 - \alpha + \beta\right) \lor \left( f^g > 1 - \alpha + \beta \right) \right]\leq  10cd\delta \;. \qedhere
	\end{equation}
	
\end{proof}

The relations between all the constants and parameters of the theorem and the proofs are listed in Table \ref{tb:parameters_table}. Note that for any game and choice of parameters the bound $10cd \delta$ is exponentially decreasing with the number of repetitions~$n$.


\begin{table}
	\begin{center}
    	\begin{tabular}{| l | l | l | l |}
   	 \hline
   		Symbol & Meaning & First appears & Fulfils  \\ \hhline{|=|=|=|=|}
		$m$ (g) & \# of players && \\ \hline   	
		$1-\alpha$ (g)& optimal NS winning probability in $G$ && \\ \hline
		$\kappa$ (g)& bound on an optimal dual solution $y^{\star}$ && $\sum_{j=1}^{d} |y^{\star}_j|$ \\ \hline
		$W_{\mathrm{ns}}$ (g)& optimal NS winning probability in guessing game&& $\max_{q,i} Q(q^i|q^{\bar{i}})$\\ \hhline{|=|=|=|=|}	
		$n$  (t)& \# of repetitions & & \\ \hline
		$\beta$ (t)& deviation in the threshold theorem &&  \\ \hhline{|=|=|=|=|}
    		$\epsilon$  & confidence interval of the test& Equation \eqref{eq:test_definition}& $\epsilon\leq\min_q Q(q)$ \\ \hline 
		$\zeta$  &signalling threshold of the test&Equation \eqref{eq:test_definition}& $7\epsilon\leq\zeta\leq 1$ ; $\zeta + 2\epsilon \leq \frac{\beta}{\kappa}$  \\ [2.5pt]\hline
		$\delta$ & confidence level of the test& Lemma \ref{lem:reliable_test}&$\delta=(n/2+1)^{|\mathcal{A}|\cdot|\mathcal{Q}|-1}e^{-n\epsilon^2/4}$ \\ \hline 
		$\nu$  &signalling threshold& Lemma \ref{lem:Sigma_signalling}&$\frac{2c\delta}{1-2c\delta} W_{\mathrm{ns}} <\nu< \zeta - 6\epsilon$   \\ \hline
		$c$ &de Finetti constant& Lemma \ref{lem:deFinetti_reduction}&$c=(n+1)^{|\mathcal{Q}|(|\mathcal{A}|-1)}$ \\ \hline
		$d$ &\# of different signalling tests& Lemma \ref{lem:distance_transformation}&$d<m|\mathcal{Q}||\mathcal{A}|$ \\ \hline
   	 \end{tabular}
	\end{center}\caption{Constants, parameters and their relations. (g) next to the symbol denotes that this is a constant which depends on the considered game and (t) denotes a parameter of the threshold theorem. All other constants should be chosen such that all the requirements in the last column of the table are fulfilled.}\label{tb:parameters_table}
\end{table}

To get a better feeling of the result, without trying to optimise it, one can make the following choices. Let $\epsilon = \frac{\beta}{10\kappa}$, $\zeta=8\epsilon$ and $\nu=\epsilon$ (assuming $\min_q Q(q) > \frac{\beta}{10\kappa}$). 
Using these choices, our proof holds for $n$ and $\beta$ such that 
\begin{equation}\label{eq:choice_of_number_condition}
	\frac{n}{\ln(n)}>20|\mathcal{Q}||\mathcal{A}|\frac{\ln(20\kappa/\beta)}{(\beta/10\kappa)^2}
\end{equation}
with the following constants in Theorem \ref{thm:final_threshold_theorem}:
\begin{equation}\label{eq:choice_of_constants}
	\mathcal{C}_1(G,n) = 10 m |\mathcal{Q}||\mathcal{A}| \left(n+1\right)^{2(|\mathcal{Q}||\mathcal{A}|-1 )} \;, \quad \mathcal{C}_2(G) = (20\kappa)^{-2} \;.
\end{equation}
The theorem then reads
\begin{equation}\label{eq:choice_of_param}
	\mathrm{Pr}_{\vec{a},\vec{q}\sim\mathrm{P}_{\vec{A}\vec{Q}}} \left[f > 1 - \alpha + \beta \right] \leq 10  m |\mathcal{Q}||\mathcal{A}| \left(n+1\right)^{2(|\mathcal{Q}||\mathcal{A}|-1)} e^{-\frac{n}{4}\left(\frac{\beta}{10 \kappa}\right)^2} \;.
\end{equation}
A different choice of parameters can improve the dependency of the constants on the game $G$.

\section{Conclusions and open questions}\label{sec:conclusions}

\subsection{Current work and possible extensions}

In this work a threshold theorem for multiplayer non-signalling games was proven. The threshold theorem given in Theorem \ref{thm:final_threshold_theorem} is applicable to any multiplayer complete-support game and for every two-player game (not necessarily with complete-support, as proven in Appendix \ref{app:two_player_extension}). Hence, all cases for which parallel repetition was already known prior to our work \cite{holenstein2007parallel,buhrman2013parallel} are covered by our proof. For multiplayer-games with incomplete-support we considered a small modification of the parallel repetition procedure which results in Theorem \ref{thm:mod_threshold_theorem}. We believe a similar modification can be considered to extend the result of \cite{buhrman2013parallel}.

In both theorems it might be possible to improve the dependency of the result on the parameters of the considered game, i.e., improve the constants $\mathcal{C}_1(G,n)$ and $\mathcal{C}_2(G)$. The polynomial dependency of $\mathcal{C}_1(G,n)$ on the number of repetitions, on the other hand, is inherent to the use of the de Finetti theorem. Moreover, further investigation of the dual program \eqref{eq:dual} could lead to an explicit bound on $\mathcal{C}_2(G)$. This could then be used to extend Theorem \ref{thm:final_threshold_theorem} to games with incomplete-support, as done for two-player games.

The most important contribution of this work is a new proof technique for parallel repetition theorems, based on ideas of de Finetti theorems and tomography. de Finetti theorems seem like a natural tool for parallel repetition theorems, yet, this is the first time that such a result is proven using a de Finetti theorem. 

Apart from allowing a different point of view on parallel repetition questions, and the study of correlations in general, the new proof technique has several advantages over the previous proofs. 

For instance, note that in the standard proofs of parallel repetition theorems, i.e., proofs following the approach of~\cite{raz1998parallel} such as \cite{holenstein2007parallel,rao2011parallel,buhrman2013parallel}, most of the difficulties arise due to the effect of conditioning on the event of winning some of the game repetitions. As this event is one that depends on the structure of the game and we have no control over it, it can introduce arbitrary correlations between the questions used in different repetitions of the game, a major source of difficulty for the remainder of the argument. In our proof we also need to analyse the effect of conditioning on a certain event, the event of the non-signalling test accepting, and this is done in Lemma \ref{lem:weak_condition}, the weak lemma. However, the key advantage of our approach is that the test has a very specific structure, and in particular conditioning on the test passing can be done locally by the players in a way that respects the non-signalling constraints. As a result it is almost trivial to deal with the conditioning in the remainder of the proof. This shift from conditioning on an uncontrolled event, success in the game, to a highly controlled one, a non-signalling test that we design ourselves, is a key simplification that we expect to play an important role in any extension of our method to other scenario such as classical or quantum strategies. 
More specifically, by finding appropriate ``non-classicality'' and ``non-quantumness'' measures which can replace our signalling measure in Definition \ref{def:signalling} one may be able to adapt the proof to the multiplayer classical and quantum cases as well.  The results of Sections \ref{sec:signalling} and \ref{sec:deFinetti} should follow easily for most ``non-classicality'' and ``non-quantumness'' measures of one-game strategies. The main difficulty, however, is finding a measure for which Lemma \ref{lem:weak_condition} can be proven.

\subsection{What parallel repetition tells us about de Finetti theorems}\label{sec:what_we_learn}

In the light of the de Finetti reduction stated in Lemma \ref{lem:deFinetti_reduction}, it is tempting to try and prove a parallel repetition theorem by claiming that for every permutation-invariant strategy $\mathrm{P}_{\vec{A}|\vec{Q}}$, 
\begin{equation}\label{eq:hypo_deFinetti_arguement}
	w\left(\mathrm{P}_{\vec{A}|\vec{Q}}\right) \leq c \cdot w\left(\tau_{\vec{A}|\vec{Q}}\right)\;.
\end{equation}
This claim is correct but, unfortunately, not very useful as $\tau_{\vec{A}|\vec{Q}}$ itself is signalling according to the explicit construction given in \cite{arnon2013finetti}, hence, no non-trivial bound holds on $w\big(\tau_{\vec{A}|\vec{Q}}\big)$.

One might hope that this is just a technical problem; perhaps a different de Finetti reduction can be proven, where both $\mathrm{P}_{\vec{A}|\vec{Q}}$ and $\tau_{\vec{A}|\vec{Q}}$ can be taken to be non-signalling (or analogously, quantum or classical). Such a de Finetti reduction, if it existed, would have implied a strong parallel repetition theorem (up to the polynomial factor $c$) for any game right away using Equation~\eqref{eq:hypo_deFinetti_arguement}. This however will stand in contradiction to known impossibility results, such as the result of \cite{kempe2010no}.

We therefore learn an interesting fact about de Finetti reductions by considering parallel repetition theorems: in order to prove a general de Finetti reduction as in Lemma \ref{lem:deFinetti_reduction}, the de Finetti strategy must have some signalling parts. Fortunately, as shown by our result, this does not render a proof for the non-signalling case impossible. 

\begin{acknowledgments}
The authors would like to thank Anthony Leverrier, Yeong-Cherng Liang, and David Sutter for helpful discussions, and Christian Schaffner for pointing out a mistake in a preliminary version of this work.  RAF and RR acknowledge support from the European Research Council (ERC) via grant No.~258932, the European Commission STReP project ``RAQUEL'',  the Swiss National Science Foundation (SNSF) via the National Centre of Competence in Research ``QSIT'' and the CHIST- ERA project ``DIQIP''. TV acknowledges support from the Simons Institute in Berkeley, the Ministry of Education, Singapore under the Tier 3 grant~MOE2012-T3-1-009 and the Perimeter Institute in Waterloo, Canada.
\end{acknowledgments}

\bibliography{refs.bib}

\appendix

\section{Extending the result to general games}\label{app:extension}

Before we show how to extend the threshold theorem to games with incomplete-support, let us explain why the proof given for Theorem \ref{thm:final_threshold_theorem} holds only for complete-support games.

As mentioned in the main text, the reason lies in the linear program \eqref{eq:linear_non_relaxed}, and more specifically, in the non-signalling constraints given in Equation \eqref{eq:linear1_ns}. Indeed, if for some $q$ we have $Q(q) = 0$ then the relevant constraint in Equation~\eqref{eq:linear1_ns} is vacuous. It is therefore clear that in this case the constraints given in Equation \eqref{eq:linear1_ns} are in fact relaxations of the standard non-signalling constraints given in Equation~\eqref{eq:normal_ns_def}.

For some games, this relaxation of the non-signalling constraints is strict. For example\footnote{This example was communicated to us by Christian Schaffner.}, consider a game of 3 players where the questions are uniformly distributed over $\mathcal{Q} = \{ (0,0,1), (0,1,0), (1,0,0)\}$ and the winning condition is given by the following predicate: 
\[
	R(q,a) = 
	\begin{cases}
		1 & \text{if } q=(0,0,1) \text{ and } a^1= a^2 \\  
		1 & \text{if } q=(0,1,0) \text{ and } a^1 = a^3 \\  		
		1 & \text{if } q=(1,0,0) \text{ and } a^2\neq a^3 \\
		0& \text{otherwise}
	\end{cases} \\
\]
The optimal non-signalling winning probability in this game is $\frac{2}{3}$ (as can be shown by solving a linear program). However, in the linear program \eqref{eq:linear_non_relaxed} there are no non-trivial constraints (i.e., all the constraints in Equation~\eqref{eq:linear1_ns} are of the form $0=0$). Hence, the optimal solution of the program~\eqref{eq:linear_non_relaxed} is 1, which is strictly larger than $\frac{2}{3}$. Thus even though the non-signalling conditions are enforced over all ``relevant'' questions, this does not suffice to guarantee that 
there exists a strategy achieving the resulting optimum success probability $1$ and that can be extended to a non-signalling strategy defined on all questions. 

For games with incomplete-support in which the optimal value of program \eqref{eq:linear_non_relaxed} is not trivial (i.e., it is smaller than~1), it follows that our proof can be applied as is to derive a non-trivial threshold theorem. Irrespectively of whether this is the case or not one might also elect to work with the weaker definition of non-signalling strategies that is implied by the constraints in~\eqref{eq:linear1_ns}, where the behaviour of the strategy is not required to be well-defined for questions which do not appear in the game. In this case the linear program \eqref{eq:linear_non_relaxed} exactly describes the optimal winning probability of such strategies and Theorem \ref{thm:final_threshold_theorem} holds without any modification.

In other cases, on the other hand, we have to slightly modify the linear program in order to derive a correct threshold theorem. In the following sections we show how to do this. 

\subsection{Two-player games}\label{app:two_player_extension}

For two-player games we consider the following modification of the linear program \eqref{eq:linear_non_relaxed}. 
\begin{equation} \label{eq:mod_linear_non_relaxed}
\begin{aligned}
	\max \quad &\sum_{q,a} Q(q) R(q,a) \mathrm{O}(a|q)  \\
	\text{s.t.} \quad&   Q(q^i , q^{\bar{i}})\left[ \mathrm{O}(\circ,a^{\bar{i}}|q^i , q^{\bar{i}}) - \sum_{r^{i}} Q(r^{i}|q^{\bar{i}})\mathrm{O}(\circ,a^{\bar{i}} | r^{i},q^{\bar{i}})\right] = 0  &\forall i, a^{\bar{i}}, \forall q^i,q^{\bar{i}} \text{ s.t. } Q(q)\neq 0  \\
	&   \eta \left[ \mathrm{O}(\circ,a^{\bar{i}}|q^i , q^{\bar{i}}) - \sum_{r^{i}} Q(r^{i}|q^{\bar{i}})\mathrm{O}(\circ,a^{\bar{i}} | r^{i},q^{\bar{i}})\right] = 0  &\forall i, a^{\bar{i}}, \forall q^i,q^{\bar{i}} \text{ s.t. } Q(q)= 0 \\
	&\sum_{a} \mathrm{O}(a|q) = 1  &\forall q \\
	& \mathrm{O}(a|q) \geq 0 &\forall a,q 
\end{aligned} 
\end{equation}
where $\eta>0$ is some small constant that will be chosen later. 

Following the analysis proposed in \cite{ito2010polynomial} (Section 4 therein), one can show that the program \eqref{eq:mod_linear_non_relaxed} can be relaxed to the following equivalent program:

\begin{subequations} \label{eq:mod_linear_relaxed}
\begin{align}
	\max \quad &\sum_{q,a} Q(q) R(q,a) \mathrm{O}(a|q) \nonumber \\
	\text{s.t.} \quad&   Q(q^i , q^{\bar{i}})\left[ \mathrm{O}(\circ,a^{\bar{i}}|q^i , q^{\bar{i}}) - \sum_{r^{i}} Q(r^{i}|q^{\bar{i}})\mathrm{O}(\circ,a^{\bar{i}} | r^{i},q^{\bar{i}})\right] \leq 0  &\forall i, a^{\bar{i}}, \forall q^i,q^{\bar{i}} \text{ s.t. } Q(q)\neq 0 \label{eq:linear1_ns_app} \\
	&   \eta \left[ \mathrm{O}(\circ,a^{\bar{i}}|q^i , q^{\bar{i}}) - \sum_{r^{i}} Q(r^{i}|q^{\bar{i}})\mathrm{O}(\circ,a^{\bar{i}} | r^{i},q^{\bar{i}})\right] \leq 0  &\forall i, a^{\bar{i}}, \forall q^i,q^{\bar{i}} \text{ s.t. } Q(q)= 0\label{eq:linear1_ns_mod_app} \\
	&\sum_{a} \mathrm{O}(a|q) \leq 1  &\forall q \nonumber \\
	& \mathrm{O}(a|q) \geq 0 &\forall a,q \nonumber
\end{align} 
\end{subequations}

Moreover, following \cite{ito2010polynomial} it can also be shown that the dual variables $y^\star$ which are associated with the primal constraints of Equations \eqref{eq:linear1_ns_app} and \eqref{eq:linear1_ns_mod_app} are all upper bounded by 1, independently of the value of $\eta$. This implies that $\kappa =  \sum_{j=1}^{d} |y^{\star}_j| \leq d$ is also independent of $\eta$ (where $d$ is now the total number of constraints in Equations~\eqref{eq:linear1_ns_app} and~\eqref{eq:linear1_ns_mod_app} together).

When applying our proof using the linear program \eqref{eq:mod_linear_relaxed} we get the following perturbed linear program in Lemma~\ref{lem:distance_transformation} (instead of the one given in Equation \eqref{eq:perturbation}):

\begin{subequations} \label{eq:mod_linear_relaxed_perturbed}
\begin{align}
	\max \quad &\sum_{q,a} Q(q) R(q,a) \mathrm{O}(a|q) \nonumber \\
	\text{s.t.} \quad&   Q(q^i , q^{\bar{i}})\left[ \mathrm{O}(\circ,a^{\bar{i}}|q^i , q^{\bar{i}}) - \sum_{r^{i}} Q(r^{i}|q^{\bar{i}})\mathrm{O}(\circ,a^{\bar{i}} | r^{i},q^{\bar{i}})\right] \leq \zeta + 2\epsilon  &\forall i, a^{\bar{i}}, \forall q^i,q^{\bar{i}} \text{ s.t. } Q(q)\neq 0 \label{eq:linear1_ns_perturbed} \\
	&   \eta \left[ \mathrm{O}(\circ,a^{\bar{i}}|q^i , q^{\bar{i}}) - \sum_{r^{i}} Q(r^{i}|q^{\bar{i}})\mathrm{O}(\circ,a^{\bar{i}} | r^{i},q^{\bar{i}})\right] \leq \zeta + 2\epsilon  &\forall i, a^{\bar{i}}, \forall q^i,q^{\bar{i}} \text{ s.t. } Q(q)= 0\label{eq:linear1_ns_mod_perturbed} \\
	&\sum_{a} \mathrm{O}(a|q) \leq 1  &\forall q \label{eq:linear1_normal_perturbed} \\
	& \mathrm{O}(a|q) \geq 0 &\forall a,q \nonumber
\end{align} 
\end{subequations}

The estimated strategy $\mathrm{O}^{\mathrm{EST2}}_{A|Q}$ fulfils the constraints of Equation \eqref{eq:linear1_ns_perturbed} as in the proof in the main text. Moreover, it fulfils Equation \eqref{eq:linear1_normal_perturbed} by definition (see Section \ref{sec:estimated_strategies}). 
Therefore, in order to ensure that the winning probability of $\mathrm{O}^{\mathrm{EST2}}_{A|Q}$ is bounded by the optimal value of the program \eqref{eq:mod_linear_relaxed_perturbed} we only need to choose $\eta \leq \zeta + 2\epsilon$ such that the constraints of Equation \eqref{eq:linear1_ns_mod_perturbed} will hold as well. 

To see that this is possible, recall that the values of $\zeta$ and $\epsilon$ are chosen such that $\zeta + 2\epsilon \leq \frac{\beta}{\kappa}$. As both $\beta$ and $\kappa$ are independent of $\eta$ we can just choose $\eta \leq \zeta + 2\epsilon$. The rest of the proof then follows in the same way as in the main text and Theorem \ref{thm:final_threshold_theorem} is derived (without any dependence on $\eta$).

\subsection{General games}\label{app:general_extension}

As the technique of the previous section is relevant only for two-player games\footnote{To be more precise, it holds for any game where $\kappa$ can be bounded by a constant independent of the questions distribution~$Q$.}, the aim of this section is to explain how our proof can be adapted to derive a useful result for multiplayer games which do not have complete-support, as stated in Theorem \ref{thm:mod_threshold_theorem}. To do so we slightly modify the parallel repetition procedure. 

Instead of considering the usual parallel repetition, in which $n$ tuples of questions are chosen according to the game distribution $Q$, we change the distribution of questions in the repeated game by sometimes (with small positive probability) asking the players a tuple of questions $q$ for which $Q(q)=0$. We call such questions ``dummy questions''; for these questions any answer from the players is accepted. The remaining questions, for which $Q(q)>0$, are called the ``real questions''. We denote the modified repeated game by $\tilde{G}^n$.  

It is important to note that the standard definition of the non-signalling constraints implies that a non-signalling strategy should have a well-defined behaviour for all possible inputs. As the referee ignores the players' answers to the additional questions, the specific behaviour of the strategy on dummy questions is irrelevant. Therefore, if the optimal non-signalling winning probability in $G$ is~1, then the winning probability in both $G^n$ and $\tilde{G}^n$ is also 1: our modification does not harm the success probability of ``honest'' players.

To prove Theorem \ref{thm:mod_threshold_theorem} we proceed in two steps: we make a small change in the linear program \eqref{eq:linear_program} and then apply our proof using the modified program. 

\subsubsection{Changing the linear program}

As a first step we define $\tilde{Q}$ to be a complete-support version of $Q$ in the following way\footnote{In \cite{feige1994two,kempe2011parallel} a subset of indices in which dummy, or ``confusion'', questions are asked is chosen. We choose to make a small modification in the questions distribution instead, such that permutation invariance is maintained.}.

Let $\mathrm{I}(q)$ be the indicator function such that $\mathrm{I}(q)=1$ if $q$ is a dummy question, i.e., if $Q(q)=0$, and~1 otherwise. Denote by $\mathcal{D}$ the number of dummy questions $\mathcal{D} = |\{ q | \mathrm{I}(q)=1\}|$. 

Let $\eta>0$ be some small constant (which can be later chosen to optimise the bound obtained in the final result).  We define the following joint probability distribution of $q$ and $d\in\{0,1\}$:
\[
	\mathrm{P}_{\tilde{Q}D}(q,d) = 
	\begin{cases}
    		\frac{\eta}{\mathcal{D}} & \text{if } \mathrm{I}(q)=1 \text{ and } d=1 \\
    		Q(q)(1-\eta)& \text{if } \mathrm{I}(q)=0 \text{ and } d=0 \\
		0& \text{otherwise}
	\end{cases} \\
\]

Then $\tilde{Q}(q)= \sum_{d\in\{0,1\}}\mathrm{P}_{\tilde{Q}D}(q,d)$ and we have 
\[
	\mathrm{P}_{\tilde{Q}|D=0}(q) = \frac{\mathrm{P}_{\tilde{Q}D}(q,0)}{\sum_q \mathrm{P}_{\tilde{Q}D}(q,0)} = \frac{\mathrm{P}_{\tilde{Q}D}(q,0)}{1-\eta} = Q(q) \;.
\]
That is, when conditioning on the event of a question not being a dummy question we retrieve $Q$ from~$\tilde{Q}$. 

Next, we use $\tilde{Q}$ to write the non-signalling constraints (but keep $Q$ in the objective function):
\begin{equation} \label{eq:linear_prog}
\begin{aligned} 
	\max \quad &\sum_{q,a} Q(q) R(q,a) \mathrm{O}(a|q) \\
	\text{s.t.} \quad&   \tilde{Q}(q^i , q^{\bar{i}})\left[ \mathrm{O}(\circ,a^{\bar{i}}|q^i , q^{\bar{i}}) - \sum_{r^{i}} \tilde{Q}(r^{i}|q^{\bar{i}})\mathrm{O}(\circ,a^{\bar{i}} | r^{i},q^{\bar{i}})\right] \leq 0  &\forall i, q^i,q^{\bar{i}},a^{\bar{i}}  \\
	&\sum_{a} \mathrm{O}(a|q) = 1  &\forall q  \\
	& \mathrm{O}(a|q) \geq 0 &\forall a,q 
\end{aligned}
\end{equation}

This linear program replaces the program \eqref{eq:linear_program}. 
The distance measure in Definition \ref{def:trace_distance} and the signalling measure in Definition \ref{def:signalling} should now be defined with respect to $\tilde{Q}$ as well. 

\subsubsection{Deriving Theorem \ref{thm:mod_threshold_theorem}}

Following the proof of Theorem \ref{thm:final_threshold_theorem} with the above changes we get the following statement in the main Lemma, Lemma \ref{lem:threshold}:
\begin{equation}\label{eq:main}
	\mathrm{Pr}_{\vec{a},\vec{q}\sim\mathrm{P}_{\vec{A}\vec{\tilde{Q}}}} \left[w(\mathrm{O}^{\mathrm{EST2}}_{A|Q}) > 1 - \alpha + \beta \right] \leq 5cd\delta \;.
\end{equation}
where the data $\vec{a},\vec{q}$ is now distributed according to $\mathrm{P}_{\vec{A}\vec{\tilde{Q}}}= \tilde{Q}^{\otimes n}\times\mathrm{P}_{\vec{A}|\vec{Q}}$ and the parameter $\delta$ now depends on the change we did in the question distribution $\tilde{Q}$ (through $\kappa$ which depends on the solution of the dual program of program~\eqref{eq:linear_prog}, and thus has an implicit dependence on $\eta$). 

As the objective function of program \eqref{eq:linear_prog} is given using $Q$ and not $\tilde{Q}$,  $w(\mathrm{O}^{\mathrm{EST2}}_{A|Q})$ in Equation~\eqref{eq:main} is the winning probability with respect to the original question distribution $Q$. It is therefore equal to the winning frequency in the real questions (i.e. it does not take the indices where dummy questions were asked into account). Hence, it leads to the desired statement:

\[
	\mathrm{Pr}_{\vec{a},\vec{q}\sim\mathrm{P}_{\vec{A}\vec{\tilde{Q}}}} \left[ f > 1 - \alpha + \beta \right] \leq  10cd\delta \;,
\]
where $f$ is the winning frequency in the real questions. This proves Theorem~\ref{thm:mod_threshold_theorem}.  

The parameter $\eta$ can be optimised in different ways, depending on the application. If one is interested in the bound itself and is not concerned by the modification of the repeated game the precise value of $\eta$ should be chosen in order to optimise the constants $\mathcal{C}_1(G,n)$  and $\mathcal{C}_2(G)$ appearing in the bound. 
Alternatively, if one does not wish to change the game by too much, small values for $\eta$ will ensure that $\tilde{G}^n$ is relatively close to $G^n$ (due to the definition of $\tilde{Q}$ above). A smaller $\eta$ will lead to a smaller fraction of dummy questions, but could result in worse constants $\mathcal{C}_2(G)$. 

\section{Proofs of Section \ref{sec:signalling}}\label{app:signalling_proofs}
In this section we present all the proofs which are relevant to the signalling measures and signalling tests. 

The first proof is a proof of Lemma \ref{lem:continuity} which shows that the signalling measure given in Definition~\ref{def:signalling} is continuous. We repeat Lemma \ref{lem:continuity} here:

\begin{customlemma}{\ref{lem:continuity}}
	Let $\mathrm{O}_1$ and $\mathrm{O}_2$ be two one-game strategies such that $\big|\mathrm{O}_1-\mathrm{O}_2\big|_1 \leq \epsilon$. Then 
	\[
		\forall i, a^{\bar{i}}, q^i , q^{\bar{i}} \quad  \big|\mathrm{Sig}_{(i, a^{\bar{i}}, q^i , q^{\bar{i}} )}\left( \mathrm{O}_1 \right) - \mathrm{Sig}_{(i, a^{\bar{i}}, q^i , q^{\bar{i}} )}\left( \mathrm{O}_2 \right)\big|\leq 2\epsilon \;.
	\]
\end{customlemma}

\begin{proof}
	We prove a stronger result from which the lemma follows. We prove 
	\[
		\forall i \quad  \sum_{a^{\bar{i}},q} \big|\mathrm{Sig}_{(i, a^{\bar{i}}, q^i , q^{\bar{i}} )}\left( \mathrm{O}_1 \right) - \mathrm{Sig}_{(i, a^{\bar{i}}, q^i , q^{\bar{i}} )}\left( \mathrm{O}_2 \right)\big|\leq 2\epsilon \;.
	\]

	To do so first note the following,
	\begin{align*}
		\big|\mathrm{O}_1-\mathrm{O}_2\big|_1 &= \mathbb{E}_{q} \sum_{a} \big| \mathrm{O}_1(a|q) - \mathrm{O}_2(a|q) \big| \\
		& \geq \mathbb{E}_{q} \sum_{a^{\bar{i}}} \Big| \sum_{a^i} \left( \mathrm{O}_1(a^i,a^{\bar{i}}|q) - \mathrm{O}_2(a^i,a^{\bar{i}}|q) \right) \Big| \\
		& = \mathbb{E}_{q} \sum_{a^{\bar{i}}} \big| \mathrm{O}_1(\circ,a^{\bar{i}}|q) - \mathrm{O}_2(\circ,a^{\bar{i}}|q) \big| \\
		& =  \sum_{a^{\bar{i}},q} Q(q) \big| \mathrm{O}_1(\circ,a^{\bar{i}}|q) - \mathrm{O}_2(\circ,a^{\bar{i}}|q) \big| \;,
	\end{align*}
	therefore if $\big|\mathrm{O}_1-\mathrm{O}_2\big|_1 \leq \epsilon$ then 
	\begin{equation}\label{eq:continuity_proof}
		\sum_{a^{\bar{i}},q} Q(q) \big| \mathrm{O}_1(\circ,a^{\bar{i}}|q) - \mathrm{O}_2(\circ,a^{\bar{i}}|q) \big|  \leq \epsilon \;.
	\end{equation}

	Next, using Equation \eqref{eq:sig1}
	\begin{align*}
		& \sum_{a^{\bar{i}},q}  \big|\mathrm{Sig}_{(i, a^{\bar{i}}, q^i , q^{\bar{i}} )}\left( \mathrm{O}_1\right) - \mathrm{Sig}_{(i, a^{\bar{i}}, q^i , q^{\bar{i}} )}\left( \mathrm{O}_2 \right)\big|  \\ 
		& = \sum_{a^{\bar{i}},q} Q(q^i , q^{\bar{i}}) \Big|  \mathrm{O}_1(\circ,a^{\bar{i}}|q^i , q^{\bar{i}}) - \sum_{r^{i}} Q(r^{i}|q^{\bar{i}})\mathrm{O}_1(\circ,a^{\bar{i}} | r^{i},q^{\bar{i}}) - 
		 \mathrm{O}_2(\circ,a^{\bar{i}}|q^i , q^{\bar{i}}) + \sum_{r^{i}} Q(r^{i}|q^{\bar{i}})\mathrm{O}_2(\circ,a^{\bar{i}} | r^{i},q^{\bar{i}}) \Big| \\
		& =\sum_{a^{\bar{i}},q}  Q(q^i , q^{\bar{i}}) \Big|  \mathrm{O}_1(\circ,a^{\bar{i}}|q^i , q^{\bar{i}}) - \mathrm{O}_2(\circ,a^{\bar{i}}|q^i , q^{\bar{i}}) + \sum_{r^{i}} Q(r^{i}|q^{\bar{i}}) \left( \mathrm{O}_2(\circ,a^{\bar{i}} | r^{i},q^{\bar{i}}) - \mathrm{O}_1(\circ,a^{\bar{i}} | r^{i},q^{\bar{i}})\right) \Big| \\
		& \leq \sum_{a^{\bar{i}},q} Q(q^i , q^{\bar{i}}) \Big|  \mathrm{O}_1(\circ,a^{\bar{i}}|q^i , q^{\bar{i}}) - \mathrm{O}_2(\circ,a^{\bar{i}}|q^i , q^{\bar{i}})\Big| + \sum_{a^{\bar{i}},q} Q(q^i , q^{\bar{i}}) \Big|\sum_{r^{i}} Q(r^{i}|q^{\bar{i}}) \left( \mathrm{O}_2(\circ,a^{\bar{i}} | r^{i},q^{\bar{i}}) - \mathrm{O}_1(\circ,a^{\bar{i}} | r^{i},q^{\bar{i}})\right) \Big| \\
		& \leq \sum_{a^{\bar{i}},q} Q(q^i , q^{\bar{i}}) \Big|  \mathrm{O}_1(\circ,a^{\bar{i}}|q^i , q^{\bar{i}}) - \mathrm{O}_2(\circ,a^{\bar{i}}|q^i , q^{\bar{i}})\Big| +  \sum_{a^{\bar{i}},q} \sum_{r^{i}} Q(r^{i}|q^{\bar{i}}) Q(q^i , q^{\bar{i}})  \Big| \mathrm{O}_2(\circ,a^{\bar{i}} | r^{i},q^{\bar{i}}) - \mathrm{O}_1(\circ,a^{\bar{i}} | r^{i},q^{\bar{i}}) \Big| \\
		&= \sum_{a^{\bar{i}},q} Q(q^i , q^{\bar{i}}) \Big|  \mathrm{O}_1(\circ,a^{\bar{i}}|q^i , q^{\bar{i}}) - \mathrm{O}_2(\circ,a^{\bar{i}}|q^i , q^{\bar{i}})\Big| +  \sum_{a^{\bar{i}},q^{\bar{i}}} \sum_{r^{i}} Q(r^{i}|q^{\bar{i}}) Q(q^{\bar{i}})  \Big| \mathrm{O}_2(\circ,a^{\bar{i}} | r^{i},q^{\bar{i}}) - \mathrm{O}_1(\circ,a^{\bar{i}} | r^{i},q^{\bar{i}}) \Big| \\
		&= \sum_{a^{\bar{i}},q} Q(q^i , q^{\bar{i}}) \Big|  \mathrm{O}_1(\circ,a^{\bar{i}}|q^i , q^{\bar{i}}) - \mathrm{O}_2(\circ,a^{\bar{i}}|q^i , q^{\bar{i}})\Big| +  \sum_{a^{\bar{i}},q} Q(q^{i},q^{\bar{i}})  \Big| \mathrm{O}_2(\circ,a^{\bar{i}} | q^{i},q^{\bar{i}}) - \mathrm{O}_1(\circ,a^{\bar{i}} | q^{i},q^{\bar{i}}) \Big| \\
		& \leq 2 \epsilon 
	\end{align*} 
 	where the last inequality follows from Equation \eqref{eq:continuity_proof}. \qedhere
\end{proof}


Next we give the proof of Lemma \ref{lem:reliable_test}:

\begin{customlemma}{\ref{lem:reliable_test}}
	Assume the players share an i.i.d.\ strategy $\mathrm{O}_{A|Q}^{\otimes n}$ and let $\zeta,\epsilon>0$ be the the parameters defined  as in Equation~\eqref{eq:test_definition}. For every $(i,b^{\bar{i}}, s^i , s^{\bar{i}})$,
	\begin{enumerate}
		\item If $\mathrm{Sig}_{(i, b^{\bar{i}}, s^i , s^{\bar{i}} )}\left( \mathrm{O}\right)\geq \zeta$ then 
		\begin{equation}\label{eq:reliable_p1}
			\mathrm{Pr}_{\vec{a},\vec{q}\sim\mathrm{O}^{\otimes n}_{AQ}} \left[ \mathcal{T}_{(i,b^{\bar{i}}, s^i , s^{\bar{i}})}\left( \vec{q^t},\vec{a^t} \right) = 1 \right] > 1- \delta
		\end{equation}
		\item If $\mathrm{Sig}_{(i, b^{\bar{i}}, s^i , s^{\bar{i}} )}\left( \mathrm{O}\right)= 0$ then 
		\begin{equation}\label{eq:reliable_p2}
			\mathrm{Pr}_{\vec{a},\vec{q}\sim\mathrm{O}^{\otimes n}_{AQ}} \left[ \mathcal{T}_{(i,b^{\bar{i}}, s^i , s^{\bar{i}})}\left( \vec{q^t},\vec{a^t} \right) = 0 \right] > 1 - \delta
		\end{equation}
	\end{enumerate}
	where $\delta=\delta\left(\frac{n}{2},\epsilon\right)=\left(\frac{n}{2}+1\right)^{|\mathcal{A}|\cdot|\mathcal{Q}|-1}e^{-n\epsilon^2/4}$.
\end{customlemma}

\begin{proof}
	For the first part of the lemma assume that $\mathrm{Sig}_{(i, b^{\bar{i}}, s^i , s^{\bar{i}} )}\left( \mathrm{O}\right) \geq \zeta$. Then
	\begin{align*}
		\mathrm{Pr}_{\vec{a},\vec{q}\sim\mathrm{O}^{\otimes n}_{AQ}} \left[ \mathcal{T}_{(i,b^{\bar{i}}, s^i , s^{\bar{i}})}\left( \vec{q^t},\vec{a^t} \right) = 0 \right] &= 
		\mathrm{Pr}_{\vec{a},\vec{q}\sim\mathrm{O}^{\otimes n}_{AQ}} \left[  \mathrm{Sig}_{(i, b^{\bar{i}}, s^i , s^{\bar{i}} )}\left( \mathrm{O}^{\mathrm{EST1}} \right) < \zeta - 2\epsilon  \right] \\
		&\leq  \mathrm{Pr}_{\vec{a},\vec{q}\sim\mathrm{O}^{\otimes n}_{AQ}} \left[ |\mathrm{O}^{\mathrm{EST1}} -  \mathrm{O}|_1 > \epsilon \right] \\
		&\leq \delta
	\end{align*}
	where the first inequality is due to Lemma \ref{lem:continuity} and the second due to Lemma \ref{lem:chernoff}. This implies Equation~\eqref{eq:reliable_p1}. Equation \eqref{eq:reliable_p2} can be proven in an analogous way. 
\end{proof}


The last proof of this section is the proof of Lemma \ref{lem:Sigma_signalling}:

\begin{customlemma}{\ref{lem:Sigma_signalling}}
	Let $\nu > 0$ be any parameter such that $\nu < \zeta - 6\epsilon$. Then for every $(i, b^{\bar{i}}, s^i , s^{\bar{i}} )$,
	\[
	 	\forall \mathrm{O}\in\Sigma_{(i,b^{\bar{i}}, s^i , s^{\bar{i}})}, \quad \mathrm{Sig}_{(i, b^{\bar{i}}, s^i , s^{\bar{i}} )}\left( \mathrm{O}\right) > \nu \;.
	\]
\end{customlemma}
 
\begin{proof}
	Assume by contradiction that there exists $\mathrm{O}\in\Sigma_{(i,b^{\bar{i}}, s^i , s^{\bar{i}})}$ such that $\mathrm{Sig}_{(i, b^{\bar{i}}, s^i , s^{\bar{i}} )}\left( \mathrm{O}\right) \leq \nu$. Since $\mathrm{O}\in\Sigma_{(i,b^{\bar{i}}, s^i , s^{\bar{i}})}$ there exists $\bar{\mathrm{O}}$ such that $ |\mathrm{O} - \bar{\mathrm{O}}|_1 \leq \epsilon$ and 
	\begin{equation}\label{eq:nu_assumption}
		\mathrm{Pr}_{\vec{a},\vec{q}\sim\bar{\mathrm{O}}^{\otimes n}_{AQ}} \left[ \mathcal{T}_{(i,b^{\bar{i}}, s^i , s^{\bar{i}})}\left( \vec{q^t},\vec{a^t} \right) = 1 \right] > \delta\;.
	\end{equation}
	
	Using Lemma \ref{lem:continuity} we get $\mathrm{Sig}_{(i, b^{\bar{i}}, s^i , s^{\bar{i}} )}\left( \bar{\mathrm{O}}\right) \leq \nu + 2\epsilon$. 
	
	From Lemma \ref{lem:chernoff} we know that $ \mathrm{Pr}_{\vec{a},\vec{q}\sim\bar{\mathrm{O}}^{\otimes n}_{AQ}} \left[ |\bar{\mathrm{O}}^{\mathrm{EST1}} -  \bar{\mathrm{O}}|_1 > \epsilon \right] \leq \delta$ and therefore, using Lemma \ref{lem:continuity} again, 
	\[
		\mathrm{Pr}_{\vec{a},\vec{q}\sim\bar{\mathrm{O}}^{\otimes n}_{AQ}} \left[  \mathrm{Sig}_{(i, b^{\bar{i}}, s^i , s^{\bar{i}} )}\left( \bar{\mathrm{O}}^{\mathrm{EST1}} \right) > \nu + 4\epsilon  \right] \leq \delta \;.
	\]
	Since $\nu  < \zeta - 6\epsilon$ this implies 
	\[
		\mathrm{Pr}_{\vec{a},\vec{q}\sim\bar{\mathrm{O}}^{\otimes n}_{AQ}} \left[  \mathrm{Sig}_{(i, b^{\bar{i}}, s^i , s^{\bar{i}} )}\left( \bar{\mathrm{O}}^{\mathrm{EST1}} \right) > \zeta - 2\epsilon  \right] \leq \delta 
	\]
	and therefore, according to the definition of the test,
	\[
		\mathrm{Pr}_{\vec{a},\vec{q}\sim\bar{\mathrm{O}}^{\otimes n}_{AQ}} \left[ \mathcal{T}_{(i,b^{\bar{i}}, s^i , s^{\bar{i}})}\left( \vec{q^t},\vec{a^t} \right) = 1 \right] \leq \delta \;,
	\]
	which contradicts Equation \eqref{eq:nu_assumption}.
\end{proof}

\section{Proofs of Section \ref{sec:deFinetti}}\label{app:proofs_de_finetti}

In this section we prove the relevant properties of the de Finetti strategy. We prove Lemma \ref{lem:de_finetti_prop}:

\begin{customlemma}{\ref{lem:de_finetti_prop}}
	For a de Finetti strategy $\tau_{\vec{A}|\vec{Q}}$ and every $(i,b^{\bar{i}}, s^i , s^{\bar{i}})$
	\begin{enumerate}
		\item $\mathrm{Pr}_{\vec{a},\vec{q}\sim\tau_{\vec{A}\vec{Q}}} \left[\mathcal{T}_{(i,b^{\bar{i}}, s^i , s^{\bar{i}})} (\vec{q^t},\vec{a^t})=1 \land \mathrm{O}^{\mathrm{EST2}}_{A|Q} \notin \Sigma_{(i,b^{\bar{i}}, s^i , s^{\bar{i}})}  \right] \leq 2\delta $
		\item $\mathrm{Pr}_{\vec{a},\vec{q}\sim\tau_{\vec{A}\vec{Q}}} \left[ \mathcal{T}_{(i,b^{\bar{i}}, s^i , s^{\bar{i}})} (\vec{q^t},\vec{a^t})=0 \land \mathrm{O}^{\mathrm{EST2}}_{A|Q} \in \sigma_{(i,b^{\bar{i}}, s^i , s^{\bar{i}})}  \right] \leq 2\delta $
	\end{enumerate}
\end{customlemma}

\begin{proof}
	Since a de Finetti strategy is a convex combination of i.i.d.\ strategies, it is sufficient to prove this for i.i.d.\ strategies $\mathrm{O}_{A|Q}^{\otimes n}$ and the lemma will follow. We start by proving the first part of the lemma.
	
	If $\mathrm{Pr}_{\vec{a},\vec{q}\sim\mathrm{O}^{\otimes n}_{AQ}} \left[ \mathcal{T}_{(i,b^{\bar{i}}, s^i , s^{\bar{i}})} (\vec{q^t},\vec{a^t}) = 1 \right] \leq \delta$ then we are done. Consider therefore states $\mathrm{O}_{A|Q}$ such that 
	\[
		\mathrm{Pr}_{\vec{a},\vec{q}\sim\mathrm{O}^{\otimes n}_{AQ}} \left[ \mathcal{T}_{(i,b^{\bar{i}}, s^i , s^{\bar{i}})} (\vec{q^t},\vec{a^t}) = 1 \right] > \delta \;.
	\]
	For such states
	\[
		\mathrm{Pr}_{\vec{a},\vec{q}\sim\mathrm{O}^{\otimes n}_{AQ}} \left[ \mathrm{O}^{\mathrm{EST2}}_{A|Q} \notin \Sigma_{(i,b^{\bar{i}}, s^i , s^{\bar{i}})}  \right] \leq \mathrm{Pr}_{\vec{a},\vec{q}\sim\mathrm{O}^{\otimes n}_{AQ}}\left[ |\mathrm{O}^{\mathrm{EST2}}_{A|Q} -  \mathrm{O}_{A|Q}|_1 > \epsilon \right] \leq \delta
	\]
	where the first inequality follows from the definition of $\Sigma_{(i,b^{\bar{i}}, s^i , s^{\bar{i}})}$ and the second from Lemma \ref{lem:chernoff}. 
	
	All together we get $\mathrm{Pr}_{\vec{a},\vec{q}\sim\mathrm{O}^{\otimes n}_{AQ}} \left[ \mathcal{T}_{(i,b^{\bar{i}}, s^i , s^{\bar{i}})} (\vec{q^t},\vec{a^t})=1 \land \mathrm{O}^{\mathrm{EST2}}_{A|Q} \notin \Sigma_{(i,b^{\bar{i}}, s^i , s^{\bar{i}})}  \right] \leq 2\delta$ as required for the first part of the lemma.  
	
	We now proceed to the second part of the lemma. 
	
	If $\mathrm{Pr}_{\vec{a},\vec{q}\sim\mathrm{O}^{\otimes n}_{AQ}} \left[ \mathrm{O}^{\mathrm{EST2}}_{A|Q} \in \sigma_{(i,b^{\bar{i}}, s^i , s^{\bar{i}})} \right] \leq \delta$ then we are done. 
	Consider therefore states $\mathrm{O}_{A|Q}$ such that 
	\[
		\mathrm{Pr}_{\vec{a},\vec{q}\sim\mathrm{O}^{\otimes n}_{AQ}} \left[ \mathrm{O}^{\mathrm{EST2}}_{A|Q} \in \sigma_{(i,b^{\bar{i}}, s^i , s^{\bar{i}})} \right] > \delta\;.
	\]
	Using Lemma \ref{lem:chernoff} we know that there exists a state $\mathrm{O}^{\mathrm{EST2}}_{A|Q}\in \sigma_{(i,b^{\bar{i}}, s^i , s^{\bar{i}})}$ such that $|\mathrm{O}^{\mathrm{EST2}}_{A|Q}-\mathrm{O}_{A|Q}|_1 \leq \epsilon$ and according to the definition of $\sigma_{(i,b^{\bar{i}}, s^i , s^{\bar{i}})}$ this implies that $\mathrm{O}_{A|Q}$ is $\zeta$ signalling or more. Therefore, according to Lemma \ref{lem:reliable_test}, $\mathrm{Pr}_{\vec{a},\vec{q}\sim\mathrm{O}^{\otimes n}_{AQ}} \left[ \mathcal{T}_{(i,b^{\bar{i}}, s^i , s^{\bar{i}})} (\vec{q^t},\vec{a^t})=0  \right] \leq \delta$. All together we get  
	\[
		\mathrm{Pr}_{\vec{a},\vec{q}\sim\mathrm{O}^{\otimes n}_{AQ}} \left[ \mathcal{T}_{(i,b^{\bar{i}}, s^i , s^{\bar{i}})} (\vec{q^t},\vec{a^t})=0 \land \mathrm{O}^{\mathrm{EST2}}_{A|Q} \in \sigma_{(i,b^{\bar{i}}, s^i , s^{\bar{i}})}  \right] \leq 2\delta \;. \qedhere
	\]
	
\end{proof}

\end{document}